
\documentclass[letterpaper,11pt]{article}


\usepackage[utf8]{inputenc} 
\usepackage[T1]{fontenc}
\usepackage{url}              
\usepackage{ifthen}          
\usepackage{cite}             
\usepackage{graphicx}
\usepackage[cmex10]{amsmath}  
\usepackage{amssymb}
\usepackage{amsthm}
\usepackage{tikz, calc}
\usepackage{caption}
\usepackage{subcaption}

\newcommand{\X}{\mathbf{X}}
\newcommand{\Y}{\mathbf{Y}}
\newcommand{\G}{\mathbf{G}}
\newcommand{\B}{\mathbf{B}}



\usepackage{amsmath,amsfonts,amsthm,amssymb}
\usepackage{color}
\usepackage[colorlinks=true]{hyperref}
\usepackage[procnumbered,ruled,vlined,linesnumbered]{algorithm2e}
\usepackage{amsfonts}

\textheight 8.5in
\topmargin -0.2in
\oddsidemargin 0.20in
\textwidth 6.3in

\newtheorem{theorem}{Theorem}[section]
\newtheorem{corollary}[theorem]{Corollary}
\newtheorem{lemma}[theorem]{Lemma}

\newtheorem{definition}[theorem]{Definition}
\newtheorem{remark}[theorem]{Remark}

\newenvironment{fminipage}%
  {\begin{Sbox}\begin{minipage}}%
  {\end{minipage}\end{Sbox}\fbox{\TheSbox}}

\DontPrintSemicolon
\SetKw{KwAnd}{and}
\SetProcNameSty{textsc}
\SetFuncSty{textsc}
\SetKwInOut{Input}{Input}
\SetKwInOut{Output}{Output}

\title{Group Testing with Correlation under Edge-Faulty Graphs}

\begin{document}
\newcommand*\samethanks[1][\value{footnote}]{\footnotemark[#1]}

\author{Hesam Nikpey,\thanks{
  Email: {{\small{\texttt{\{hesam,jungyeol,xingranc,swati,saeedi\}@seas.upenn.edu.}}}}}
  \and Jungyeol Kim, \and Xingran Chen, \and Saswati Sarkar, \and Shirin Saeedi Bidokhti}
\date{}

\maketitle

\begin{abstract}
In applications of group testing in networks, e.g. identifying individuals who are infected by a disease spread over a network, exploiting correlation among network nodes provides fundamental opportunities in reducing the number of tests needed.
We model and analyze group testing on $n$ correlated nodes whose interactions are specified by a graph $G$. We model correlation through an edge-faulty random graph formed from $G$ in which each edge is dropped with probability $1-r$, and all nodes in the same component have the same state.  

We consider three classes of graphs: cycles and trees, $d$-regular graphs and stochastic block models or SBM, and obtain lower and upper bounds on the number of tests needed to identify the defective nodes. Our results are expressed in terms of the number of tests needed when the nodes are independent and they are in terms of $n$, $r$, and the target error. In particular, we quantify the fundamental improvements that exploiting correlation offers by the ratio between the total number of nodes $n$ and the equivalent number of independent nodes in a classic group testing algorithm.

The lower bounds are derived by illustrating a strong dependence of the number of tests needed on the expected number of components.
In this regard, we establish a new approximation for the distribution of component sizes in ``$d$-regular trees'' which may be of independent interest and leads to a lower bound on the expected number of components in $d$-regular graphs.

The upper bounds are found by forming dense subgraphs in which nodes are more likely to be in the same state. When $G$ is a cycle or tree, we show an improvement by a factor of $log(1/r)$. For grid, a graph with almost $2n$ edges, the improvement is by a factor of ${(1-r) \log(1/r)}$, indicating drastic improvement compared to trees. When $G$ has a larger number of edges, as in SBM, the improvement can scale in $n$.


\end{abstract}

\section{Introduction}

   Group testing \cite{du2000combinatorial} is a well studied problem at
   the intersection of many fields, including computer science
   \cite{dorfman1943detection,cheraghchi2011group, coja2020optimal, cheraghchi2021semiquantitative, cuturi2020noisy},
   information theory \cite{aldridge2019group,jadbabai, goodrich2008improved}  and computational biology
   \cite{knill1995group, farach1997group}. The goal is to find an
   unknown subset of $n$ items  that are different from the rest using
   the least number of tests. The target subset is often referred to
   as \emph{defective}, corrupted or infected, depending on the field of study.
   In this work, we use the term defective.
   To find the subset of defectives, items are tested in groups.
   The result of a test is
   positive if and only if at least one item in the group is defective.

   Over the years, this problem has been formulated via two approaches:
   the combinatorial approach and the information theoretic approach.
   In the ``combinatorial'' version of the problem, it is assumed that
   there are $d$ defective items that are to be detected  with zero error
   \cite{du2000combinatorial}. Using adaptive group testing (i.e., when who to test next depends on the results of the previous tests), there is a
   matching upper and lower bound on the number of tests in the form $d
   \log n + O(d)$ \cite{du2000combinatorial}. Using non-adaptive group
   testing (i.e., when the testing sequence is pre-determined), there is an upper bound of $O(d^2 \log (n/d))$ and an almost
   matching lower bound of $\Omega(\frac{d^2 \log n}{\log d})$. The
   ``information theoretic'' approach, on the other hand, assumes a prior
   statistic on the defectiveness of items, i.e., item $i$ is assumed to
   be defective with probability $p_i$. The aim in this case is to
   identify  the defective set
   with  high probability \cite{li2014group}. Roughly speaking,
   there is a lower bound in terms of the underlying entropy of the
   unknowns, and an almost  matching upper bound up to a $\log n$ factor
   of the lower bound.

  In most existing works, it is assumed that the state of the items,
  whether or not they are defective,  are independent of each other, which is not realistic in many applications. Group testing, for example, can identify the infected individuals using fewer tests, and therefore in a more timely manner, than individual testing, during the spread of an infectious disease (eg, COVID-19)
   \cite{brault2021group,verdun2021group, mutesa2021pooled, aldridge2020conservative, gollier2020group}. But the infection state of individuals are in general correlated, with correlation levels ranging from high to low, depending on how close they live: same household (high), same neighborhood, same city, same country (low). Correlation levels also depend on other factors such as frequency of contact, the number of paths between the individuals in
  the network of interactions. We elaborate on this further in Section \ref{sec:model}.  
{Another example is the multiaccess channel problem: here a number of users want to communicate through a common channel and we want to assign time slots to them to avoid any conflicts. Before assigning, we aim to find the number of active users that want to send a message. Using group testing, we can identify the number of active users faster by asking a subset of users if any of them is active \cite{berger1984random,chlebus2001randomized,goodrich2008improved}. But again, nodes are often not independent. Generally, some subset of users might communicate among themselves more often and hence, be more correlated.} With this motivation, we aim to model such  correlation, design group testing techniques that  exploit it, and  quantify the gain they provide in reducing the number of tests needed.

The closet to our work are \cite{ahn2021adaptive, arasli2021group, nikolopoulos2021group, nikolopoulos2021groupp} where specific correlation models are considered and group testing methods are designed and analyzed. In \cite{ahn2021adaptive}, {the authors consider correlation that is imposed by a one day spread of an infectious disease in a clustered network modeled by a stochastic block model. Each node is  initially defective (infected) with some probability and in the next time step,  its neighbors  become defective probabilistically. The authors provide a simple adaptive algorithm and prove optimality in some regimes of operation and under some assumptions.} 
In \cite{arasli2021group}, the authors model correlation through a {random edge-faulty} graph $G$. Each edge $e$ is {realized} in the graph with a {given} probability $r_e$. So depending on how the graph is realized, it is partitioned into some connected components. Each connected component is assumed defective with probability $p$ {(in which case, all the nodes in that  component are defective)} and otherwise  non-defective  with probability $1-p$. The authors focus {only on a subset of the realizations by studying the case} in which the set of connected
components across realizations forms a particular nested structure.  
More specifically, they only consider a subset of realizations such that for each realization, it is instantiated from another realization in the subset, i.e., two realizations have the same components except one component that is partitioned into two components. Only one realization does not obey this rule, the one with the least number connected components. 
 They found a near optimal non-adaptive testing strategy for any distribution over the considered realizations of the graph and showed optimality for specific graphs.  

{The correlation model we consider is close to} the work of \cite{arasli2021group}. We consider a {random (edge-faulty}) graph $G$ { where each edge is realized} with probability $r$. {In a realized graph,}  each connected component is assumed defective with probability $p$. {As opposed to \cite{arasli2021group}},  we {do not constrain our study to a subset of the realizations and instead consider all possible realizations of the graph $G$}.  {Despite its simplicity, our model  captures  two key features}. First, given a network of nodes, one expects that when a path between two nodes gets shorter, {the probability that they are in the same state increases. Our proposed model captures this intuition.}  By adding one edge to a path, the probability of being in the same state reduces by a factor of $r$. Second, two nodes might have long distances from each other, but when there are many edge-distinct paths between them, it is more likely that they are in the same state. Under our model, by adding distinct paths between two nodes,   the probability of them being in the same state increases. 



   In other related works, a graph could represent potential constraints
   on testing (among independent nodes) \cite{cheraghchi2012graph,spang2018unconstraining}. This can be viewed as a complementary direction to our setting in which  a graph models the inherent dependency of the nodes, but there is no constraint in how the groups can be formed for testing. 
   In \cite{cheraghchi2012graph}, the authors  design optimal
   non-adaptive testing strategies when each group is constrained to be path
   connected. In particular, they use random walks to obtain the pool of
   tests.
   In a follow up work, \cite{spang2018unconstraining}  shows that either
   the constraints are too strong and no algorithm can do better than
   testing most of the nodes, or optimal algorithms can be designed
   matching with the unconstrained version of the problem.
   This is attained by sampling each edge with probability $r$ ($0<r<1$ and
   optimized). If a  component is large enough, the algorithm tests the
   entire component.
   Our approach in this paper has similarities with
   \cite{spang2018unconstraining} in aiming to find parts of the graph
   that are  large and connected enough so that they remain connected
   with a decent probability after realizing the edges, but our techniques to find the dense subgraphs and the corresponding analysis are different.

\subsection{Our Model}\label{sec:model}
   We start by motivating the key  attributes that we capture in our model through an example of testing for infections  in a network of people (nodes). Consider the interaction  network for the spread of an infectious disease (e.g. COVID-19) in a network of people/nodes. 
   There is an edge between two nodes if the corresponding individuals are in physical proximity  for a
   minimum amount of time each week. Such individuals are more likely to be in the same state than those who have been distant throughout. Thus, firstly, the probability of being in the same state decreases with increase in the length of paths (i.e., distance in interaction network) between nodes.
   Second, infection is more likely to spread from one node to another if there are
   many distinct paths between them. Thus, the probability that two nodes are in the same state increases with the increase in the number of distinct paths between them.

   We capture correlation through a faulty-edge graph model. Consider a
   graph $G = (V,E)$ where the node set $V$, $|V|=n$, represents the items and the
   edge set $E$ represents  connections/correlations between them.
   Suppose each edge  is  realized with probability $0 \leq r \leq 1$.
   After the sampling, we have a random graph that we denote by $G_r$.
   Each
   node is either defective or non-defective. All
   nodes in the same component of $G_r$ are in the same state, rendering defectiveness a component property. We consider that each
   component is defective with probability (w.p.)  $p$ independent  of others.  As an
   example, consider graph $G$ with five nodes and eight edges, and a sampled graph realization  $G_r$ as shown in Figure~\ref{example:partition} (left) and
   Figure~\ref{example:partition} (right) respectively.
   When $r=1/3$, $G_r$ is realized w.p.
   $(\frac{1}{3})^3 (\frac{2}{3})^5$. There are two components in $G_r$, namely, $v_1, v_4,
   v_5$ and $v_2, v_3$; $v_1, v_4, v_5$ are in the same state, which is defective w.p. $p$, independent
   of the state of $v_2, v_3.$
   
   \textcolor{black}{Two nodes are guaranteed to be in the same state in $G_r$ if there exists at least one path that connects them in $G_r$. The probability that a path in $G$ survives in $G_r$ increases with increase in $r$. Thus both the parameter $r$ and the graph $G$ determine the correlation between states of different nodes; the correlation is higher if $r$ is higher, states of all nodes are independent for $r=0$, while the correlation is the highest possible for a given $G$ for $r=1.$}

   This model importantly captures the two attributes we discussed: Clearly, a long path
   between two nodes in $G$ has a smaller chance of survival in $G_r$, compared to a
   short path, making the end nodes less likely to be in the same state
   as the length of the path in $G$ between them increases. Moreover,  the probability that at least one path between two nodes survive in $G_r$ increases with increase in the number of distinct paths between them in $G$, so having distinct paths between a pair of nodes in $G$ makes them more
   likely to be in the same state.
  
   We aim to find the minimum expected number of tests needed to find the defective items with at most $\epsilon n$ errors, where $\epsilon$  can potentially be of order $o(1)$.
To be precise, let $\textsc{\#ERR(H)}$ be the number of nodes mispredicted by an algorithm on graph $H$. Then we require to have 
\begin{align}\mathbb{E}_{H \sim G_r}[\#ERR(H)] \leq \epsilon n
\label{eq:error}
\end{align} 
where the expectation is taken over $G_r$ and possible randomization of the algorithm. We refer to it as the \emph{average} error.
\begin{remark}\label{remark:differror} 
The definition of error in classic probabilistic group testings such as  \cite{li2014group} is a stronger notion of error probability where the goal is to correctly predict  all nodes with probability $1-\epsilon$, and with probability $\epsilon$ one or more nodes are mispredicted. This is stronger than our definition of average error in \eqref{eq:error} because with  probability $\epsilon$ at most $n$ nodes are mispredicted in the classic group testing, so the average error would be less than $\epsilon n$, the allowed error in our model.
\end{remark}
\noindent We mostly work with the notion of  average error in this paper. In the last section (Section~\ref{sec:generalized}), we consider a stronger notion of error to limit the 
\emph{maximum} error:  the group testing schemes now need  to upper bound the number of mispredicted nodes by $\epsilon n$ with high probability. We recover all the results of the paper for this stronger constraint on error as well.

Methodologically, we relate the problem of group testing with correlation to that in a network with fewer nodes in which the states of nodes are independent. We obtain bounds on the number of group tests required in the former, to satisfy the constraints we consider on errors, in terms of known bounds in the latter. The relative quantification provides a basis for comparison and determination of the improvements that can be obtained  by exploiting  correlation.  

\begin{figure*}
\centering
\begin{subfigure}{.5\textwidth}
\centering
\begin{tikzpicture}[node distance={15mm}, main/.style = {draw, circle}]
\node[main] (1) {$v_1$};
\node[main] (2) [below right of=1] {$v_2$}; 
\node[main] (3) [above right of=1] {$v_3$};
\node[main] (4) [above left of=1] {$v_4$}; 
\node[main] (5) [below left of=1] {$v_5$};
\draw (4) -- (3);
\draw (4) -- (1);
\draw (4) -- (5);
\draw (5) -- (1);
\draw (5) -- (2);
\draw (3) -- (1);
\draw (3) -- (2);
\draw (2) -- (1);

\end{tikzpicture}
\caption{Graph $G$ before realization.} \label{fig:P1}

\end{subfigure}
\begin{subfigure}{.5\textwidth}
\centering
\begin{tikzpicture}[node distance={15mm}, main/.style = {draw, circle}]
\node[main] (1) {$v_1$};
\node[main] (2) [below right of=1] {$v_2$}; 
\node[main] (3) [above right of=1] {$v_3$};
\node[main] (4) [above left of=1] {$v_4$}; 
\node[main] (5) [below left of=1] {$v_5$};
\draw (4) -- (1);
\draw (4) -- (5);
\draw (3) -- (2);

\end{tikzpicture}
\caption{Graph $G$ after a realization with 3 edges.} \label{fig:P2}

\end{subfigure}
    \caption{Graph $G$ after a realization of edges}
    \label{example:partition}
\end{figure*}
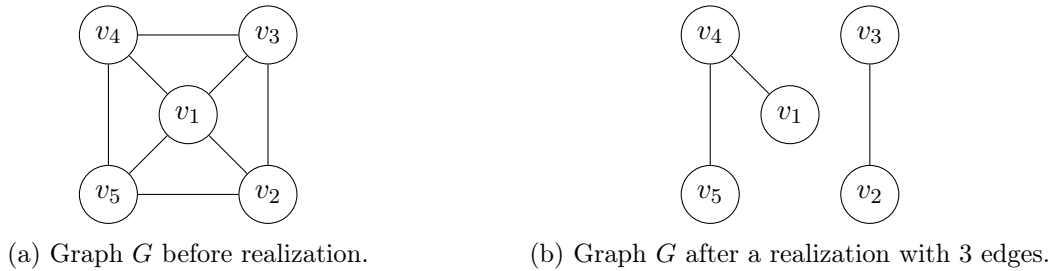

   The tests can not be designed with the knowledge of $G_r$, only the value of $r$ is known apriori. In the extreme case of $r = 0$, the problem is reduced to the classic
   group testing with $|V|=n$ independent nodes. In the extreme case of $r
   = 1$, all components of $G$ remain connected and have the same state and hence the problem is reduced \textcolor{black}{to testing a single node.}
   When $0 <r < 1$, the problem is non-trivial, because there can be multiple components, some with more than one node, and the number and composition of the components is apriori unknown. Thus, it is not apriori known which nodes will be in the same state. Our group testing strategies will seek to circumvent this challenge by identifying parts of $G$ that are
   connected enough so that they remain connected in $G_r$ with a high
   probability.

\subsection{Contributions}
We obtain upper and lower bounds on the number of group tests needed to determine the states, defective or otherwise,  of individual nodes in a large class of interaction graphs in the presence of correlation among the states of nodes. We progressively consider 1)  cycles and trees  (about $n$ links), 2) $d-$regular graphs (about $dn/2$ links) and 3) stochastic block models or SBM (with potentially $\Theta(n^2)$ links). The correlation is captured by the factor $r$ (see Section \ref{sec:model}). The bounds are obtained in terms of the number of tests needed when  the states are independent, and help us quantify the efficiency brought forth by group testing in terms of $r.$ \textcolor{black}{In particular, our group testing strategies exploit correlation and build upon classical group testing strategies,  but with fewer number of nodes. The ratio between this number and the total number of nodes ($n$) determines the benefits of correlation in our strategies and we refer to it as the (multiplicative) improvement factor. Note that this is not the ratio between the number of tests that are needed, but the ratio between the number of nodes that a classical group testing algorithm gets as input. As such, our results are valid for any group testing algorithm, and they can be translated to the ratio between the total number of tests,  accordingly, as needed.}

For trees and cycles, we prove an upper bound on the optimal number of tests in terms of the number of  group tests when there are $n \log (1/r)$ independent nodes. Note that one can trivially determine the states  of each node by disregarding correlation and testing among $n$ nodes (e.g. using classic group testing techniques). Our upper bound therefore shows that group testing can reduce the tests. The (multiplicative) improvement factor of $\log(1/r)$ is meaningful (less than 1) when $r > 1/2$.  As $r$ approaches $1$ the multiplicative factor reduces even further implying even greater benefits due to group testing. Our lower bound, on the other hand, shows an improvement factor $(1-r)$. 

For $d-$regular graphs,  we prove new results for the distribution of components. This leads to a lower bound that is expressed as a sum series depending on $r$ and $n$. We further prove an upper bound for a specific $4-$regular graph, namely grid, in terms of the number of group tests when there are $n {(1-r) \log(1/r)}$ independent items. Thus, the improvement  factor is  ${(1-r) \log(1/r)}$, as opposed to only $\log(1/r)$ for trees; this hints us that group testing gets drastically more efficient  for denser graphs. 

The stochastic block model divides the network into communities such that nodes in the same community are more connected than nodes in different communities. We show that the reduction in the test count due to group testing can be classified into three regimes:  1) strong intra-community connectivity but sparse inter-community connectivity, which reduces the  effective number of independent nodes to the number of communities, 2) strong connectivity leading to an (almost) fully connected graph, in which case all nodes have the same state and one independent test is representative 3) sparse connectivity leading to many isolated nodes, in which case the states of all nodes are independent. The first case reduces to independent group testing with the number of nodes equal to the number of communities, second regime needs a constant number of tests, and finally the third regime reduces to independent group testing with $n$ nodes. \textcolor{black}{The tight upper and lower bounds that are known in the literature for the independent group testing subsequently apply for the first and third cases; for the second case, the analysis is rather simple as there is only a constant number of tests needed.}

\subsection{Our Methods and Ideas}

\textcolor{black}{We now briefly describe the mathematical techniques that we follow to obtain the bounds. The techniques constitute a contribution in themselves as a graphical structure was not investigated earlier for group testing except under significant restrictions as described earlier. For the upper bound for a cycle, we divide the cycle $G$ into subgraphs of size $l$ ($l$ nodes) where $l$ is a parameter. The subgraphs are connected in $G$, but need not be connected in $G_r$ which we do not know apriori. For every subgraph, we select a node which we consider as representative of the subgraph and determine the states of the representatives of all the subgraphs using group  testing strategies deployed when states of nodes are independent (ie, we do not exploit possible correlation between the representatives). We consider the state of each node in each subgraph as that of the representative; this is indeed the case if each subgraph is connected, otherwise the state of some nodes are determined in error. The probability of each subgraph being connected decreases with increase in $l$, thus the expected number of errors, which can be computed as a function of $r, l, n$,  increases with increase in $l.$ The number of representatives and therefore the number of nodes subjected to group tests described above is $n/l$. Thus the number of group tests is non-increasing with $l.$ Thus $l$ represents a tradeoff between the expected number of errors and the number of group tests, and $l$ is selected appropriately to ensure a low number of group tests subject to ensuring that the expected number of errors does not exceed the specified limit. The number of group tests for the $l$ that satisfies the specified error constraints provides the upper bound on the number of tests for a cycle.}

\textcolor{black}{The upper bound for an arbitrary tree can be obtained similarly, with the additional significant complication that for an arbitrary tree $G$ and an $l$ that satisfies constraints on error one may not be able to obtain subgraphs in $G$ of size $l$ that are connected in $G$ (in contrast when $G$ is  a cycle,  a path of size $l$ constitutes such a subgraph). Refer to Figure~\ref{fig:star5} for an illustration of this challenge. We get around this challenge by using subgraphs that are not connected in $G$ by themselves, but become connected in $G$ through at most $l$ additional nodes in $G$ (which are not in the subgraph). Construction of  such subgraphs is not apriori clear and constitutes an innovation needed for upper bounding the number of tests needed for trees, above and beyond the overall methodology. Each such subgraph is connected in $G_r$ if the links in $G$ among these $2l$ nodes survive in $G_r$, the probability of this event can again be expressed as a function of $l, r$, and as before this probability decreases with increase in $l.$ The rest of the methodology is similar to for cycles.}

\textcolor{black}{The overall methodology for obtaining the upper bound for grids is the same as that for cycles. The subgraphs in question constitute sub-grids of $l$ nodes. We determine the probability that each such sub-grid is connected in $G_r$ through a recursive decomposition which is not apriori obvious.}

\textcolor{black}{The characterization of lower bound on the number of tests needed for cycle or trees constitutes another innovation. To obtain the lower bound one can assume the knowledge of the components of $G_r$, which one does not know in reality. Nodes in each component of $G_r$ have the same state, which is independent of those of the states of nodes in other components. Thus each component can be considered a super-node and the states of the super-nodes are independent of each other. Thus, one needs at least as many tests as that when there are $C(G_r)$ nodes whose states are independent, where $C(G_r)$ is the number of components of $G_r$. A lower bound can now be obtained if the random variable $C(G_r)$ can be bounded. We accomplish this objective by observing that number of components in a stochastic graph constitutes an edge-exposure martingale and the value of this random variable is concentrated around its expectation, courtesy of Azuma’s inequality which holds for such martingales.}

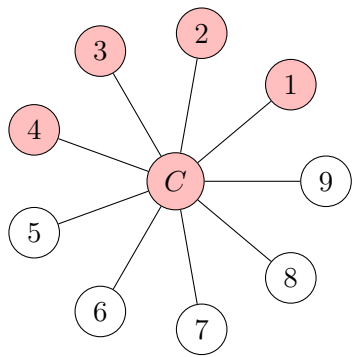
\begin{figure}
\centering
\begin{tikzpicture}[node distance={10mm}, main/.style = {draw, circle}]
\node[main, fill = pink] (10) {$C$};

\foreach \a in {1,2,...,9}{
\ifthenelse{\a < 5}{\def\mycol{pink}}{\def\mycol{white}}
\draw  (\a*360/9: 2cm) node[main, fill = \mycol] (\a) {\a};
}
\draw (10) -- (1);
\draw (10) -- (2);
\draw (10) -- (3);
\draw (10) -- (4);
\draw (10) -- (5);
\draw (10) -- (6);
\draw (10) -- (7);
\draw (10) -- (8);
\draw (10) -- (9);

\end{tikzpicture}%
    \caption{A star can not be partitioned into trees of size $l$ for any $ l > 2.$ This is because any partition of a star of $n$ nodes into smaller trees constitutes of a star of $m$ nodes, for a $m$ of our choice and $n-m$ isolated nodes. The figure shows an example where $n = 10, m=5.$ Nodes $C, 1, 2, 3, 4$ constitute the star, nodes $5, \ldots 9$ constitute isolated nodes. We can however partition the star into subgraphs of size $l$ (ie, having $l$ nodes) each of which can be connected through the central node even when the node does not belong in the subgraph. The figure shows such a partition with $l=5.$ The subgraph consisting of  nodes $C, 1, 2, 3, 4$ is connected, the rest of the $5$ nodes constitutes a subgraph that can be connected through the central node.}
    \label{fig:star5}
\end{figure}

\textcolor{black}{We initially present the above results for constraints on the expected number of errors. We subsequently generalize them for a stronger constraint, that on the number of errors with a high probability  (rather than expectation thereof), for cycle and trees. Lower bounds can be obtained following the same methodology as before, except that lower bounds on the number of tests needed for independent nodes for this stronger constraint do not exist in the literature. We derive the latter lower bound first, adapting some existing proof techniques relying on Information-theoretic inequalities. We show that the lower bounds can be improved for some specific structures of $G$, such as when $G$ is a star. This stronger notion of error allows us to include the structure of $G$ more heavily and directly in this proof, rather than going via an analysis for the  number of components. We have resorted to the latter in obtaining the lower bounds for cycle, tree for the weaker constraint on errors. The upper bound can be obtained following the same broad structure, though some additional technical challenges arise. The number of errors over all subgraphs can be bounded with high probability using Hoeffding’s inequality if the number of errors in different subgraphs are independent. This happens for cycle the subgraphs are non-overlapping paths and   the events that they remain connected in $G_r$ are independent. Nonetheless, this does not happen for trees because nodes in one subgraph may be connected through those in other subgraphs. We surmount this technical challenge by invoking an innovative exposure of nodes of the tree such that the number of connected subgraphs constitutes a node exposure Martingale which satisfies the requisite Lipschitz condition for Azuma’s inequality to hold. The high probability bound on the number of errors over all subgraphs now follow via Azuma’s inequality.}

\section{Preliminaries and Notations}\label{sec:prel}
   We use the following notations for the rest of the paper.
   Let $\textsc{CrltOPT}(G, r, p, \epsilon)$ be the expected number of
   tests in an optimal algorithm on graph $G$ with correlation parameter $r$,
   probability of defectiveness $p$, and an error of $\epsilon n$. Let
   $\textsc{IndepOPT}(n, p, \epsilon)$ be the minimum expected number of
   tests needed for $n$  items in order to find the defective set with
   the error probability at most $\epsilon$, where each item is
   \textit{independently} defective with probability $p$. Notice that $\textsc{IndepOPT}(n, p, \epsilon)$ does not depend on $G$. Note that   
   group testing is potentially beneficial when the number of defective items is $o(n)$.
   It is often assumed that the (expected) number of defective items is $n^\alpha, \alpha<1$, so from now on we assume $p = o(1)$.
   It is also noteworthy  that the definitions of error in $\textsc{IndepOPT}$ and $\textsc{CrltOPT}$ are different, as mentioned in Remark~\ref{remark:differror}. 
   When clear from
   the context, we may drop $p, r, \epsilon$ from the notations. We write $A \simeq f(n)$ if $A = f(n) + o(f(n))$. We write $A \lesssim f(n)$ when $A \leq f(n) \pm o(f(n))$

The following lemma provides a lower bound on $\textsc{CrltOPT}$ in terms of $\textsc{IndepOPT}$ -- the minimum number of group tests needed in discovery of the defective set among independent nodes.

\begin{lemma} \label{lem:components}
Let $C(G_r)$ be the number of connected components of $G_r$. Then
\begin{align}
\textsc{IndepOPT}(C(G_r), p, \epsilon n)  \leq \textsc{CrltOPT}(G, r, p, \epsilon). \label{eq:lowerbound}
\end{align}
\end{lemma}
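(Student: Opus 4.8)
The plan is to show that any valid group testing strategy for the correlated problem on $G$ can be simulated, at no extra cost in the expected number of tests, by a strategy for the independent problem on $C(G_r)$ nodes, where the error constraint translates appropriately. The key observation is the one already hinted at in the methodology discussion: conditioned on the realization $G_r$, all nodes in a connected component share a single common state, and distinct components are defective independently with probability $p$. So if we imagine an oracle that reveals $G_r$ (but not the states), the correlated instance collapses to an instance of classic probabilistic group testing on $C(G_r)$ super-nodes, each independently defective with probability $p$.

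The first step is to make the reduction direction precise. I would argue that revealing $G_r$ to the algorithm can only decrease the optimal expected number of tests, since the algorithm is free to ignore this side information; hence $\textsc{CrltOPT}(G,r,p,\epsilon)$ is at least the expected (over $G_r$) optimal number of tests for the instance with $G_r$ known. For a fixed realization $H$ with $C(H)$ components, any test on a subset $S$ of the original $n$ nodes is equivalent to a test on the set of super-nodes (components) that $S$ touches, and the outcome is the same function of the super-node states; conversely a test on super-nodes is realizable as a test on original nodes. Thus the fixed-realization problem is exactly classic group testing on $C(H)$ independent nodes with defect probability $p$.

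The second step is to handle the error notions, which differ between the two settings (Remark~\ref{remark:differror}): $\textsc{IndepOPT}(m,p,\delta)$ requires correct recovery of \emph{all} $m$ nodes with probability $\ge 1-\delta$, while $\textsc{CrltOPT}$ allows an expected number of mispredicted original nodes up to $\epsilon n$. For a fixed realization $H$, if an algorithm misidentifies the state of even one super-node it may mispredict up to $n$ original nodes; so the expected number of original-node errors is at least $n$ times the probability that some super-node is misidentified. Taking expectation over $H$ and using $\mathbb{E}_{H\sim G_r}[\#ERR(H)]\le \epsilon n$, we get that the probability (over the joint randomness of $G_r$, the states, and the algorithm) of misidentifying some super-node is at most $\epsilon$. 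Hence the same procedure, run on the revealed instance, recovers all $C(G_r)$ super-nodes exactly with probability $\ge 1-\epsilon$, which is precisely the guarantee defining $\textsc{IndepOPT}(C(G_r),p,\epsilon)$ — here $C(G_r)$ is a random variable, so $\textsc{IndepOPT}(C(G_r),p,\epsilon)$ should be read as the random quantity obtained by plugging the realized number of components into the function, or one takes expectations on both sides. Comparing expected test counts across all realizations then yields \eqref{eq:lowerbound}.

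The main obstacle I anticipate is the bookkeeping around the two error definitions and, relatedly, the precise meaning of $\textsc{IndepOPT}(C(G_r),\cdot,\cdot)$ when $C(G_r)$ is random and the per-realization error budget $\epsilon$ is not uniform across realizations. One has to be careful that the simulating algorithm for the independent problem is allowed to be \emph{adaptive to the realization}, i.e. we are really asserting the inequality in expectation over $G_r$, and that the $\epsilon n$ average-error budget, when pushed through the ``one super-node error costs up to $n$ node errors'' bound, cleanly gives a per-instance failure probability bounded by $\epsilon$. A secondary subtlety is confirming that any test on super-nodes can be implemented as a test on original nodes and vice versa without changing the outcome function — this is immediate from the pooled-testing OR semantics, but worth stating explicitly so the reduction is airtight.
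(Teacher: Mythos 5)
Your overall strategy is the same as the paper's: reveal $G_r$ as side information (which can only help), collapse each component to a super-node that is independently defective with probability $p$, and then translate the error budget so that a classic lower bound applies. The reduction of tests on node subsets to tests on super-nodes and the monotonicity of the optimum in the revealed information are fine.

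The gap is in your error translation, and it matters. You claim that ``the expected number of original-node errors is at least $n$ times the probability that some super-node is misidentified,'' and from $\mathbb{E}[\#ERR]\le \epsilon n$ you conclude a per-instance failure probability of at most $\epsilon$, hence a bound of $\textsc{IndepOPT}(C(G_r),p,\epsilon)$. This inequality is false and the direction of reasoning is backwards: misidentifying a super-node forces at least \emph{one} original-node error (the component could be a singleton), not $n$ of them; the fact that it ``may mispredict up to $n$ nodes'' only gives $\mathbb{E}[\#ERR]\le n\cdot\Pr[\text{failure}]$, which cannot be combined with $\mathbb{E}[\#ERR]\le\epsilon n$ to bound the failure probability. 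As a concrete counterexample, if $G_r$ consists of singletons and the algorithm errs on exactly one of them with probability $q$, then $\mathbb{E}[\#ERR]=q$, not $nq$; such an algorithm can have failure probability as large as $\epsilon n$ while meeting the average-error constraint, so your claimed conclusion (failure probability $\le\epsilon$, i.e.\ the stronger bound $\textsc{IndepOPT}(C(G_r),p,\epsilon)$) is not provable this way and is not what the lemma asserts. The correct accounting, which is what the paper does, is the reverse bound: one misidentified super-node costs at least one node error, so the failure probability $\gamma$ of the induced classic algorithm satisfies $\gamma\le\mathbb{E}[\#ERR]\le\epsilon n$; since $\textsc{IndepOPT}(C(G_r),p,\gamma)$ is non-increasing in $\gamma$, the test count is at least $\textsc{IndepOPT}(C(G_r),p,\epsilon n)$, exactly matching \eqref{eq:lowerbound}. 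With that one step repaired, the rest of your argument goes through and coincides with the paper's proof.
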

\begin{proof}
\textcolor{black}{Consider the idealized scenario in which one knows the components of $G_r$. Suppose that we have $C(G_r)$ components.  All nodes in the same  component are in the same state, and the states of nodes in different components are independent. Thus, each component can be replaced by one  node and  the minimum number of tests   is that needed to test a graph with  $C(G_r)$ independent nodes  and the expected number of errors is at most $\epsilon n$. This number corresponds to \textsc{IndepOPT}($C(G_r), p, \gamma)$ for some $\gamma$ such that the expected number of errors is at most $\epsilon n$. A classical group testing algorithm with \textsc{IndepOPT}($C(G_r), p, \gamma)$ tests may make (at least) an error in finding one node’s state with probability $\gamma$; thus the expected number of errors is at least $\gamma$. This implies that we need $\gamma \leq \epsilon n.$ Clearly \textsc{IndepOPT}($C(G_r), p, \gamma)$  is a non-increasing function of $\gamma.$ Thus at least \textsc{IndepOPT}($C(G_r), p, \epsilon n)$  tests are needed when one knows the components of $G_r$. If one does not know the components, the expected number of tests can only increase. The lemma follows.}
\end{proof}

\begin{remark}
At first glance, the bound may come across as counter-intuitive because the number of tests in a graph in which the states of nodes are independent provides a lower bound on that when the states are correlated. The apparent contradiction is resolved when we note that the lower bound is obtained in terms of  the number of tests in a graph with a fewer nodes than the original: number of components in $G_r$  instead of the number of nodes in $G_r.$ 
\end{remark}

Lemma \ref{lem:components}  illustrates a connection between the number of connected components and the minimum number of tests \textsc{CrltOPT}.  We next discuss concentration lemmas for a class of graph theoretic functions, including the number of connected components. By having the concentration results, we would be able to replace $C(G)$ by its expectation in \eqref{eq:lowerbound}.
\subsection{Concentration Results}
\begin{definition}\cite{alon2016probabilistic}
A graph theoretic function $f$ is said to satisfy the edge Lipschitz condition if, whenever $H$ and $H^{\prime}$ differ in only one edge, $\left|f(H)-f\left(H^{\prime}\right)\right| \leq 1$ .

\end{definition}

Note that the number of components  $C(G)$ is edge Lipschitz, as when  two graphs differ in only one edge, they either  have the same number of components, or the graph with one less edge has one additional component. One can define a node Lipschitz condition by replacing edge with node  \cite{alon2016probabilistic}.

\noindent \textbf{The Edge Exposure Martingale}. Let $e_1, e_2, \ldots, e_m$ be an arbitrary order of the edges. We define a martingale $X_0, X_1, \ldots, X_m$  where $X_i$ is the value of a graph theoretic function $f(H)$ after exposing $e_1, e_2, \ldots, e_i$. Note that $X_0$ is a constant which is the expected of $f(G)$, where $G$ is drawn from $G_r$. 
This is a special case of martingales sometimes referred to as a Doob martingle process, where $X_i$ is the conditional expectation of $f(H)$, as long as the information known at time $i$ includes the information at time $i-1$ \cite{alon2016probabilistic}. 
The same process can be defined for node exposure martingales, where the nodes are exposed one by one \cite{alon2016probabilistic}. 
Node exposure can be seen as exposing one node at each step, so at the $i^{\text{th}}$ step the graph has $i$ nodes along with the edges between them. You can find more about the topic in \cite[Chapter 7]{alon2016probabilistic}.
We have the following theorem.

\begin{theorem}\cite{alon2016probabilistic}
When $f$ satisfies the edge \textcolor{green}{(resp. node)} Lipschitz condition, the corresponding edge  \textcolor{green}{(resp. node)} exposure martingale satisfies $\left|X_{i+1}-X_{i}\right| \leq 1$.
\end{theorem}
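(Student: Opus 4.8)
The plan is to prove the two statements (edge case and node case) separately, each time unwinding the Doob-martingale definition and using the Lipschitz hypothesis together with the fact that in the edge/node exposure process the "new" randomness at each step is independent of what came before.

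\textbf{Edge exposure.} First I would fix an ordering $e_1,\dots,e_m$ of the edges and write $Y_j \in \{0,1\}$ for the indicator that $e_j$ is realized, so the $Y_j$ are independent Bernoulli$(r)$ variables and $X_i = \mathbb{E}[f(H) \mid Y_1,\dots,Y_i]$. The goal is to bound $|X_{i+1} - X_i|$. The key observation is that $X_{i+1} - X_i = \mathbb{E}[f(H)\mid Y_1,\dots,Y_{i+1}] - \mathbb{E}[f(H)\mid Y_1,\dots,Y_i]$, and since $Y_{i+1}$ is independent of $Y_1,\dots,Y_i$, I can rewrite $X_i = \mathbb{E}_{Y_{i+1}}\big[\mathbb{E}[f(H)\mid Y_1,\dots,Y_{i+1}]\big]$ where the outer expectation is over $Y_{i+1}\sim\mathrm{Bernoulli}(r)$ with the first $i$ coordinates held fixed. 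Thus $X_{i+1}-X_i$ is a difference between a value $g(Y_{i+1})$ and its average over $Y_{i+1}$, where $g(b) \defeq \mathbb{E}[f(H)\mid Y_1,\dots,Y_i, Y_{i+1}=b]$. Hence $|X_{i+1}-X_i| \le |g(1) - g(0)|$ (a weighted average lies between its extreme values, so the deviation from the average is at most the spread $|g(1)-g(0)|$). The final step is to show $|g(1)-g(0)|\le 1$: couple the two conditional distributions so that the only difference in the realized graph is the presence/absence of $e_{i+1}$; since $f$ is edge Lipschitz, the two coupled graphs have $|f| $-values differing by at most $1$ pointwise, so their conditional expectations differ by at most $1$.

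\textbf{Node exposure.} The node case is structurally identical but needs a little more care about what "new randomness" means at step $i$: when node $v_{i+1}$ is exposed, the newly revealed information is the set of realized edges between $v_{i+1}$ and $\{v_1,\dots,v_i\}$, i.e. a collection of (conditionally) independent Bernoulli$(r)$ variables, independent of everything revealed in steps $1,\dots,i$. Writing $Z_{i+1}$ for this vector and $g(z) \defeq \mathbb{E}[f(H)\mid \text{steps }1..i, Z_{i+1}=z]$, the same tower/independence manipulation gives $X_i = \mathbb{E}_{Z_{i+1}}[g(Z_{i+1})]$, so $|X_{i+1}-X_i| \le \max_z g(z) - \min_z g(z)$. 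To bound this by $1$ I would couple any two values $z, z'$ of $Z_{i+1}$: the corresponding graphs (on the same $i+1$ exposed nodes, with identical edge sets among $v_1,\dots,v_i$) differ only in which edges incident to the single node $v_{i+1}$ are present. Here I would invoke the node Lipschitz condition directly — changing the graph "at one node" includes changing the edges incident to that node — so $|f|$ changes by at most $1$, and therefore so does the conditional expectation.

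\textbf{Main obstacle.} The routine part is the averaging/tower-property bookkeeping; the subtle point I expect to spend the most care on is making the coupling between the two conditional measures $g(\cdot)$ legitimate, namely arguing that conditioning on $Y_1,\dots,Y_i$ (resp. the step-$1..i$ exposure) and then flipping only $Y_{i+1}$ (resp. comparing two values of $Z_{i+1}$) genuinely produces two graph distributions that are identical except on one edge (resp. except for edges at one node), so that the pointwise Lipschitz bound transfers to the expectations. This hinges on the independence of the exposure increments, which is exactly the defining feature of the edge/node exposure martingale, so I would state that independence explicitly and then the rest follows. For the node case one also has to be comfortable that the node Lipschitz condition, as defined via "$H$ and $H'$ differ in only one node," is meant to allow arbitrary differences in the edges touching that node, which is the standard reading in \cite{alon2016probabilistic} and is what makes $C(G)$ node Lipschitz.
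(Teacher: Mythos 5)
Your proof is correct: the averaging bound $|X_{i+1}-X_i|\le\max g-\min g$, the coupling of the two conditional measures via the shared (independent) unrevealed randomness, and the single application of the edge (resp.\ node) Lipschitz condition are exactly the standard argument for this result. The paper itself gives no proof here --- it imports the theorem from \cite{alon2016probabilistic} --- and your argument matches the one in that reference, including the correct reading of ``differ in one node'' as allowing arbitrary changes to the edges incident to that node.
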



We then have Azuma's inequality.

\begin{theorem}\cite{alon2016probabilistic}\label{theorem:alonconct}
Let $X_{0} = c, \ldots, X_{m}$ be a martingale with
$$
\left|X_{i+1}-X_{i}\right| \leq 1
$$
for all $0 \leq i<m$. Then
$$
\operatorname{Pr}\left[\left|X_{m}-c\right|>\lambda \sqrt{m}\right]<2 e^{-\lambda^{2} / 2}.
$$
\end{theorem}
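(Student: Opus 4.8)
The plan is to prove this by the exponential moment (Chernoff-type) method adapted to martingales, bounding the two tails separately and combining them by a union bound. First I would pass to the martingale difference sequence $Y_i \defeq X_i - X_{i-1}$ for $1 \le i \le m$, so that $X_m - c = \sum_{i=1}^m Y_i$; by hypothesis $|Y_i| \le 1$, and since $X_0,\ldots,X_m$ is a martingale, $\mathbb{E}[Y_i \mid \mathcal{F}_{i-1}] = 0$, where $\mathcal{F}_{i-1} = \sigma(X_0,\ldots,X_{i-1})$ records the information revealed by the first $i-1$ exposures. The goal then reduces to bounding $\Pr\!\left[\sum_{i=1}^m Y_i > \lambda\sqrt{m}\right]$.

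The key ingredient is a one-variable convexity estimate: for any random variable $Y$ with $\mathbb{E}[Y]=0$ and $|Y| \le 1$ and any $t \in \mathbb{R}$, one has $\mathbb{E}[e^{tY}] \le e^{t^2/2}$. I would prove this by writing, for $y \in [-1,1]$, $y = \frac{1+y}{2}\cdot 1 + \frac{1-y}{2}\cdot(-1)$ and using convexity of $x \mapsto e^{tx}$ to get $e^{ty} \le \frac{1+y}{2}e^{t} + \frac{1-y}{2}e^{-t}$; taking expectations and using $\mathbb{E}[Y]=0$ gives $\mathbb{E}[e^{tY}] \le \cosh(t)$, and the term-by-term comparison $\cosh(t) = \sum_{k \ge 0}\frac{t^{2k}}{(2k)!} \le \sum_{k\ge 0}\frac{t^{2k}}{2^k k!} = e^{t^2/2}$ finishes it.

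Next I would iterate this estimate along the filtration. Fix $t > 0$. Conditioned on $\mathcal{F}_{i-1}$, the variable $Y_i$ has conditional mean $0$ and is bounded by $1$, so applying the estimate to its conditional law gives $\mathbb{E}[e^{tY_i}\mid \mathcal{F}_{i-1}] \le e^{t^2/2}$. Since $e^{t\sum_{i=1}^{m-1}Y_i}$ is $\mathcal{F}_{m-1}$-measurable, the tower rule gives $\mathbb{E}[e^{t(X_m-c)}] = \mathbb{E}\!\left[e^{t\sum_{i=1}^{m-1}Y_i}\,\mathbb{E}[e^{tY_m}\mid \mathcal{F}_{m-1}]\right] \le e^{t^2/2}\,\mathbb{E}[e^{t\sum_{i=1}^{m-1}Y_i}]$, and peeling off the factors one at a time yields $\mathbb{E}[e^{t(X_m-c)}] \le e^{mt^2/2}$. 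Markov's inequality applied to $e^{t(X_m-c)}$ then gives $\Pr[X_m - c \ge \lambda\sqrt{m}] \le e^{-t\lambda\sqrt{m}+mt^2/2}$, and the choice $t = \lambda/\sqrt{m}$ minimizes the exponent, producing $\Pr[X_m - c \ge \lambda\sqrt{m}] \le e^{-\lambda^2/2}$. Running the same argument on $-X_0,\ldots,-X_m$, which still has unit-bounded increments, bounds $\Pr[X_m - c \le -\lambda\sqrt{m}]$ by the same quantity, and a union bound gives $\Pr[|X_m - c| > \lambda\sqrt{m}] < 2e^{-\lambda^2/2}$.

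The main obstacle is establishing the convexity lemma $\mathbb{E}[e^{tY}] \le e^{t^2/2}$ for bounded zero-mean $Y$; once that one-dimensional bound is in hand, the martingale structure makes the rest essentially mechanical (conditioning plus the tower rule) and the optimization over $t$ is routine calculus. The one subtlety to watch is that the conditioning $\sigma$-algebra must be rich enough that $e^{t\sum_{i<m}Y_i}$ is measurable with respect to it and can therefore be pulled out of the inner conditional expectation; taking $\mathcal{F}_{i-1}=\sigma(X_0,\ldots,X_{i-1})$ suffices.
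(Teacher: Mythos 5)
Your proposal is correct: this is Azuma's inequality, which the paper does not prove but simply cites from \cite{alon2016probabilistic}, and your argument (the convexity bound $\mathbb{E}[e^{tY}]\le\cosh(t)\le e^{t^2/2}$ for bounded zero-mean increments, iterated along the filtration via the tower rule, followed by Markov's inequality with $t=\lambda/\sqrt{m}$ and a union bound over the two tails) is exactly the standard proof given in that reference. No gaps; the only cosmetic difference is that your bound is $\le 2e^{-\lambda^2/2}$ rather than strict inequality, which is immaterial here.
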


\begin{corollary} \label{corol: conenctcomp}
Let $\delta > 0$. Then with probability $1 - \delta$ we have 
$$|C(G_r) - \mathbb{E}[C(G_r)|] \leq O( \sqrt{m \log 1/\delta }).$$

\noindent Specifically, in the case  that $\mathbb{E}[{C(G_r)}] = cn$ for a constant $c$, and $G$ has $m = o(n^2)$ edges, then with high probability the number of connected components is within $cn \pm o(n) \sqrt{\log 1/\delta}$.
\end{corollary}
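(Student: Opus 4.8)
The plan is to realize $C(\cdot)$ as the terminal value of an edge exposure martingale and then invoke Azuma's inequality (Theorem~\ref{theorem:alonconct}). Fix an arbitrary order $e_1,\dots,e_m$ of the edges of $G$, and let $X_0,X_1,\dots,X_m$ be the edge exposure martingale for the function $f=C$, so that $X_0=\mathbb{E}[C(G_r)]$ and $X_m=C(G_r)$. As already observed, $C$ satisfies the edge Lipschitz condition, since adding or removing a single edge changes the number of connected components by at most $1$; hence, by the theorem immediately preceding Theorem~\ref{theorem:alonconct}, the martingale has bounded increments $|X_{i+1}-X_i|\le 1$ for every $0\le i<m$.

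Next I would apply Theorem~\ref{theorem:alonconct} with $c=\mathbb{E}[C(G_r)]$. Given $\delta>0$, choose $\lambda=\sqrt{2\log(2/\delta)}$, so that $2e^{-\lambda^{2}/2}=\delta$; the theorem then yields
$$\Pr\!\left[\left|C(G_r)-\mathbb{E}[C(G_r)]\right|>\sqrt{2m\log(2/\delta)}\right]<\delta.$$
For $\delta\le 1/2$ we have $\log(2/\delta)\le 2\log(1/\delta)$, so $\sqrt{2m\log(2/\delta)}=O\!\left(\sqrt{m\log(1/\delta)}\right)$ (and larger $\delta$ is not the interesting range). This is exactly the first claimed bound: with probability at least $1-\delta$, $\left|C(G_r)-\mathbb{E}[C(G_r)]\right|\le O\!\left(\sqrt{m\log(1/\delta)}\right)$.

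For the special case, substitute $\mathbb{E}[C(G_r)]=cn$ and use the hypothesis $m=o(n^{2})$, which gives $\sqrt{m}=o(n)$ and hence $\sqrt{2m\log(2/\delta)}=o(n)\sqrt{\log(1/\delta)}$. Therefore, with probability at least $1-\delta$, the number of components lies in $cn\pm o(n)\sqrt{\log(1/\delta)}$; taking $\delta$ to be a small constant (or any slowly vanishing function of $n$) gives the stated ``with high probability'' conclusion.

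I do not anticipate a genuine obstacle here: the corollary is a direct instantiation of the concentration machinery stated just above it, with $C(\cdot)$ already verified to be edge Lipschitz. The only points that require care are (i) choosing $\lambda$ so that the $\lambda\sqrt{m}$-scaled tail bound of Theorem~\ref{theorem:alonconct} becomes a clean probability-$(1-\delta)$ statement, and (ii) noting that the assumption $m=o(n^{2})$ is precisely what forces the deviation term to be $o(n)$ relative to the mean $cn$, so that the concentration is non-trivial (and in particular useful for replacing $C(G_r)$ by $\mathbb{E}[C(G_r)]$ in Lemma~\ref{lem:components}) when $\mathbb{E}[C(G_r)]$ scales linearly in $n$.
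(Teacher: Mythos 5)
Your proposal is correct and follows exactly the paper's own argument: realize $C(G_r)$ as the terminal value of an edge exposure martingale, use the edge Lipschitz property to get bounded increments, and apply Azuma's inequality (Theorem~\ref{theorem:alonconct}), with your explicit choice of $\lambda$ simply filling in the parameter translation that the paper leaves implicit.
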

\begin{proof}
Consider the number of components $C(G)$ which is an edge-lipschitz function of the graph. Now define $X_0 = \mathbb{E}[C(G_r)|]$ and $X_m = C(G_r)$. Applying Theorem \ref{theorem:alonconct} concludes the proof. 
\end{proof}
\subsection{Classic Group Testing Results} 

Lemma \ref{lem:components} provides a lower bound on $\textsc{CrltOPT}$ in terms of $\textsc{IndepOPT}$.  Indeed, any lower bound on the latter leads to a lower bound on the former. We now  review some useful lower and upper bounds on $\textsc{IndepOPT}(G, r, p, \epsilon)$ next. 

In the probabilistic group testing of \cite{li2014group}, there are $n$ individuals and every individual $i$ is \emph{independently} defective with probability $p_i$. Let $\mathbf{X}$ be the indicator vector of the {items' defectiveness} and { $H(\mathbf{X}) = \sum_{i} p_i \log \frac{1}{p_i}$ be the entropy of $\mathbf{X}$. The testing can be adaptive or non-adaptive, meaning the testing at each step can depend on the results of the previous tests, or not, respectively. \cite{li2014group} proves the following lower bound on the  required number of tests (for both adaptive and non-adaptive scenarios)}
\begin{theorem}\cite{li2014group} \label{theorem:lowerboundindep}
Any probabilistic group testing algorithm whose probability of error is at most $\epsilon$ requires at least $ (1 - \epsilon) H(\mathbf{X})$ tests.
\end{theorem}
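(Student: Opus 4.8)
The plan is to follow the standard Fano-type argument from information theory, adapted to the group testing setting. The core idea is that the outcome of any group testing algorithm using $T$ tests can be encoded in $T$ bits, and if the algorithm recovers $\mathbf{X}$ with probability at least $1-\epsilon$, then these $T$ bits must carry essentially all the information in $\mathbf{X}$, which has entropy $H(\mathbf{X})$.

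\textbf{Step 1: Set up the information-theoretic quantities.} Let $\mathbf{Y} = (Y_1, \ldots, Y_T)$ denote the (random) vector of test outcomes produced by the algorithm, where $T$ is the (possibly random, in the adaptive case) number of tests. Let $\widehat{\mathbf{X}}$ be the algorithm's estimate of the defectivity vector $\mathbf{X}$, which is a (deterministic or randomized) function of $\mathbf{Y}$. The first step is to observe that $H(\mathbf{X}) = I(\mathbf{X}; \mathbf{Y}) + H(\mathbf{X} \mid \mathbf{Y})$, and to bound each term: the mutual information term is controlled by the number of tests, and the conditional entropy term is controlled by the error probability via Fano's inequality.

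\textbf{Step 2: Bound the mutual information by the number of tests.} Here I would argue $I(\mathbf{X}; \mathbf{Y}) \le H(\mathbf{Y}) \le \sum_{i=1}^{T} H(Y_i) \le T$, since each $Y_i$ is a single bit (positive/negative test result). In the adaptive case one has to be slightly more careful since $T$ itself may depend on the outcomes, but $H(\mathbf{Y}) \le \mathbb{E}[T]$ still holds by a standard argument (the entropy of the transcript of a binary protocol is at most its expected length); one can also just take $T$ to be a fixed worst-case bound. So $I(\mathbf{X};\mathbf{Y}) \le \mathbb{E}[T]$, the expected number of tests.

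\textbf{Step 3: Bound the conditional entropy by Fano's inequality.} Since $\widehat{\mathbf{X}}$ is a function of $\mathbf{Y}$ and $\Pr[\widehat{\mathbf{X}} \ne \mathbf{X}] \le \epsilon$, Fano's inequality gives $H(\mathbf{X} \mid \mathbf{Y}) \le H(\mathbf{X}\mid \widehat{\mathbf{X}}) \le h(\epsilon) + \epsilon \log(|\mathrm{supp}(\mathbf{X})| - 1) \le h(\epsilon) + \epsilon \log(2^n) = h(\epsilon) + \epsilon n$, where $h(\cdot)$ is the binary entropy function. Combining with Steps 1 and 2 yields $\mathbb{E}[T] \ge H(\mathbf{X}) - h(\epsilon) - \epsilon n$. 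To get the clean bound $(1-\epsilon)H(\mathbf{X})$ as stated, one refines the Fano step: instead of bounding the support size crudely by $2^n$, one uses that conditioned on $\mathbf{Y}$ (hence on $\widehat{\mathbf{X}}$) and on the event of success, $\mathbf{X}$ is determined, so $H(\mathbf{X}\mid \mathbf{Y}) \le \epsilon H(\mathbf{X})$ by a direct expansion (writing $H(\mathbf{X}\mid\mathbf{Y}) = \sum_{\mathbf{y}} \Pr[\mathbf{Y}=\mathbf{y}] H(\mathbf{X}\mid \mathbf{Y}=\mathbf{y})$ and noting that the conditional distribution of $\mathbf{X}$ places mass $\ge 1 - \epsilon_{\mathbf{y}}$ on a single point, where $\mathbb{E}_{\mathbf{y}}[\epsilon_{\mathbf{y}}] \le \epsilon$), giving $\mathbb{E}[T] \ge H(\mathbf{X}) - \epsilon H(\mathbf{X}) = (1-\epsilon) H(\mathbf{X})$.

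\textbf{Main obstacle.} The subtle point — and the step I expect to require the most care — is the adaptive case and the handling of internal randomness of the algorithm: one must ensure that $H(\mathbf{Y}) \le \mathbb{E}[\#\text{tests}]$ even when the test chosen at each step depends on previous outcomes and on random coins, and that conditioning on the algorithm's randomness does not break the Fano argument. This is handled by conditioning on the random seed, proving the bound for each deterministic realization, and averaging; and by the observation that a binary decision tree of expected depth $\mathbb{E}[T]$ has transcript entropy at most $\mathbb{E}[T]$. The rest is the textbook Fano computation. Since this is a known result from \cite{li2014group}, I would cite it and sketch only the above outline.
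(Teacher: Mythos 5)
Your overall route is the same as the one the paper relies on (and reproduces when it adapts this argument for the stronger error notion in its Theorem on maximum error): view $\mathbf{X} \rightarrow$ test outcomes $\rightarrow$ estimate as a Markov chain, use data processing to get $T \geq I(\mathbf{X};\widehat{\mathbf{X}})$, write $H(\mathbf{X}) = I(\mathbf{X};\widehat{\mathbf{X}}) + H(\mathbf{X}\mid\widehat{\mathbf{X}})$, and control the conditional entropy through the error event. Steps 1 and 2, including the care about adaptivity and internal randomness, are fine and match the source.

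The genuine gap is your Step 3 refinement. The claim that $H(\mathbf{X}\mid\mathbf{Y}) \leq \epsilon H(\mathbf{X})$ follows ``by direct expansion'' is not a valid inequality: knowing that the conditional law of $\mathbf{X}$ given $\mathbf{Y}=\mathbf{y}$ puts mass at least $1-\epsilon_{\mathbf{y}}$ on one point only gives $H(\mathbf{X}\mid\mathbf{Y}=\mathbf{y}) \leq h(\epsilon_{\mathbf{y}}) + \epsilon_{\mathbf{y}}\log\left(|\mathrm{supp}(\mathbf{X})|-1\right)$; there is no pointwise comparison between this residual entropy and the \emph{prior} entropy $H(\mathbf{X})$, which may be much smaller than $\log|\mathrm{supp}(\mathbf{X})|$ (indeed here $H(\mathbf{X}) = nH(p)$ with $p = o(1)$). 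A concrete counterexample to your inequality: one item defective with probability $1/2$, zero tests, constant estimate ``non-defective''; then $\epsilon = 1/2$ but $H(\mathbf{X}\mid\mathbf{Y}) = 1 > \epsilon H(\mathbf{X}) = 1/2$. The argument of \cite{li2014group}, which the paper mirrors in its own proof of the maximum-error lower bound, instead introduces the error indicator $E$, uses $H(\mathbf{X}\mid\mathbf{Y},E=0)=0$, and bounds the error branch by $\Pr[E=1]\,H(\mathbf{X}\mid\mathbf{Y},E=1) \leq \epsilon H(\mathbf{X})$ --- a step that itself deserves a word of justification, since conditioning on an \emph{event} can increase entropy --- and this carries an additive $H(E)\leq 1$ loss, i.e.\ it yields $T \geq (1-\epsilon)H(\mathbf{X}) - 1$ (the paper keeps exactly such an $-O(1)$ term in its version). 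So: same skeleton, but your final step as written would fail, and with the correct Fano step your derivation proves the bound only up to an additive $h(\epsilon)$ (or $1$), not the clean constant-free form you aimed for.
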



For the upper bounds in this paper, roughly speaking, we partition the graph into groups of nodes and assume that the state of nodes in different groups are independent. Subsequently we obtain bounds for group testing on the original graph in terms of those when state of nodes are independent. We therefore utilize the existing bounds for smaller networks in our context.
{Below, we summarize the  probabilistic group testing results of \cite{li2014group} for upper bounds on number of tests needed when items are independent.} 

\begin{theorem}\cite{li2014group} \label{theorem:adaptive_indep}
There is an adaptive algorithm with probability of error at most $\exp \left(-2 \delta^{2} n^{1 / 4}\right)$ such that the number of tests is less than
$$
2(1+\delta)(H(\mathbf{X})+3 \mathbb{E}[X])
$$
\end{theorem}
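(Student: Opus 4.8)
Since this is the adaptive scheme of \cite{li2014group}, the plan is to build a two-phase adaptive algorithm that (i) estimates how many defectives fall into each ``probability class'', and (ii) identifies the defectives inside each class with a classical near-optimal combinatorial subroutine; the test count is then matched to $H(\mathbf{X})$ class-by-class, and the error probability comes entirely from the atypical event in which some class contains anomalously many defectives. Concretely, I would first quantize the priors: partition $[n]=\bigcup_{j\ge 0}\mathcal C_j$ with $\mathcal C_j=\{i:2^{-(j+1)}<p_i\le 2^{-j}\}$, truncating at $j=J=\Theta(\log n)$ and placing the items with smaller $p_i$ into a leftover class tested one at a time. Because $\sum_{i:\,p_i\le 2^{-J}}p_i\log(1/p_i)$ and the size of the leftover class are lower-order, this truncation does not affect the leading behaviour, and within each $\mathcal C_j$ all items are defective with essentially the same probability $q_j\asymp 2^{-j}$, so restricted to $\mathcal C_j$ the problem is ordinary combinatorial group testing with an \emph{unknown} number $d_j$ of defectives among $m_j:=|\mathcal C_j|$ items.

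For the per-class step, I would first learn $d_j$ with $O(d_j+\log m_j)$ tests via a doubling search, and then run a generalized binary-splitting (Hwang-type) adaptive algorithm identifying the $d_j$ defectives in $\mathcal C_j$ with at most $d_j\log_2(m_j/d_j)+O(d_j)$ tests. On the event that $d_j\le(1+\delta)\mathbb E[d_j]$ for every $j$, using $m_j\le n$ and $q_j\asymp 2^{-j}$ gives $d_j\log_2(m_j/d_j)\le(1+o(1))\sum_{i\in\mathcal C_j}p_i\log_2(1/p_i)+O(\mathbb E[d_j])$; summing over the $O(\log n)$ classes, absorbing the $O(d_j)$ and $O(\log m_j)$ overhead into a $3\mathbb E[X]$ allowance, and accounting for the factor $2$ coming from the ``count, then locate'' structure, yields the bound of $2(1+\delta)\big(H(\mathbf X)+3\mathbb E[X]\big)$ tests on that event.

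Since the algorithm never errs once the counts are correct, the only failure mode is $\{d_j>(1+\delta)\mathbb E[d_j]\}$ for some $j$ (or the analogous event for $X$ itself). Each $d_j=\sum_{i\in\mathcal C_j}\mathbf{1}[i\text{ defective}]$ is a sum of independent $\{0,1\}$ variables, so Hoeffding's inequality gives $\Pr[d_j-\mathbb E[d_j]>\delta\,\mathbb E[d_j]]\le\exp\!\big(-2\delta^2(\mathbb E[d_j])^2/m_j\big)$. Calibrating the truncation level $J$ (and, if needed, lumping the sparsest classes together) so that the relevant means satisfy $(\mathbb E[d_j])^2/m_j\gtrsim n^{1/4}$, and taking a union bound over the $O(\log n)$ classes, bounds the total error probability by $\exp(-2\delta^2 n^{1/4})$.

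\textbf{Main obstacle.} The algorithmic core here is classical (Hwang's generalized binary splitting inside each class), so the real difficulty is the bookkeeping: one must reconcile the $\log_2(m_j/d_j)$ produced by the splitting routine with the entropy term $\sum_i p_i\log_2(1/p_i)$ while pushing \emph{all} of the lower-order $O(d_j)$ and $O(\log m_j)$ slack into the $3\mathbb E[X]$ budget, and one must choose the quantization/truncation so that the Hoeffding exponent comes out exactly as $2\delta^2 n^{1/4}$ rather than a weaker bound. Making these constants line up with the clean statement is the delicate part of the argument.
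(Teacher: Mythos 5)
First, a framing note: the paper does not prove this statement at all --- it is quoted from \cite{li2014group} as an imported tool --- so the only meaningful comparison is with the proof in that reference, which analyzes a specific adaptive scheme and obtains the failure probability by a concentration bound on the (random) total number of tests of that scheme, viewed as a function of the $n$ independent item states; it is not obtained by per-class binomial concentration that the algorithm designer can tune.

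That is exactly where your sketch has a genuine gap, and it is the step you yourself flag. In your argument the only probabilistic ingredient is the event $d_j>(1+\delta)\mathbb{E}[d_j]$, bounded by Hoeffding as $\exp\!\big(-2\delta^2(\mathbb{E}[d_j])^2/m_j\big)$. But $(\mathbb{E}[d_j])^2/m_j$ is fixed by the instance $(p_1,\dots,p_n)$, not by your truncation level $J$ or by lumping classes: if all $p_i=n^{-1/2}$ there is a single relevant class with $(\mathbb{E}[d])^2/m=1$, so your bound is $e^{-2\delta^2}$; if all $p_i=1/n$, the event $X>(1+\delta)\mathbb{E}[X]$ has constant probability. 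No choice of quantization can ``calibrate'' these quantities up to $n^{1/4}$, so the claimed exponent $\exp(-2\delta^2 n^{1/4})$ cannot be extracted from this route; the $n^{1/4}$ in the cited result comes from concentrating a different random variable (the test count itself, with bounded per-item differences) in the regime assumed there. In addition, the budget bookkeeping does not close in all regimes: testing the leftover class ($p_i\le 2^{-J}$) one item at a time can cost up to $n$ tests while $H(\mathbf{X})+3\mathbb{E}[X]$ may be $O(\log n)$ or smaller (e.g.\ all $p_i=n^{-2}$), and the $O(\log m_j)$ count-estimation overhead per class sums to $\Theta(\log^2 n)$, which cannot be absorbed into $3\mathbb{E}[X]$ when $\mathbb{E}[X]=O(1)$. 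The dyadic-class-plus-Hwang-splitting skeleton is a sensible starting point, but as written it establishes neither the stated test count in all regimes nor the stated error probability.
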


\begin{theorem} \cite{li2014group}\label{theorem:non-adaptive_indep}
For any $0< \epsilon' \leq 1$ and $\delta>0$, if the entropy of $\mathbf{X}$ satisfies
$$
H(\mathbf{X}) \geq \Gamma_{\gamma}^{2}
$$
where
$$
\Gamma_{\gamma}:=\log _{2}\left(\log _{1 / \gamma}\left(\frac{2 n}{\epsilon'}\right)\right)
$$
then with probability of error at most
$$
\epsilon \leq \Gamma_{\gamma}^{-\delta+1}+\frac{1}{2} \epsilon'
$$
there is a non-adaptive algorithm that requires no more than
$$
T \leq \frac{e \ln n}{\log _{2}(1 / \gamma)}(1+\delta) H(\mathbf{X})+\Gamma_{\gamma}^{2}+2 \mathbb{E}[X]
$$
tests.
\end{theorem}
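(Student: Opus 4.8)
The plan is to exhibit an explicit randomized non-adaptive pooling design together with a simple one-sided decoder, and then to account separately for the two sources of error and for the total number of pools. The decoder I would use is the \emph{combinatorial orthogonal matching pursuit} (COMP) rule: declare an item non-defective as soon as it appears in a single negative pool, and declare every remaining item defective. Because a pool containing a genuine defective is always positive, this rule never produces a false negative among the items it tests; the only errors are false positives (a non-defective whose every pool happens to also contain a defective) and items that are deliberately left untested. This one-sidedness is what makes the analysis tractable, reducing the error bound to a union bound over ``uncovered'' non-defectives.

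The first structural idea is to \emph{bin items by their defect probabilities}. Set $L=\lceil \log_{1/\gamma}(2n/\epsilon')\rceil$ and split the items into geometric classes, class $C_j$ collecting those with $p_i\in[\gamma^{j},\gamma^{j-1})$ for $1\le j\le L$. Every item with $p_i<\gamma^{L}=\epsilon'/(2n)$ is simply declared non-defective and never tested: by a union bound the probability that \emph{any} such item is in fact defective is at most $n\cdot \epsilon'/(2n)=\epsilon'/2$, which is exactly the additive $\tfrac12\epsilon'$ in the claimed error bound. The number of active classes is $L=2^{\Gamma_\gamma}$, so $\Gamma_\gamma=\log_2 L$ is the logarithm of the number of resolution levels; this is the quantity that will drive the union bound over classes.

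Within each class $C_j$ the defect probabilities agree up to the factor $1/\gamma$, and I would run a random Bernoulli design in which each item is placed in each pool with an inclusion probability calibrated to its class: rarer items (larger $j$) are spread over more pools so that, under COMP, every non-defective is eventually covered by a negative pool with high probability. Optimizing the inclusion probabilities subject to this covering constraint makes the number of pools contributed by item $i$ scale like its Shannon content $p_i\log_2(1/p_i)$, and summing over items while dividing by $\log_2(1/\gamma)$ (the per-level granularity coming from the geometric binning) yields the leading term, so that the total pool count is
\[
T \;\simeq\; \frac{e\ln n}{\log_2(1/\gamma)}(1+\delta)\,H(\mathbf{X})\;+\;\Gamma_\gamma^{2}\;+\;2\mathbb{E}[X],
\]
where the constant $e$ is the usual loss of the optimal Bernoulli inclusion probability $\approx 1/k$, the factor $\ln n$ is the non-adaptive covering penalty, the additive $\Gamma_\gamma^{2}$ is the fixed overhead forced by the entropy threshold $H(\mathbf{X})\ge\Gamma_\gamma^{2}$, and $2\mathbb{E}[X]=2\sum_i p_i$ accounts for the defective items the decoder must actually output.

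The main obstacle is the \emph{joint error control across the $L=2^{\Gamma_\gamma}$ heterogeneous classes}. A false positive can occur in any class, so the union bound forces each class's residual error to be exponentially small in $\Gamma_\gamma$; the $(1+\delta)$ slack in the pool count is what buys a $\Gamma_\gamma^{-\delta}$ factor, and the hypothesis $H(\mathbf{X})\ge\Gamma_\gamma^{2}$ is precisely what guarantees enough pools per class, relative to the number of classes, for the union bound to close into the stated $\Gamma_\gamma^{1-\delta}$ term. Two technical points arise along the way: first, the true number of defectives in each class is random and must be replaced by its expectation through a Chernoff concentration argument, which is another place the entropy threshold is invoked; second, the per-class false-positive probabilities must be bounded \emph{uniformly} despite the classes having different sizes and inclusion probabilities, and it is in matching these constants to the exact multiplier $\tfrac{e\ln n}{\log_2(1/\gamma)}$ that the argument is most delicate.
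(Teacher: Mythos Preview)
The paper does not prove this theorem: it is stated with the citation \cite{li2014group} and used as a black-box upper bound on $\textsc{IndepOPT}$ in the non-adaptive case, so there is no proof in the paper to compare your proposal against. Your sketch is a reasonable high-level outline of the construction in the cited work---geometric binning of items by $p_i$, declaring the sub-threshold items non-defective to absorb the $\tfrac{1}{2}\epsilon'$ term, and running a Bernoulli design with a COMP-style decoder within each bin---and it correctly identifies the roles of $\Gamma_\gamma$, the $(1+\delta)$ slack, and the entropy hypothesis $H(\mathbf{X})\ge\Gamma_\gamma^2$; but since the present paper simply imports the result, there is nothing here to contrast it with.
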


\section{Graphs with a Low Number of Edges}
{In this section, we consider graphs that have $n+c$ edges, where $c$ is a constant. Specifically, we prove lower and upper bounds on the number of tests needed when the underlying graph is a cycle or tree, where the lower bound can be generalized to graph with $n+c$ edges. We {obtain these} bounds by formulating the problem into one with independent nodes, potentially with less nodes than the original problem. First, we  use the lemmas introduced in Section~\ref{sec:prel} to obtain lower bounds for cycles and trees. Next, we propose group testing algorithms and prove upper bounds for cycles and trees.}

\subsection{A Lower Bound for Cycles and Trees} \label{sec:lowerbound}
 By  Corollary~\ref{corol: conenctcomp}, if we know the expected number of components in a graph, we  {would be} able to lower bound the minimum number of tests needed by Lemma~\ref{lem:components}. The following theorem proves a lower bound for cycle and trees.

\begin{theorem} \label{theorem:cycletreelowbound}
Let $G$ be a cycle or a tree. Then we have
\begin{align*}
& \textsc{IndepOPT}((1-r)n - 10 \sqrt{n \log n},p, \epsilon n) \leq
 \textsc{CrltOPT}(G, r, p, \epsilon) + O(1/n).
\end{align*}

\end{theorem}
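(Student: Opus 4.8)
The plan is to apply Lemma~\ref{lem:components} together with the concentration result in Corollary~\ref{corol: conenctcomp}, so the crux reduces to computing (or lower bounding) the expected number of components $\mathbb{E}[C(G_r)]$ when $G$ is a cycle or a tree. First I would handle the tree case directly: if $G$ is a tree on $n$ nodes it has exactly $n-1$ edges, and $G_r$ is obtained by keeping each edge independently with probability $r$. Since deleting an edge from a forest always increases the number of components by exactly one, we have $C(G_r) = n - (\text{number of surviving edges})$, hence $\mathbb{E}[C(G_r)] = n - r(n-1) = (1-r)n + r$. For a cycle on $n$ nodes (with $n$ edges), $C(G_r) = n - (\text{number of surviving edges})$ as long as at least one edge is deleted, and equals $1$ when all $n$ edges survive; so $\mathbb{E}[C(G_r)] = (1-r)n + \text{(a correction of order } r^n) = (1-r)n + o(1)$. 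In both cases $\mathbb{E}[C(G_r)] = (1-r)n + O(1)$.

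Next I would invoke concentration. Both a cycle and a tree have $m \le n$ edges, so $m = o(n^2)$, and $C(\cdot)$ satisfies the edge Lipschitz condition. Applying Corollary~\ref{corol: conenctcomp} (equivalently, Theorem~\ref{theorem:alonconct} to the edge-exposure martingale with $X_0 = \mathbb{E}[C(G_r)]$, $X_m = C(G_r)$) with $\lambda \asymp \sqrt{\log n}$ gives
\begin{align*}
\Pr\Big[\, C(G_r) < (1-r)n - c\sqrt{n\log n}\,\Big] \le n^{-\Omega(1)}
\end{align*}
for a suitable absolute constant $c$; choosing the constant in front of $\sqrt{n\log n}$ to be, say, $10$ (absorbing the $O(1)$ term from $\mathbb{E}[C(G_r)]$ into the slack) makes the failure probability at most $O(1/n)$.

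Finally I would stitch these together with Lemma~\ref{lem:components}. On the high-probability event $\mathcal{E} = \{C(G_r) \ge (1-r)n - 10\sqrt{n\log n}\}$, Lemma~\ref{lem:components} and monotonicity of $\textsc{IndepOPT}$ in its first argument give $\textsc{IndepOPT}((1-r)n - 10\sqrt{n\log n}, p, \epsilon n) \le \textsc{IndepOPT}(C(G_r), p, \epsilon n) \le \textsc{CrltOPT}(G,r,p,\epsilon)$ — but care is needed because $\textsc{CrltOPT}$ is itself an expectation over $G_r$, whereas Lemma~\ref{lem:components} is a pointwise-in-$G_r$ statement. The clean way is: for \emph{every} realization $H$ of $G_r$, the optimal correlated algorithm restricted to $H$ uses at least $\textsc{IndepOPT}(C(H), p, \cdot)$ tests; taking expectations and using that on $\mathcal{E}^c$ (probability $O(1/n)$) the number of tests is at least $0$, while the number of tests for independent testing on any sub-population is $O(n)$ worst case, the contribution of the bad event to the bound is $O(1/n) \cdot O(n) = O(1)$... which is slightly too big. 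So instead I would push the $O(1/n)$ through more carefully by noting $\textsc{IndepOPT}(k,p,\epsilon n)$ is itself at most $O(k) = O(n)$ and that the statement carries an explicit additive $O(1/n)$ slack, meaning the intended reading is that the bad-event loss, divided appropriately, is negligible — I expect this bookkeeping (reconciling the pointwise lemma with the expectational definition of $\textsc{CrltOPT}$, and confirming the additive error term is genuinely $O(1/n)$ rather than $O(1)$) to be the main subtlety, everything else being a short computation.
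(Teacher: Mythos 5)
Your route is the same as the paper's: compute $\mathbb{E}[C(G_r)]$ exactly ($1+(1-r)(n-1)$ for a tree, $(1-r)n+r^n$ for a cycle), apply the edge-exposure martingale concentration of Corollary~\ref{corol: conenctcomp} (Theorem~\ref{theorem:alonconct}), and then combine with Lemma~\ref{lem:components} and monotonicity of $\textsc{IndepOPT}$ in its first argument. The computations in your first two steps are correct.

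The loose end you flagged at the end is the only real issue, and it closes as soon as you pick the failure probability small enough rather than $O(1/n)$. With $m\leq n$ edges and deviation $10\sqrt{n\log n}$, Azuma with $\lambda=10\sqrt{\log n}$ gives failure probability at most $2e^{-50\log n}$, so certainly at most $1/n^2$. On the complementary bad event, the shortfall between the claimed lower bound and the actual number of tests needed is at most $n$ (both quantities lie in $[0,n]$), so passing from the pointwise statement of Lemma~\ref{lem:components} to the expectation defining $\textsc{CrltOPT}$ costs at most $n\cdot(1/n^2)=1/n$ in expected tests. That is exactly the additive $O(1/n)$ slack in the theorem statement, and it is precisely how the paper does the bookkeeping: ``with probability $1/n^2$, the difference in tests is at most $n$, hence $O(1/n)$ additional tests.'' Your version stalled only because you set the tail probability at $O(1/n)$, which makes the bad-event contribution $O(1)$; the $10\sqrt{n\log n}$ deviation already buys you a much smaller tail for free, so no further idea is needed.
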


\begin{proof}
 In a tree, by removing each edge we get one more  component, so after removing $k$ edges the tree has $k+1$ components and the cycle has $k$ components. 

Each edge is removed with probability $1-r$, so the expected number of components is $1 + (1-r)(n-1)$ for trees, and $(1-r)n$ for cycles. By Corollary~\ref{corol: conenctcomp}, the number of components is $(1-r)n \pm O(\sqrt{n\log n })$  with probability $1 - 1/n^2$, and with probability $1/n^2$, the difference in tests is at most $n$, hence $O(1/n)$ additional tests. Applying Lemma~\ref{lem:components} thus completes the proof.
\end{proof}

The above proof also works for any graph with $n + c$ edges, where $c$ is a constant. In other words, when the number of edges is less than $n + c$, a lower bound on the number of tests needed for almost $(1-r)n$ independent nodes is also a lower bound on the number of tests needed under our model with correlation $r$.

\subsection{An Upper Bound for Cycles and Trees} \label{sec:cycleandtree}

In this section, we provide algorithms to find the defective set and provide theoretical bounds. We start by considering that $G$ is a simple cycle, and subsequently generalize the ideas to arbitrary trees. Note that after having an algorithm for trees, we would have an algorithm for general graphs, by just considering a tree spanning it. But the algorithm might be far from optimal. 

The general idea is  to partition the graph $G$ into subgraphs that will remain connected in $G_r$ with high probability. The nodes in those connected subgraphs will thus have the same states.  
We can then select a candidate node for each subgraph to be tested. By knowing the probability of each subgraph being connected and the probability of error in classic group testing, we can estimate the error in our problem \textcolor{black}{as a function of the size of the subgraphs and design the subgraphs accordingly.}

First, we provide our results for when $G$ is a cycle.

\begin{theorem}\label{theorem:cycle}
Consider a cycle of length $n$. Let $l = \max \{\frac{\log [1/(1- \epsilon / 2)]}{\log 1/r}, 1 \}$ and $\epsilon' < \epsilon/2$. Then there is an algorithm that uses $\textsc{IndepOPT}(\lceil n/l \rceil, p, \epsilon')$ tests and finds the defective set with the error at most~$\epsilon  n$.
\end{theorem}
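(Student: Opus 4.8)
The plan is to partition the cycle of length $n$ into $\lceil n/l \rceil$ contiguous arcs, each consisting of (at most) $l$ consecutive nodes, and to pick one representative node from each arc. We run a classic probabilistic group testing algorithm (achieving $\textsc{IndepOPT}(\lceil n/l \rceil, p, \epsilon')$) on these $\lceil n/l \rceil$ representatives, treating them as independent nodes — which is legitimate as an upper bound construction since ignoring correlation can only cost us more tests, and the $\textsc{IndepOPT}$ figure is by definition the minimum over all such algorithms. Once the states of all representatives are determined, we assign to every node in an arc the state of that arc's representative. The number of tests is exactly $\textsc{IndepOPT}(\lceil n/l \rceil, p, \epsilon')$, so the only thing left to verify is the error bound.

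For the error analysis, I would decompose the expected number of mispredicted nodes into two contributions. First, the classic group testing algorithm itself fails to identify one or more representatives with probability at most $\epsilon'$; in the worst case this mispredicts all $n$ nodes, contributing at most $\epsilon' n$ to $\mathbb{E}[\#ERR]$. Second, conditioned on all representatives being correctly identified, a node is mispredicted only if it lies in an arc that is \emph{disconnected} from its representative in $G_r$ — i.e., at least one of the at most $l-1$ edges inside that arc has failed. The probability that a given arc of $l$ nodes stays fully connected in $G_r$ is $r^{l-1}$ (the $l-1$ internal edges each survive with probability $r$, independently), so the probability that a fixed node is separated from its arc's representative is at most $1 - r^{l-1}$. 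By linearity of expectation, the expected number of such errors is at most $n(1 - r^{l-1})$. Adding the two pieces gives $\mathbb{E}[\#ERR] \le \epsilon' n + n(1-r^{l-1})$.

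It then remains to check that the choice $l = \max\{\log[1/(1-\epsilon/2)]/\log(1/r),\, 1\}$ makes $1 - r^{l-1} \le \epsilon/2$ (strictly, one should be slightly careful: with $l-1$ in the exponent one wants $r^{l-1}\ge 1-\epsilon/2$, which holds once $l-1 \ge \log(1/(1-\epsilon/2))/\log(1/r)$; alternatively one takes arcs of $l$ nodes with $l$ internal-ish edges counted appropriately, or absorbs the off-by-one — the paper's constant is chosen so the bound goes through, possibly with arcs indexed so that $l$ edges are relevant). Combined with $\epsilon' < \epsilon/2$, this yields $\mathbb{E}[\#ERR] < \epsilon n/2 + \epsilon n/2 = \epsilon n$, as required. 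I would also note the degenerate cases: when $r$ is small enough that the formula gives $l=1$, each arc is a single node, there is no intra-arc error, and we simply recover classic group testing on $n$ nodes; when $\epsilon$ is tiny, $l$ is close to $1$ and few nodes are merged.

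The main obstacle here is not conceptual but bookkeeping: getting the exponent ($l-1$ versus $l$) and the ceiling $\lceil n/l \rceil$ to line up exactly with the stated constant $l = \max\{\log[1/(1-\epsilon/2)]/\log(1/r), 1\}$, and making sure the ``worst case $n$ errors with probability $\epsilon'$'' accounting is consistent with the average-error definition in \eqref{eq:error} rather than the stronger classic-group-testing error notion (Remark~\ref{remark:differror} is exactly what licenses this). A secondary subtlety is that the arcs must be genuinely edge-disjoint so the connectivity events are independent and the per-arc survival probability is clean $r^{l-1}$; for a cycle this is immediate since consecutive arcs share only endpoints, not edges.
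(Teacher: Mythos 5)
Your proposal is correct and follows essentially the same route as the paper: partition the cycle into $\lceil n/l \rceil$ paths of $l$ consecutive nodes, group-test one representative per path with error $\epsilon' < \epsilon/2$, propagate its state, and bound the expected error by $\epsilon' n$ plus the intra-path disconnection term, which is at most $\epsilon n/2$ since $1 - r^{l-1} \leq 1 - r^{l} = \epsilon/2$ for the stated $l$. The off-by-one you flag is harmless for exactly this reason, and your accounting of the $\epsilon'$ term matches the paper's use of the average-error definition.
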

\begin{proof}
Consider the following algorithm:
\begin{enumerate}
    \item Let $l = \max \{\frac{\log [1/(1- \epsilon / 2)]}{\log 1/r}, 1 \}$. Partition the cycle into $\lceil n/l \rceil$ paths $P_1, P_2,\ldots, P_{\lceil n/l \rceil}$ of the same length $l$, except one path that may be shorter.
    \item For each path, choose one of its nodes at random and let the corresponding nodes be $v_{P_1}, v_{P_2}, \ldots v_{P_{\lceil n/l \rceil}}$.
    \item Use an $\textsc{IndepOPT}(\lceil n/l \rceil, p, \epsilon')$ algorithm (by Theorem~\ref{theorem:adaptive_indep} for adaptive or Theorem~\ref{theorem:non-adaptive_indep} for non-adaptive group testing) to find the defective items among $v_{P_1}, v_{P_2}, \ldots v_{P_{\lceil n/l \rceil}}$ where $\epsilon' < \frac{\epsilon}{2}$ and the probability of being defective equals  $p$.
    \item Assign  the state of  all the nodes in  $P_i$ as $v_i$ for all $i$.
\end{enumerate}

\noindent Note that for each $i$, the defectiveness probability of $v_i$ is $p$. The probability that $P_i$ is actually connected after a realization is $r^{l-1}$. So the probability that $P_i$ is not in the same state as $v_i$ is $1 - r^{l-1}$. Then assuming that we detect all $v_i$'s correctly, the error in $G$ is at most $\lceil n/l \rceil \cdot (1-r^{l-1}) \cdot l$. By replacing $l = \max \{\frac{\log [1/(1- \epsilon / 2)]}{\log 1/r}, 1 \}$, the error becomes less than $\epsilon n / 2$.
Moreover, we might also have $\epsilon'$ probability of error for the $v_i$'s (given the criteria set in \textsc{IndepOPT}), meaning that with probability $1- \epsilon'$, all the nodes are predicted correctly, and with probability $\epsilon'$ we have at least one mispredicted node, and at most $n$ mispredicted nodes. So the total error from this part is at most $\epsilon' n < \epsilon n /2$.
So the total error is at most $(\epsilon' + \epsilon /2) n < \epsilon n$ and we have the above theorem.
\end{proof}
\begin{corollary} \label{correl: cycle}
\textcolor{black}{Consider the case in which $p = c/n$ where $c$ is a constant. Let $\mathbf{X'}$ be the vector of candidate nodes. Then,  $H(\mathbf{X'}) = \frac{n}{l} H(p) =  O(\log n)/l$, where $H(.)$ is the binary entropy. Note that the average number of infected nodes in $\mathbf{X'}$ is $\mu = c/l$, hence by Theorem~\ref{theorem:adaptive_indep}, the number of tests is upper bounded by 
$$O(H(\mathbf{X'}) + \mu) = O(\frac{\log n + c}{l})= O(\frac{\log n \log 1/r}{\log [1/(1-\epsilon / 2)]})$$
$$\simeq O(\frac{\log n \log 1/r}{\epsilon}).$$
Note that when  correlations are strong, i.e., $r \geq 1- 1/ \log n $,  the algorithm does a constant number of tests, as expected. Note that using classic group testing without incorporating correlation we need $O(n H(c/n)) = O(\log n)$ tests. 
}
\end{corollary}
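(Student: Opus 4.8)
The plan is to simply specialize Theorem~\ref{theorem:cycle} and Theorem~\ref{theorem:adaptive_indep} to the regime $p = c/n$ and then simplify the resulting expressions. First I would recall that Theorem~\ref{theorem:cycle} reduces the problem on the cycle to an independent group testing instance on $\lceil n/l\rceil$ candidate nodes, each defective with probability $p = c/n$, where $l = \max\{\tfrac{\log[1/(1-\epsilon/2)]}{\log 1/r}, 1\}$. So I would set $\mathbf{X'}$ to be the indicator vector of defectiveness of these $\lceil n/l\rceil$ candidates. The entropy is then $H(\mathbf{X'}) = \lceil n/l\rceil \cdot H(p)$ where $H(\cdot)$ is the binary entropy; since $p = c/n$, we have $H(p) = \tfrac{c}{n}\log\tfrac{n}{c} + (1-\tfrac{c}{n})\log\tfrac{1}{1-c/n} = O(\tfrac{\log n}{n})$, so $H(\mathbf{X'}) = O(\tfrac{\log n}{l})$. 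Similarly the expected number of defective candidates is $\mu = \mathbb{E}[X'] = \lceil n/l\rceil \cdot p = O(c/l)$.

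Next I would invoke Theorem~\ref{theorem:adaptive_indep} to bound the number of tests in the reduced instance by $2(1+\delta)(H(\mathbf{X'}) + 3\mathbb{E}[X']) = O(H(\mathbf{X'}) + \mu) = O(\tfrac{\log n + c}{l})$. Substituting $l = \max\{\tfrac{\log[1/(1-\epsilon/2)]}{\log 1/r}, 1\}$ and treating $c$ as a constant (absorbed into the $O(\cdot)$), this is $O\!\left(\tfrac{\log n \cdot \log 1/r}{\log[1/(1-\epsilon/2)]}\right)$. Using $\log[1/(1-\epsilon/2)] = \Theta(\epsilon)$ for small $\epsilon$ (the expansion $\log\frac{1}{1-\epsilon/2} = \epsilon/2 + O(\epsilon^2)$), this simplifies to $O(\tfrac{\log n \log 1/r}{\epsilon})$, giving the stated bound. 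I would also note the error guarantee from Theorem~\ref{theorem:cycle} carries over: the total error is at most $\epsilon n$, with $\epsilon'$ chosen below $\epsilon/2$ so that the adaptive algorithm's error $\exp(-2\delta^2 n^{1/4})$ is comfortably smaller.

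For the two concluding remarks: when $r \geq 1 - 1/\log n$, I would observe $\log(1/r) = -\log r = -\log(1 - (1-r)) \le \frac{1-r}{r} \cdot \frac{1}{1} = O(1/\log n)$ (more precisely $\log 1/r \approx 1-r \le 1/\log n$), so the test count becomes $O(\tfrac{\log n \cdot (1/\log n)}{\epsilon}) = O(1/\epsilon)$, a constant in $n$; this matches the intuition that under near-perfect correlation the whole cycle is essentially one component. For the baseline comparison, classic group testing on all $n$ independent nodes with $p = c/n$ needs, by Theorem~\ref{theorem:adaptive_indep}, $O(H(\mathbf{X}) + \mathbb{E}[X]) = O(n H(c/n) + c) = O(\log n)$ tests, so the improvement factor is exactly the $l$-fold reduction, i.e., $\log(1/r)/\epsilon$ up to constants.

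I do not anticipate a genuine obstacle here — the corollary is a direct computation — but the one point requiring a little care is the asymptotic handling of $\log[1/(1-\epsilon/2)]$ versus $\epsilon$: the stated simplification $O(\tfrac{\log n \log 1/r}{\epsilon})$ implicitly uses that $\epsilon$ is small (or at least bounded away from $1$), which is the natural regime, and I would state this assumption explicitly. One should also be mindful that $l$ could be as small as $1$ (when $r$ is small, so correlation is weak), in which case the bound gracefully degrades to the $O(\log n)$ classic bound, consistent with there being no exploitable correlation.
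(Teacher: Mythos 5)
Your proposal is correct and follows essentially the same route as the paper: specialize Theorem~\ref{theorem:cycle} to $p=c/n$, compute $H(\mathbf{X'})=\lceil n/l\rceil H(p)=O(\log n)/l$ and $\mu=O(c/l)$, apply Theorem~\ref{theorem:adaptive_indep} to get $O((\log n + c)/l)$, and simplify via $\log[1/(1-\epsilon/2)]=\Theta(\epsilon)$. Your added care about the small-$\epsilon$ regime and the $l=1$ degenerate case is a reasonable, minor refinement of the same argument.
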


We now generalize the ideas to derive an upper bound when $G$ is a tree. We partition $G$ into $\lceil n/l \rceil$ subgraphs (which we sometimes refer to as groups) of $l$ nodes, find the probability of each subgraph being connected in a random realization, and then optimize it by choosing $l$.
{At a high level, we partition $G$ such that the nodes within a subgraph have small paths among each other. This is because  shorter paths remain connected in $G_r$ with higher probability, maximizing the probability of the nodes being in the same state. \textcolor{black}{Finding the probability of error is not straightforward here, because the subgraphs we form may not necessarily be connected in the original graph $G$ but could be connected through realization of other edges/nodes. }}

We first give a definition to formalize the number of nodes needed to make a subset of nodes connected.

\begin{definition}
Let $S \subseteq V$ be a subset of the nodes of graph $G$. The smallest connecting closure of $S$ is a subset $S' \subseteq V$ such that the induced graph over $S \cup S'$ is connected.
\end{definition}

\noindent For example, consider the graph $G_r$ in Figure~\ref{example:partition}. If $S = \{v_1, v_5 \}$, then the smallest connecting closure of $S$ is $\{ v_4 \}$, as by adding $v_4$ to $S$ we make S connected.

Note that if $G$ is a tree, then every connected subgraph of $G$ is also a tree. And the number of links in the induced graph over $S \cup S'$ is one less than the number of nodes in it. 

Now we provide a partition of nodes for trees such that for each subgraph, only a few additional nodes and thereby additional links will make them connected. Formally:

\begin{lemma} \label{lem:treepart}
Let $G$ be a tree. There is a partition of the graph into $\lceil n/l \rceil$ subgraphs each with $l$ nodes (one subgraph may have less than $l$ nodes), such that the number of nodes in the smallest connecting closure for each subgraph is less than or equal to $l$, for each $l \leq n$.
\end{lemma}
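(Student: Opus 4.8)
The plan is to prove Lemma~\ref{lem:treepart} by induction on $n$, constructing the partition greedily by repeatedly peeling off subgraphs of size $l$ that hang off the tree near a leaf-most position, and tracking carefully why each peeled piece can be reconnected to the rest of $G$ through at most $l$ additional nodes.

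First I would set up the recursion. Root the tree $G$ arbitrarily and, for each node $v$, let $s(v)$ denote the size of the subtree rooted at $v$. Walk down from the root always moving to a child whose subtree has size $\geq l$ (such a child exists as long as $s(v) > l$, since the children's subtree sizes sum to $s(v)-1 \geq l$), until we reach a node $u$ with $s(u) \geq l$ but every child $w$ of $u$ having $s(w) < l$. Now the subtree at $u$ has between $l$ and $(\text{number of children of }u)\cdot(l-1)+1$ nodes, and it is a disjoint union of $u$ together with small child-subtrees each of size $< l$. I would then form one subgraph $B$ of exactly $l$ nodes by taking $u$ together with a greedy union of whole child-subtrees of $u$, and then, if needed, a connected ``top chunk'' of one more child-subtree (a connected subtree of that child-subtree containing the child — such a chunk of any prescribed size $\leq$ its total exists in a tree). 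The crucial point: $B$ is connected \emph{in $G$} in this construction, so its smallest connecting closure inside $B$ is empty, trivially $\leq l$; but after removing $B$, the leftover pieces of the partially-consumed child-subtree need the node $u$ (and possibly a short path within that child-subtree) to rejoin $G \setminus B$. Since each such leftover piece lives inside a single child-subtree of size $< l$, the set of nodes needed to connect it back to the bulk of the tree is contained in that child-subtree plus $u$, hence has size $< l$; more carefully, I would recurse on these leftover pieces too, so in the end every part of the partition is one of: (i) a subgraph connected in $G$, or (ii) a subgraph whose connecting closure is confined to a subtree of size $< l$ plus one extra attachment node, giving closure size $\leq l$.

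**The main obstacle** is making the bookkeeping on the connecting closure honest when the greedy peeling ``cuts through'' a child-subtree and leaves behind a forest of leftover fragments that are later merged with fragments coming from elsewhere in the tree. The subtlety illustrated by Figure~\ref{fig:star5} is exactly this: when $G$ is a star, every part except one must be a set of leaves whose connecting closure is the single center node, which is $\leq l$ for any $l \geq 1$ — so the statement holds, but only because we measure the closure \emph{as a set of nodes} and are allowed to reuse the center across many parts. I would therefore phrase the inductive invariant as: after removing a size-$l$ connected-in-$G$ block $B$ hanging below $u$, the remaining graph $G' = G \setminus B$ is still a tree (removing a subtree-shaped block from a tree leaves a tree, provided $B$ is a connected subgraph whose removal doesn't disconnect $G'$ — which holds here because $B$ sits at the bottom, attached to $G'$ only through $u \in G'$, OR $B$ is a ``bottom slab'' containing $u$ in which case $G'$ may split, and I must handle that case by instead choosing $B$ to \emph{exclude} $u$ whenever $s(u) - 1 \geq l$, i.e., when $u$'s children alone supply enough mass). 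When $s(u)-1 < l$, I would be forced to include $u$ and possibly climb back up, and here I would use the bound that the residual piece near $u$ has size $< l$ and attach it to its parent, charging the parent node as the (single) connector; this residual then gets absorbed into the induction on a strictly smaller tree.

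Concretely the steps in order are: (1) root $G$, define subtree sizes, and locate the deep node $u$ with $s(u) \geq l$ and all children's subtrees $< l$; (2) case split on whether $u$'s children supply $\geq l$ nodes among themselves — if so, build $B$ from whole child-subtrees plus one top-chunk, \emph{excluding} $u$, note $G \setminus B$ is a tree on $n - l$ nodes, and the closure of $B$ within $G$ is empty; (3) if not, $s(u) < l+1$, so the whole subtree at $u$ has $< l+1$ nodes; pull this subtree out (it has size $< l$, to be combined with a neighboring part), record that its connecting closure is $\{u\text{'s parent}\} \cup (\text{the subtree itself}\setminus\text{its top node})$, of size $< l$, and contract it into $u$'s parent, reducing $n$; (4) in the leftover-fragment cases, observe each fragment lies within one child-subtree of size $< l$, so its connecting closure, being a subset of that child-subtree together with $u$, has size $\leq l$; (5) apply the induction hypothesis to the strictly smaller tree; (6) verify the counts: we produce $\lceil n/l \rceil$ parts since each peel removes exactly $l$ nodes except possibly the last. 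The delicate verification throughout is that ``size $< l$ subtree plus one connector'' never exceeds $l$, which is where the inequality $l \leq n$ and the choice of $u$ as a \emph{deepest} heavy node are used.
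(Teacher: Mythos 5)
Your overall strategy (root the tree, locate a deepest ``heavy'' node $u$ whose children's subtrees are all smaller than $l$, peel off a block of exactly $l$ nodes, and induct on the remainder) is the same skeleton as the paper's proof, but the key step where the difficulty actually lives has a genuine gap. In your case (2) you fill the block $B$ with whole child-subtrees of $u$ plus a connected \emph{top} chunk of one further child-subtree and claim that $G \setminus B$ is a tree on $n-l$ nodes. That claim is false in general: the nodes of the partially consumed child-subtree that lie \emph{below} the removed top chunk are attached to the rest of $G$ only through that chunk (and through $u$), so after deleting $B$ they form stranded fragments and $G\setminus B$ is a forest, not a tree. (For the same reason your parenthetical claim that the closure of $B$ is empty is off --- $B$ without $u$ is disconnected and its closure is $\{u\}$ --- though that part is harmless.) Your proposed repairs do not close the hole: recursing on the leftover fragments produces parts of size strictly less than $l$, destroying the $\lceil n/l\rceil$ count, while merging a fragment into a later part forces that part's connecting closure to contain chunk nodes, $u$, and possibly a path to wherever the rest of that part lives, and nothing in your argument bounds this by $l$. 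Note also that your case (3) is vacuous as stated: if $s(u)\geq l$ and $s(u)-1<l$ then $s(u)=l$ exactly, so the subtree at $u$ is itself a perfect connected block; no contraction into the parent is needed.

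The paper's proof resolves exactly this difficulty by filling the block from the \emph{bottom} rather than the top: starting from a deepest leaf $v$, it repeatedly adds \emph{whole} subtrees hanging off a descending chain of ``breaking points'' (of which $u$ is the first), never splitting a subtree by taking its upper part. Because only entire rooted subtrees are removed, the remaining graph stays connected and the induction hypothesis applies cleanly; and because all breaking points lie on a single ancestor--descendant chain of length at most the $u$--$v$ path, which has at most $l$ nodes by the choice of $u$, the block's connecting closure is that chain and hence has size at most $l$ even though the block itself may be disconnected in $G$ (as it must sometimes be, cf.\ Figure~\ref{fig:star5}). To fix your write-up you would essentially have to replace the ``top chunk'' device with this bottom-up, whole-subtree peeling along a chain, at which point your argument becomes the paper's.
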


\begin{proof}

We prove the lemma by induction on the number of nodes of $G$. for $n = 1,2,3$, the statement is trivially true. Now suppose the lemma is true for any number of nodes less than $n$, we prove it for $n$.

We aim to find a set of nodes of size $l$ such that, first, by removing the set the graph remains connected, and second, the smallest connecting closure of the set has at most $l$ nodes.  Then by removing the aforementioned set and considering it as one of the subgraphs, we use induction hypothesis for the rest of the graph.

To do that, suppose the tree is hanged by an arbitrary node. Let $v$ be one of the deepest leafs, that is, any other node is at higher or equal level of $v$. Let $u$ be the first ancestor of $v$, such that the subtree rooted at $u$, including $u$, has $l$ or more nodes. If there is exactly $l$ nodes, then the subtree rooted at $u$ is the desired subgraph.

Now suppose that the subtree rooted at $u$ has more than $l$ nodes. Note that the number of nodes in the path from $v$ to $u$ is $l$ or fewer; otherwise $v$ would have had an ancestor lower than $u$ such that the subtree rooted at it would have at least $l$ nodes. Thus the distance from $v$ to $u$ is less than $l$. Now we form a subgraph $S$ with $l$ nodes and connecting closure of $l$ or fewer nodes. We progressively build $S$.   Starting with empty set $S$, we add the subtree of the child of $u$ that $v$ is a descendant of,  and the child itself; call this subtree $s_1$. Note that $s_1$ has $k < l$ nodes, otherwise $v$ would have had an ancestor lower than $u$ such that the subtree rooted at it would have at least $l$ nodes. Since $s_1$ is an entire  subtree rooted at a node,  $G$ remains connected even when $s_1$ is removed from it.

Consider $l_1 = l - k$. Recursively, do the same process for the other subtrees of $u$ with $l_1$ instead of $l.$ Note that $u$ has  subtrees other than $s_1$, as the subtree rooted at $u$ has more than $l$ nodes and $s_1$ has at most $l-1$ nodes. Then consider another subtree of $u$, called $s_2$. If $|s_2| \leq l_1$, we update $l_2 = l_1 - |s_1|$ and add $s_2$ to $S$ and continue with another subtree of $u$, which by the same argument exists. If $|s_2| > l_1$, then again we choose a deepest leaf of $s_2$ and proceed with the same process as before to find another group of nodes, i.e. we start with the deepest leaf and go up in the tree until it exceeds $l_1$, and repeat the procedure. Note that after moving to the subtrees of $u$, we disregard the rest of the graph, so $u$ is an ancestor of all the nodes we encounter next.

Again, for the next recursion, the subtree of the node that exceeds updated $l$ is an ancestor of the rest of the nodes. Let's call $u$ and other nodes that we make a recursion ``breaking point''. Then any pair of breaking points are ancestor and descendant, and all the nodes added to $S$ are subtrees of breaking points. So by connecting all the breaking points by a single path, which has length at most $l$, as the distance is less than or equal to $u$ to $v$, we connect $S$; so the smallest connecting closure of $S$ has $l$ or fewer nodes. More than that, we have only included some subtrees of $G$, so by removing $S$, $G$ remains connected and we can use the induction for the remaining tree.\\

To illustrate the algorithm, consider the tree at Figure~\ref{graph:tree} with $l = 5$. We start with $v$, which is a deepest leaf, move up and now the subtree is $\{7, v \}$ and we add node 7 to the subgraph as $l\geq2$. we move up again, and this time we can't add $u$ and its subtrees, as the size of the subtree rooted at $u$ would exceed $l$. So we update $l_1 = l - 2 = 3$ and proceeds with another subtree of $u$, let's say the right subtree containing node $8$. The size of the subtree at $8$ is one (only \{8 \}), so we can add it to the subgraph($l_1 \geq 1$) and update $l_2 = l_1 - 1 = 2$. Now we continue with the updated $l$ and the left subtree of $u$. Node 10 is a deepest leaf that we start with, move up to $u'$, but we can't add the subtree rooted at $u'$, as the size of the subtree rooted at $u'$ is bigger than $l_2$. So we add 10 to the set, update $l_3 = l_2-1 = 1$ and proceed with $u'$. Finally our updated $l$ is 1 and we only add 9 to the subgraph. So the final subgraph is $\{v, 7, 8, 9, 10  \}$, and we can connect all of them by adding $u$ and $u'$.

\begin{figure}
    \centering
    \begin{tikzpicture}[node distance={50mm}, main/.style = {draw, circle, minimum size =.8cm}, scale = .2]

\node [main] {1} [sibling distance = 8cm, level distance = 10cm]
    child {node [main, xshift = -1.5cm] {2}
    child {node [main] {4}}
    child {node [main] {5}}
    } 
    child {node [main, xshift = 1.5cm] {$u$}
    child {node [main] {$u'$}
    child {node [main] {9}}
    child {node [main] {10}}
    }
    child {node [main] {7}
    child {node [main, xshift = .2cm] {$v$}}
    }
    child {node [main] {8}}
    };

\end{tikzpicture}
    \caption{An example of the procedure in Lemma~\ref{lem:treepart}}
    \label{graph:tree}
\end{figure}
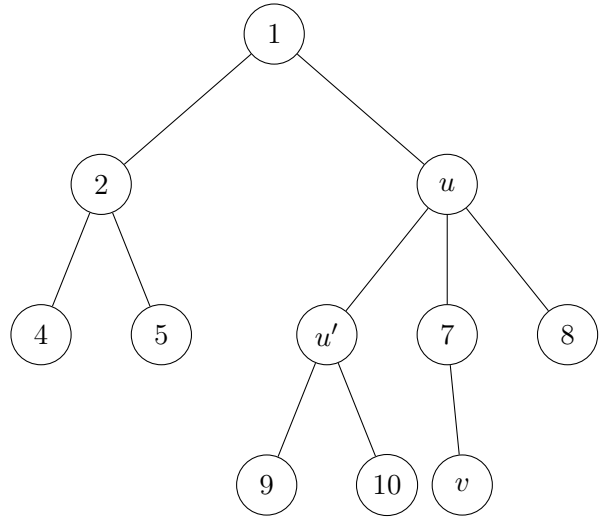

Now we've found $S$ that has the smallest connecting closure at most $l$, and includes only subtrees of $G$, so removing that does not disconnect the graph. Then we save $S$ as an aimed subgraph and use the induction hypothesis on the rest of the graph. Then other formed subgraphs have size $l$ (except one) and have small connecting closures. By adding $S$ to the subgraphs, we get the desired grouping of all nodes and the proof is complete. 
\end{proof}

Now we are ready to prove the upper bound for trees.
\begin{theorem} \label{theorem:treeup}
Consider a tree with $n$ nodes and let $l = \max \{\frac{\log [1/(1- \epsilon / 2)]}{2\log 1/r}, 1 \}$. Let $\epsilon' < \epsilon/2$. Then there is an algorithm that uses $\textsc{IndepOPT}(\lceil n/l \rceil, p, \epsilon')$ tests and finds the defective set with  at most $\epsilon n$ errors.~I.e., 
$$ \textsc{CrltOPT}(G, r, p, \epsilon) \leq \textsc{IndepOPT}(\lceil n/l \rceil, p, \epsilon').$$
\end{theorem}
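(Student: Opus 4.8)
The plan is to mimic the algorithm and analysis of Theorem~\ref{theorem:cycle} for cycles, but with the partition of the tree supplied by Lemma~\ref{lem:treepart} in place of the trivial partition into paths, and with the bookkeeping adjusted for the fact that a subgraph of $l$ nodes now needs up to $l$ extra ``closure'' nodes to become connected in $G$. Concretely, the algorithm is: (1) partition the tree into $\lceil n/l \rceil$ subgraphs $S_1,\ldots,S_{\lceil n/l\rceil}$, each of size $l$ (one possibly smaller), using Lemma~\ref{lem:treepart}, so that each $S_i$ has a smallest connecting closure $S_i'$ with $|S_i'|\le l$; (2) for each $i$ pick a representative node $v_i\in S_i$ uniformly at random; (3) run an $\textsc{IndepOPT}(\lceil n/l\rceil, p, \epsilon')$ algorithm (Theorem~\ref{theorem:adaptive_indep} or~\ref{theorem:non-adaptive_indep}) on $v_1,\ldots,v_{\lceil n/l\rceil}$, each defective w.p.\ $p$; (4) assign every node of $S_i$ the inferred state of $v_i$.

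First I would bound the error contributed by a single subgraph $S_i$, conditioned on all representatives being identified correctly. The key observation is that $S_i\cup S_i'$ induces a tree (a connected subgraph of a tree is a tree), so it has at most $|S_i|+|S_i'|-1 \le 2l-1$ edges. If every one of those edges survives in $G_r$ — an event of probability at least $r^{2l-1} \ge r^{2l}$ by independence of edge realizations — then $S_i\cup S_i'$ is connected in $G_r$, hence all of $S_i$ is in the same state as $v_i$ and contributes zero error. So the expected number of mispredicted nodes inside $S_i$ is at most $(1 - r^{2l})\cdot l$. Summing over the $\lceil n/l\rceil$ subgraphs, the total expected error from this source is at most $\lceil n/l\rceil\cdot(1-r^{2l})\cdot l \le n(1-r^{2l})$. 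Plugging in $l = \max\{\frac{\log[1/(1-\epsilon/2)]}{2\log(1/r)},1\}$ gives $r^{2l} \ge 1-\epsilon/2$, so this is at most $\epsilon n/2$.

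Next I would handle the error from the group-testing step exactly as in Theorem~\ref{theorem:cycle}: the $\textsc{IndepOPT}(\lceil n/l\rceil,p,\epsilon')$ routine errs on at least one representative with probability at most $\epsilon'$, and in that event at most $n$ nodes are mispredicted, so the expected contribution is at most $\epsilon' n < \epsilon n/2$. Adding the two sources, the total expected error is at most $(\epsilon/2 + \epsilon')n < \epsilon n$, which matches the average-error requirement~\eqref{eq:error}, and the number of tests used is $\textsc{IndepOPT}(\lceil n/l\rceil, p, \epsilon')$ by construction; this yields the displayed inequality $\textsc{CrltOPT}(G,r,p,\epsilon)\le\textsc{IndepOPT}(\lceil n/l\rceil,p,\epsilon')$.

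The main obstacle — and the reason Lemma~\ref{lem:treepart} is needed — is precisely step~(1): for a general tree one cannot partition into $l$-node \emph{connected} pieces (a star is the canonical counterexample, cf.\ Figure~\ref{fig:star5}), so the naive cycle argument breaks. Lemma~\ref{lem:treepart} repairs this by guaranteeing each piece is connectable via at most $l$ outside nodes, which is exactly what bounds the relevant edge count by $2l-1$ and hence the survival probability by $r^{2l-1}$; this extra factor of $2$ in the exponent is the sole reason the optimal block size here is $\frac{\log[1/(1-\epsilon/2)]}{2\log(1/r)}$ rather than the $\frac{\log[1/(1-\epsilon/2)]}{\log(1/r)}$ of the cycle. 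One minor point to state carefully is that the closure edges used for different $S_i$ may overlap or lie in other blocks, but this is harmless: we only need, for each $i$ separately, the lower bound $\Pr[S_i\cup S_i' \text{ connected in } G_r]\ge r^{2l-1}$, and linearity of expectation then gives the total error bound without requiring any independence across blocks.
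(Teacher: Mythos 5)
Your proposal matches the paper's own proof: the same algorithm (partition via Lemma~\ref{lem:treepart}, random representatives, an $\textsc{IndepOPT}(\lceil n/l\rceil,p,\epsilon')$ call, and state propagation), the same bound $r^{2l-1}\ge r^{2l}$ on the survival probability of each subgraph together with its connecting closure, and the same two-term error accounting $\epsilon n/2+\epsilon' n<\epsilon n$ carried over from the cycle case. Your explicit remark that overlapping closures are harmless because only linearity of expectation is needed is a fair elaboration of a step the paper leaves implicit, but it is not a different argument.
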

\begin{proof}
Consider the following algorithm:
\begin{enumerate}
    \item By Lemma~\ref{lem:treepart}, partition the tree into $\lceil n/l \rceil$ subgraphs $g_1, g_2,\ldots, g_{\lceil n/l \rceil}$ of the same length $l$, one subgraph might be smaller than the other ones.
    \item For each group, choose one of its nodes at random and let them be $v_{g_1}, v_{g_2}, \ldots v_{g_{\lceil n/l \rceil}}$.
    \item Use an $\textsc{IndepOPT}(\lceil n/l \rceil, p, \epsilon')$ algorithm to find the defective set among $v_{g_1}, v_{g_2}, \ldots v_{g_{\lceil n/l \rceil}}$.
    \item Assign the  state of  all the nodes in $g_i$ as $v_i$, for all $i$.
\end{enumerate}
First, we calculate the probability that $g_i$ is connected. By Lemma~\ref{lem:treepart}, we know that each $g_i$ has the property that its   smallest connecting closure has 
$l$ or fewer nodes. Thus, together $g_i$ and its connected closure have at most $2l-1$ edges in $G$, and $g_i$ and its connected closure constitutes a connected subgraph in $G.$ Therefore, the probability of $g_i$ being connected in $G_r$ is at least the probability that the above edges are retained in $G_r$ which is at least $r^{2l-1} \geq r^{2l}.$ 
So the probability that $g_i$ is not in the same state as $v_i$ is at most $1 - r^{2l}$.
The rest of the proof revolves around proving that the total error is less than $\epsilon n$ as was done for cycle and this completes the proof. 
\end{proof}

\begin{corollary}
Corollary~\ref{correl: cycle} can be recovered for trees with an additional factor of 2.

\end{corollary}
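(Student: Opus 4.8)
The plan is to repeat the computation behind Corollary~\ref{correl: cycle} almost verbatim, replacing the cycle partition by the tree partition of Lemma~\ref{lem:treepart} and the parameter $l$ of Theorem~\ref{theorem:cycle} by the parameter $l = \max\{\frac{\log[1/(1-\epsilon/2)]}{2\log 1/r},1\}$ of Theorem~\ref{theorem:treeup}. The only structural difference between the two settings is the extra factor $2$ in the denominator of $l$, and this is precisely the ``additional factor of $2$'' claimed in the statement.

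Concretely, I would take $p = c/n$ with $c$ constant and let $\mathbf{X'}$ be the defectiveness indicator vector of the $\lceil n/l\rceil$ candidate nodes $v_{g_1},\dots,v_{g_{\lceil n/l\rceil}}$ selected by the tree algorithm. Since the candidates are treated as independent, each defective with probability $p$, both the entropy and the expected number of defectives depend only on the count $\lceil n/l\rceil$: one gets $H(\mathbf{X'}) = \lceil n/l\rceil\, H(p) = O(\log n)/l$ and $\mathbb{E}[X'] = \lceil n/l\rceil\, p = O(1)/l$, exactly as for the cycle. Feeding $\mathbf{X'}$ into Theorem~\ref{theorem:adaptive_indep} (or Theorem~\ref{theorem:non-adaptive_indep} in the non-adaptive regime) bounds the tests used by the $\textsc{IndepOPT}(\lceil n/l\rceil,p,\epsilon')$ subroutine by $O(H(\mathbf{X'}) + \mathbb{E}[X']) = O((\log n + c)/l)$, and substituting $l$ yields
$$O\!\left(\frac{2(\log n + c)\log 1/r}{\log[1/(1-\epsilon/2)]}\right) \simeq O\!\left(\frac{\log n\,\log 1/r}{\epsilon}\right),$$
i.e.\ the cycle bound inflated by the factor $2$ coming from $l$ being half as large for trees. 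Correctness of the scheme (error at most $\epsilon n$) is already supplied by Theorem~\ref{theorem:treeup}, and the edge-case remarks carry over unchanged: when $r \geq 1 - 1/\log n$ one has $\log 1/r = \Theta(1-r) = O(1/\log n)$, so $l = \Omega(\log n)$ and the algorithm uses only a constant number of tests, whereas correlation-oblivious group testing still needs $O(nH(c/n)) = O(\log n)$ tests.

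There is essentially no hard step here; the corollary is a bookkeeping consequence of Theorem~\ref{theorem:treeup} together with the classic bounds of \cite{li2014group}. The only points needing a moment's care are that the entropy estimate for $\mathbf{X'}$ is a function of the candidate count alone, which has the same form $\lceil n/l\rceil$ for trees as for cycles (so nothing in the computation changes except the value plugged in for $l$), and that the error budget splits as $\epsilon' + \epsilon/2 < \epsilon$ exactly as in the cycle case, which Theorem~\ref{theorem:treeup} already accounts for.
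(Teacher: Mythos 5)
Your proposal is correct and matches the intended argument: the paper leaves this corollary as an immediate consequence of Theorem~\ref{theorem:treeup}, and the only change from the cycle computation in Corollary~\ref{correl: cycle} is substituting the tree's value $l = \max\{\frac{\log[1/(1-\epsilon/2)]}{2\log 1/r},1\}$, whose extra factor of $2$ in the denominator propagates directly into the test count exactly as you describe.
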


\section{An Upper Bound for Graphs with More Edges: Grids and SBMs}
In this section, we focus on graphs that potentially have many edges. 
As the number of edges increases, the correlation between nodes increases even when $r$ is not large. As  mentioned earlier, we need to target those components that are more likely to appear in various realizations. 

We know that there is a threshold phenomenon in some edge-faulty graphs, meaning that when $r$ is below a threshold, there are many isolated nodes (and hence many independent tests are needed) and when $r$ is above that threshold, we have a giant component (and hence a single test suffices).  Most famously, this threshold is  $ \frac{\log n}{n}$ for Erd\H os-R\'enyi graphs.
For random $d$-regular graphs, also, \cite{goerdt1997giant} has shown that when a graph is drawn uniformly from the set of all $d$-regular graphs with $n$ nodes and then each edge is realized with probability $r$, $\frac{1}{d-1}$ is a threshold  almost surely. 


For the rest of this section, we first study a (deterministic) $4$-regular graph, known as the grid\footnote{There is a subtle difference worthwhile to mention here: The degree regularity does not hold on the boundaries of the grid.} and then provide near-optimal results for the stochastic block model.  When we consider (deterministic) $d$-regular graphs, we can't use the  results of \cite{goerdt1997giant} for random $d$-regular graphs because we can not be sure that the specific chosen graph is among the ``good'' graphs that constitute the almost sure result. So we need to develop new results on the number of connected components and the distribution on them for our purposes.

\subsection{The grid} \label{sec:grid}
We first formally define a grid. A grid with $n$ nodes and side length $\sqrt{n}$ is a graph where nodes are in the form of $(a,b): 1 \leq a,b \leq \sqrt{n}$. Node $(a,b)$ is connected to its four close neighbors (if exist), namely  $(a-1,b),(a+1,b), (a,b+1), (a,b-1)$. Border nodes (with $a \in \{1,\sqrt{n} \}$ or $b \in \{1, \sqrt{n} \}$) might have three or two neighbors. Note that the side length is defined by number of nodes on the side. 

\textcolor{black}{In order to derive a lower bound, we need to know the expected number of components in $G_r$ and equivalently the expected component size that nodes belong to \cite{alon2016probabilistic,goerdt1997giant}. To find the expected component size, we describe a random process that forms the components and analyze the expected stopping time. While this is well-known for ER graphs, to the best of our knowledge there are no bounds when the graph is not complete. We derive a lower bound on the expected number of components in a grid by analyzing the problem for 3-regular infinite trees. This approach can be generalized for any $d$-regular graph.}

Consider the following process. Pick a node $v \in V(G)$, mark it as processed, and let it be the root of a tree. For each $u \in V(G)$ that is not processed and is a neighbor of $v$, $uv$ is realized with probability $r$ and added as a child of $v$. The same process is repeated for each realized $u$ in a Breath First Search (BFS) order. When the process ends, there is a tree with root $v$, and the expected size of the tree is the expected size of the component that $v$ ends up in.
\begin{figure}
    \centering
    \begin{subfigure}{.5\textwidth}
\centering
\begin{tikzpicture}[node distance={15mm}, main/.style = {draw, circle}]

\node[main, red] (1) {$v_1$};
\node[main, green] (2) [right of=1] {$v_2$}; 
\node[main, red] (3) [right of=2] {$v_3$}; 
\node[main] (4) [right of=3] {$v_4$}; 
\node[main] (5) [right of=4] {$v_5$}; 

\node[main, red] (6)[below of=1] {$v_6$};
\node[main, green] (7) [right of=6] {$v_7$}; 
\node[main, red] (8) [right of=7] {$v_8$}; 
\node[main] (9) [right of=8] {$v_9$}; 
\node[main] (10) [right of=9] {$v_{10}$}; 

\node[main, blue] (11)[below of=6] {$v_{11}$};
\node[main, green] (12) [right of=11] {$v_{12}$}; 
\node[main, red] (13) [right of=12] {$v_{13}$}; 
\node[main] (14) [right of=13] {$v_{14}$}; 
\node[main] (15) [right of=14] {$v_{15}$}; 

\node[main, red] (16)[below of=11] {$v_{16}$};
\node[main, red] (17) [right of=16] {$v_{17}$}; 
\node[main] (18) [right of=17] {$v_{18}$}; 
\node[main] (19) [right of=18] {$v_{19}$}; 
\node[main] (20) [right of=19] {$v_{20}$}; 

\node[main] (21)[below of=16] {$v_{21}$};
\node[main] (22) [right of=21] {$v_{22}$}; 
\node[main] (23) [right of=22] {$v_{23}$}; 
\node[main] (24) [right of=23] {$v_{24}$}; 
\node[main] (25) [right of=24] {$v_{25}$}; 

\draw (1) -- (2);
\draw (2) -- (3);
\draw (3) -- (4);
\draw (4) -- (5);

\draw (6) -- (7);
\draw (7) -- (8);
\draw (8) -- (9);
\draw (9) -- (10);

\draw (11) -- (12);
\draw (12) -- (13);
\draw (13) -- (14);
\draw (14) -- (15);

\draw (16) -- (17);
\draw (17) -- (18);
\draw (18) -- (19);
\draw (19) -- (20);

\draw (21) -- (22);
\draw (22) -- (23);
\draw (23) -- (24);
\draw (24) -- (25);

\draw (1) -- (6);
\draw (6) -- (11);
\draw (11) -- (16);
\draw (16) -- (21);

\draw (2) -- (7);
\draw (7) -- (12);
\draw (12) -- (17);
\draw (17) -- (22);

\draw (3) -- (8);
\draw (8) -- (13);
\draw (13) -- (18);
\draw (18) -- (23);

\draw (4) -- (9);
\draw (9) -- (14);
\draw (14) -- (19);
\draw (19) -- (24);

\draw (5) -- (10);
\draw (10) -- (15);
\draw (15) -- (20);
\draw (20) -- (25);

\end{tikzpicture}

\end{subfigure}

    \caption{An example of the procedure described in Section~\ref{sec:grid}, starting with border node $v_{11}$.}
    \label{graph:grid}
\end{figure}
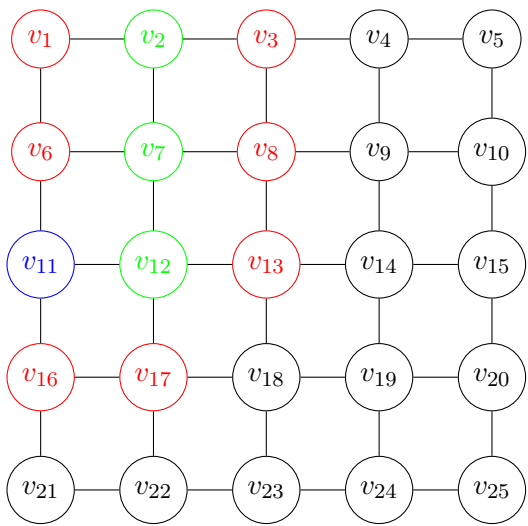

An example of this proecss is shown in Figure~\ref{graph:grid}. Node $v_{11}$ is the root (colored in blue),  the children that are realized are  marked in green, and the children that are not realized are marked in red. The component would be $\{v_{11}, v_{12}, v_7, v_2 \}$.

By repeating the process for each node that is not processed yet, we get a spanning forest. The expected number of components in the forest is the expected number of components in the original random graph. 

Here, the challenge is that we don't know the number of available (unprocessed) neighbors of a node. It highly depends on the previously chosen nodes, especially when $d$ is small, like in the grid. \textcolor{black}{Note that this is more complicated than the process for ER graphs, as the number of unprocessed neighbors in an ER graph is independent of the previously processed nodes (due to its homogeneous nature)}. We circumvent this issue by  analyzing an infinite regular tree process that effectively corresponds to a more connected graph and therefore leads to a lower bound on the expected number of connected components for the grid. 


\subsubsection{3-regular Trees}\label{sec:disttree}
Consider an infinite tree with root $v$ such that each node in the tree has three children. Consider the process where each edge is realized with probability $r$. Let $C(v)$ be the component that $v$ ends up in. The following lemmas approximates the distribution of $|C(v)|$.

\begin{lemma} \label{lem:distfinitietree}
Under the above process and for $t \in \mathbb{N}$, 
$$P(|C(v)| = t) = \frac{1}{2t+1} {3t \choose t} r^{t-1} (1-r)^{2t+1}$$
\end{lemma}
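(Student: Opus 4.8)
The plan is to turn this into a pure enumeration problem. First I would observe that, because the host graph is a tree, the component $C(v)$ is precisely the set of vertices reachable from $v$ through realized edges, and hence is a rooted subtree of the infinite ternary tree that contains $v$. So $P(|C(v)|=t)$ is the sum of $P(C(v)=S)$ over all $t$-node rooted subtrees $S$ of the infinite ternary tree that contain $v$.

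Next I would compute $P(C(v)=S)$ for a fixed such $S$. The event $\{C(v)=S\}$ occurs exactly when all $t-1$ edges of $S$ are realized and all ``boundary'' edges (those with exactly one endpoint in $S$) are not realized; these two edge sets are disjoint and edges are realized independently, so $P(C(v)=S)=r^{t-1}(1-r)^{\partial(S)}$, where $\partial(S)$ is the number of boundary edges. The key calculation is that $\partial(S)=2t+1$ irrespective of the shape of $S$: in the infinite ternary tree the root has degree $3$ and every other vertex has degree $4$, so $\sum_{u\in S}\deg(u)=3+4(t-1)=4t-1$; each of the $t-1$ internal edges is counted twice in this sum and each boundary edge once, giving $\partial(S)=(4t-1)-2(t-1)=2t+1$. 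Hence $P(|C(v)|=t)=N_t\, r^{t-1}(1-r)^{2t+1}$, where $N_t$ is the number of $t$-node rooted subtrees of the infinite ternary tree.

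It then remains to show $N_t=\frac{1}{2t+1}\binom{3t}{t}$. I would set up the generating function $f(x)=\sum_{t\ge 1}N_t x^t$: a nonempty rooted subtree consists of a root together with an ordered triple of (possibly empty) rooted subtrees occupying its three child slots, so $f=x(1+f)^3$. Lagrange inversion then gives $N_t=[x^t]f=\tfrac1t[u^{t-1}](1+u)^{3t}=\tfrac1t\binom{3t}{t-1}=\tfrac{1}{2t+1}\binom{3t}{t}$, the order-$3$ Fuss--Catalan number; alternatively this is the classical enumeration of ternary trees and could simply be cited, or derived from the cycle lemma. Substituting into the previous expression finishes the proof.

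The main obstacle is this last step — identifying $N_t$ as a Fuss--Catalan number — although it is standard. The one genuinely delicate point in the rest is treating the root as the unique degree-$3$ vertex, which is exactly what forces the exponents $t-1$ and $2t+1$ (rather than $t$ and $2t+2$) in the final formula. I would also flag that the statement is asserted only for each finite $t$: when $r\ge 1/3$ the component can be infinite with positive probability, so the probabilities need not sum to $1$, but this does not affect the identity being proved.
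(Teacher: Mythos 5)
Your proposal is correct and follows essentially the same route as the paper: fix an embedded $t$-node rooted subtree, show its realization probability is $r^{t-1}(1-r)^{2t+1}$ (your degree-sum count of boundary edges is equivalent to the paper's ``three potential child edges per node'' count), and identify the number of such shapes with the order-$3$ Fuss--Catalan number $\frac{1}{2t+1}\binom{3t}{t}$. The only cosmetic difference is that the paper obtains the count via the recursion $C_t=\sum_{i+j+k=t-1}C_iC_jC_k$ and cites the known closed form, while you derive it from $f=x(1+f)^3$ by Lagrange inversion.
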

\begin{proof}
Let $T$ be an embedded tree in $G$ with $t$ nodes. In order to $T$ be realized in the process, all the edges in $T$ should be realized and the rest of edges that has a node in $T$ should not be realized. There are $t-1$ edges in $T$, and each node has three potential edges, so there are $2t+1$ edges that are not realized. So the probability that $T$ be realized is $r^{t-1}(1-r)^{2t+1}$.

Let $C_t$ be the number of trees with $t$ nodes and $v$ as the root. We have 
$$P(|C(v)| = t) = C_t \cdot r^{t-1} (1-r)^{2t+1}.$$

\noindent Note that $C_0 = C_1 = 1$. We find a closed form of $C_t$ by recursion. Node $v$ has three potential subtrees, where the sum of the size of the subtrees is $t-1$. We thus get the following recursion
$$C_t = \prod_{i+j+k = t-1, i,j,k\geq 0} C_iC_jC_k.$$

This recursion has the same initial points and the same recursion as second order Catalan numbers as follows:

\begin{lemma} \cite{aval2008multivariate} \label{theorem:catalan}
Order-$d$ Fuss-Catalan numbers that follows the recursion form 
$$C^d_n = \prod_{i_1+\cdots + i_d = n-1} C^d_{i_1} \cdots C^d_{i_d}$$ with $C^d_0 = C^d_1 = 1$, has the closed form $C^d_n = \frac{1}{(d-1)n + 1} \cdot {dn \choose n}$.
\end{lemma}

So the solution has the form $C_t = \frac{1}{2t+1} {3t \choose t}$ and the proof is completed.
\end{proof}

\begin{lemma}\label{lem:inftree}
Let $P_{\infty} = P(|C(v)| = \infty)$. Then,under the above process, 
\[   
P_{\infty} = 
     \begin{cases}
       0 &\quad {r\leq 1/3}\\
       \frac{3r - \sqrt{r(4-3r)}}{2r^2} &\quad\text{otherwise} 
     \end{cases}
\]
\end{lemma}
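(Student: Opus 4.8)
The plan is to reduce the computation of $P_\infty$ to the extinction probability of a Galton--Watson process. Fix one of the three edges $e=(v,w)$ at the root and let $\rho$ be the probability that the BFS exploration of $C(v)$, restricted to the subtree hanging below $w$, terminates after finitely many nodes (this includes the case where $e$ is not realized, so nothing is explored there). By symmetry $\rho$ does not depend on which edge at $v$ we chose, and since removing the three root edges separates the tree into three disjoint subtrees, the three events ``exploration below $w_i$ is finite'' are independent. Hence $P(|C(v)|<\infty)=\rho^{3}$ and $P_\infty=1-\rho^{3}$. To get an equation for $\rho$, condition on whether $e$ is realized: with probability $1-r$ the subtree below $w$ is untouched and the exploration there is trivially finite; with probability $r$ the node $w$ is added and the exploration below $w$ is finite iff all three explorations below $w$'s children are finite, an event of probability $\rho^{3}$ by the identical argument. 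This gives the fixed-point equation
\begin{align}
\rho=(1-r)+r\,\rho^{3},\label{eq:rhofix}
\end{align}
which is exactly the extinction equation for the branching process with offspring generating function $f(s)=(1-r)+rs^{3}$ (each explored node has three child-edges, each ``successful'' independently with probability $r$) and mean offspring $f'(1)=3r$.

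Next I would solve \eqref{eq:rhofix}. Since $\rho=1$ is always a solution, factor $r\rho^{3}-\rho+(1-r)=(\rho-1)\bigl(r\rho^{2}+r\rho-(1-r)\bigr)$, so the other roots are $\rho=\frac{-r\pm\sqrt{r(4-3r)}}{2r}$; the only candidate besides $1$ that can lie in $[0,1]$ is $\rho^{*}=\frac{\sqrt{r(4-3r)}-r}{2r}$, and an elementary inequality shows $\rho^{*}\le 1$ iff $r\ge 1/3$, with equality at $r=1/3$. By the standard theory of branching processes the extinction probability is the smallest fixed point of $f$ in $[0,1]$: for $3r\le 1$ this is $\rho=1$, so $P_\infty=0$; for $3r>1$ it is $\rho=\rho^{*}$. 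In the supercritical case I then substitute into $P_\infty=1-(\rho^{*})^{3}$: writing $s=\sqrt{r(4-3r)}$ and using $s^{2}=4r-3r^{2}$ to reduce $(s-r)^{3}$, the cubic terms cancel and one gets $8r^{3}-(s-r)^{3}=4r(3r-s)$, hence $P_\infty=\frac{3r-\sqrt{r(4-3r)}}{2r^{2}}$, which is the claimed formula (sanity checks: $r=1$ gives $P_\infty=1$, and $r\downarrow 1/3$ gives $P_\infty\downarrow 0$, matching the two cases continuously).

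The one genuinely delicate point is the \emph{selection of the correct root}: for $r>1/3$ equation \eqref{eq:rhofix} has two fixed points in $[0,1]$, and one must justify that $\rho=\rho^{*}$ rather than $\rho=1$. The clean way is to observe that $f^{(k)}(0)$ is precisely the probability that the exploration below $e$ dies out within $k$ generations, so $\rho=\lim_{k\to\infty}f^{(k)}(0)$; since $f$ is increasing and convex on $[0,1]$ with $f(0)=1-r>0$, the iterates $f^{(k)}(0)$ increase to the smallest fixed point of $f$, which equals $\rho^{*}$ exactly when $f'(1)=3r>1$ and equals $1$ when $3r\le 1$. Everything else is routine: solving a quadratic and the one-line algebraic simplification of $1-(\rho^{*})^{3}$. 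As an alternative route that avoids branching-process language, one could instead sum $\sum_{t\ge 1}P(|C(v)|=t)$ directly from Lemma~\ref{lem:distfinitietree} via the Fuss--Catalan generating function $B_{3}(x)=1+xB_{3}(x)^{3}$ evaluated at $x=r(1-r)^{2}$, which leads to the same cubic and the same case distinction.
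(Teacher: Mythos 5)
Your proof is correct, and at its core it rests on the same one-step self-similarity of the infinite $3$-ary tree that the paper exploits; indeed the two fixed-point equations are the same cubic in disguise (with $q=1-P_{\infty}=\rho^{3}$, your relation $\rho=(1-r)+r\rho^{3}$ is equivalent to the paper's equation, which is obtained by conditioning on how many of the root's three edges are realized and can be rewritten as $1-P_{\infty}=\bigl((1-r)+r(1-P_{\infty})\bigr)^{3}$). What you do differently is the bookkeeping and, more importantly, the selection of the correct root. The paper solves the cubic directly in $P_{\infty}$ and rules out $P_{\infty}=0$ for $r>1/3$ by appealing to Lemma~\ref{lem:distfinitietree}, namely that the finite-component probabilities do not sum to one in that regime (essentially the Fuss--Catalan generating-function computation you mention as an alternative route). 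You instead package the recursion as the extinction probability of a Galton--Watson process with offspring generating function $f(s)=(1-r)+rs^{3}$ and invoke the standard characterization of the extinction probability as the smallest fixed point in $[0,1]$, reached as the increasing limit of $f^{(k)}(0)$; this gives a cleaner, self-contained justification of the delicate step, at the price of the extra (and correct) algebra showing $1-(\rho^{*})^{3}=\frac{3r-\sqrt{r(4-3r)}}{2r^{2}}$. Both routes are valid: yours avoids relying on the convergence analysis of the finite-size series, while the paper's works directly in the variable of interest.
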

\begin{proof}
In order for $v$ to be in an infinite component, at least one of its children should be in an infinite component. There are three cases: (i) Either $v$ has one child, and this one is  infinite, which happens with probability $3r(1-r^2)P_{\infty}$; or 
(ii) $v$ has two children and at least one of them lies in an infinite component, which happens with probability $3r^2(1-r)(1 - (1-P_{\infty})^2)$; (iii) or $v$ has three children and at least one of them lies in an infinite component, which happens with probability $r^3(1 - (1 - P_{\infty})^3)$. So in total we have
$$P_{\infty} = 3r(1-r)^2P_{\infty} + 3r^2(1-r)(1 - (1-P_{\infty})^2) $$
$$+ r^3(1 - (1 - P_{\infty})^3).$$

{The solutions of this equation are $0$ and $P_{\infty} = \frac{3r \pm \sqrt{r(4-3r)}}{2r^2}$. Note that $P_{\infty} < 1$, so $\frac{3r + \sqrt{r(4-3r)}}{2r^2}$ is not valid. The relevant solution turns out to be dependent on whether $r > 1/3$ or $r \leq 1/3$. As a matter of fact,  when $r > 1/3$, the  probabilities of all finite components (as found in Lemma \ref{lem:distfinitietree}) do not add up to one. So for $r>1/3$, the correct solution is $\frac{3r - \sqrt{r(4-3r)}}{2r^2}$. When $r<1/3$, we have $\frac{3r - \sqrt{r(4-3r)}}{2r^2} < 0$, so zero is the correct solution and the lemma is proved.}
\end{proof}

\begin{theorem}\label{theorem:inftreecompexp}
When $r \leq 1/3$,  {the expected component size} is
\begin{align}
\mathbb{E}(|C(v)|) =\sum_{t=1}^{\infty} \frac{t}{2t+1} {3t \choose t} r^{t-1} (1-r)^{2t+1} \label{eq:theoremECN}\\
\simeq\frac{1-r}{r}  \sqrt{\frac{3}{4\pi}}\sum_{t=1}^{\infty} \frac{\sqrt t}{(2t+1) } (\frac{27}{4} r (1-r)^2)^t. \notag
\end{align}

\end{theorem}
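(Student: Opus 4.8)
The first equality in \eqref{eq:theoremECN} is essentially bookkeeping. Since $r \le 1/3$, Lemma~\ref{lem:inftree} gives $P_\infty = P(|C(v)| = \infty) = 0$, so the events $\{|C(v)| = t\}$, $t \in \mathbb{N}$, cover the probability space up to a null set; hence $\mathbb{E}(|C(v)|) = \sum_{t \ge 1} t\, P(|C(v)| = t)$, and substituting the formula of Lemma~\ref{lem:distfinitietree} yields exactly $\sum_{t\ge1} \frac{t}{2t+1}\binom{3t}{t} r^{t-1}(1-r)^{2t+1}$. One should note that at $r = 1/3$ this series diverges, consistently with the left-hand side: the component of $v$ is generated by a Galton--Watson process with offspring law $\mathrm{Bin}(3,r)$ of mean $3r$, which is critical at $r = 1/3$, so $\mathbb{E}(|C(v)|) = +\infty$ there and the stated identity still holds in $[0,+\infty]$.

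For the asymptotic form, the plan is to apply Stirling's formula to $\binom{3t}{t} = (3t)!/(t!\,(2t)!)$. Writing $n! = \sqrt{2\pi n}\,(n/e)^n(1 + O(1/n))$ and simplifying the square-root and power factors separately gives
\[
\binom{3t}{t} = \sqrt{\frac{3}{4\pi t}}\left(\frac{27}{4}\right)^{t}\bigl(1 + O(1/t)\bigr).
\]
Then I would rewrite the remaining factors in each summand of \eqref{eq:theoremECN}: $r^{t-1}(1-r)^{2t+1} = \tfrac{1-r}{r}\bigl(r(1-r)^2\bigr)^t$ and $\tfrac{t}{2t+1}\sqrt{\tfrac{3}{4\pi t}} = \sqrt{\tfrac{3}{4\pi}}\cdot\tfrac{\sqrt t}{2t+1}$. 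Collecting the $t$-independent constant $\tfrac{1-r}{r}\sqrt{\tfrac{3}{4\pi}}$ and merging $\bigl(\tfrac{27}{4}\bigr)^t\bigl(r(1-r)^2\bigr)^t = \bigl(\tfrac{27}{4}r(1-r)^2\bigr)^t$ produces the claimed right-hand side, up to the per-term factor $1 + O(1/t)$ from Stirling.

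The one genuinely delicate point — and the step I expect to take the most care — is making precise what $\simeq$ means for a series carrying no explicit large parameter. Set $\rho := \tfrac{27}{4}r(1-r)^2$; differentiating $r(1-r)^2$ shows it is increasing on $(0,1/3]$, so $\rho \le 1$ on this interval with equality iff $r = 1/3$. Thus for $r < 1/3$ the surrogate series $\sum_t \tfrac{\sqrt t}{2t+1}\rho^t$ converges, and as $r \uparrow 1/3$ it is dominated by its tail (the general term is $\asymp \rho^t/(2\sqrt t)$, so the partial sums stabilize only once $t \gtrsim 1/(1-\rho)$). Since the Stirling error multiplies the $t$-th term by $1 + O(1/t)$ and $\tfrac1t\cdot\tfrac{\sqrt t}{2t+1} = O(t^{-3/2})$, the discarded piece is $\sum_t O(t^{-3/2})\rho^t$, which by an Abelian/integral comparison is $O\bigl((1-\rho)^{1/2}\bigr)$ while the main series is $\asymp (1-\rho)^{-1/2}$; hence the relative error is $o(1)$ as $r \uparrow 1/3$. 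I would present the argument in exactly this $r \uparrow 1/3$ regime — where the approximation is tight and where $\mathbb{E}(|C(v)|)$, and hence the resulting lower bound on the number of components of the grid, is largest — and remark that for $r$ bounded away from $1/3$ both series are finite constants and the statement is to be read as the term-by-term asymptotic established above.
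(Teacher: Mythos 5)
Your proposal follows exactly the paper's route: the first identity comes from combining Lemma~\ref{lem:distfinitietree} with $P_\infty=0$ from Lemma~\ref{lem:inftree}, and the second line is Stirling's approximation applied to $\binom{3t}{t}$, which is precisely the paper's (one-line) proof. Your additional care about what $\simeq$ means, the convergence of $\frac{27}{4}r(1-r)^2\le 1$ on $(0,1/3]$, and the divergence at the critical point $r=1/3$ is correct and goes beyond what the paper makes explicit, but it is elaboration of the same argument rather than a different approach.
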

\textcolor{black}{Note that when $r < 1/3$, then $\frac{27}{4} r (1-r)^2 < 1$ and hence the sum converges.}

\begin{proof}
The proof follows from Lemma~\ref{lem:distfinitietree} and Lemma~\ref{lem:inftree} and by using Stirling Approximation. 
\end{proof}

It is worthwhile to remark that the proof generalizes to general $d-$regular tree processes.
\begin{remark}
When the underlying graph is an infinite $d$-regular tree, under the defined process and for $t \in \mathbb{N}$, 
$$P(|C(v)| = t) = \frac{1}{(d-1)t+1} {dt \choose t} r^{t-1} (1-r)^{(d-1)t+1}$$
\end{remark}

\subsubsection{Lower Bound on the Expected Number of Components in a Grid} \label{subsec:lowgrid}

In the case of grid, we consider 3-regular trees, as after the root in grid, each child has at most 3 potential neighbors and if we choose a node in the border, the root also has at most 3 potential children, as illustrated in Figure~\ref{graph:grid}. So the 3-regular tree process that we analyzed corresponds to a more connected graph than the grid. Therefore its expected number of connected components that we found in  \eqref{eq:theoremECN} provides a lower bound on the expected number of connected components in the grid.

\textcolor{black}{Let $NC$ be the number of connected components. Note than $NC$ is a stopping time, as by knowing $C_1 + \cdots + C_{NC}$, where $C_i$ is the size of the $i$'th component, we can decide whether the process has finished or not. The random process in the $3-$regular tree is symmetric over all the nodes, so by $NC$ be stopping time, $\mathbb{E}[NC] = |V(G)|/\mathbb{E}[C(v)]$}. So by Theorem~\ref{theorem:inftreecompexp}, we immediately have the following result.

\begin{theorem} \label{theorem: lowerboundgrid}
For a grid with $n$ nodes and $r \leq 1/3$, the expected number of components is
$$\mathbb{E}(NC) =\frac{n} {\sum_{t=1}^{\infty} \frac{t}{2t+1} {3t \choose t} r^{t-1} (1-r)^{2t+1}}$$
$$\simeq \frac{n} {\frac{1-r}{r}  \sqrt{\frac{3}{4\pi}}\sum_{t=1}^{\infty} \frac{\sqrt t}{(2t+1) } (\frac{27}{4} r (1-r)^2)^t}.$$
\end{theorem}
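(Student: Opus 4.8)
The plan is to derive Theorem~\ref{theorem: lowerboundgrid} as an essentially immediate consequence of Theorem~\ref{theorem:inftreecompexp} together with the stopping-time / Wald-type argument already sketched in the paragraph preceding the statement. The key conceptual point established earlier is that the component containing the root in the $3$-regular tree process stochastically dominates (in size) the component of any node in the grid, because after the root each node has at most $3$ unprocessed neighbors and a border node as root has at most $3$ children; hence the $3$-regular tree process corresponds to a ``more connected'' graph, and a more connected graph has at most as many components. So a lower bound on $\mathbb{E}(NC)$ for the grid follows from analyzing the $3$-regular tree process.

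First I would make precise the spanning-forest construction: run the BFS component-exposure process node by node, obtaining component sizes $C_1, C_2, \ldots, C_{NC}$ with $\sum_{i=1}^{NC} C_i = n$. The random variable $NC$ is a stopping time with respect to the natural filtration generated by the sequence of exposed components, since after revealing $C_1, \ldots, C_k$ one knows whether all $n$ nodes have been consumed. Because the grid is vertex-transitive up to boundary effects and the process is run in a fixed order, each $C_i$ has the same distribution as $|C(v)|$ in the tree process in the stochastic-domination sense; more carefully, I would phrase this as $\mathbb{E}[C_i \mid \mathcal{F}_{i-1}] \geq \mathbb{E}[|C(v)|]$ pointwise, where $|C(v)|$ refers to the $3$-regular tree quantity of Theorem~\ref{theorem:inftreecompexp} (one may in fact just use equality in distribution for the idealized symmetric setting, as the paper's preceding paragraph does). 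Then by the optional-stopping / Wald identity, $n = \mathbb{E}\left[\sum_{i=1}^{NC} C_i\right] = \mathbb{E}[NC]\,\mathbb{E}[|C(v)|]$ (or $\geq \mathbb{E}[NC]\,\mathbb{E}[|C(v)|]$ in the domination version), whence $\mathbb{E}(NC) \geq n / \mathbb{E}[|C(v)|]$, and substituting the series and its Stirling-approximated form from \eqref{eq:theoremECN} gives exactly the two displayed expressions.

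I would close by noting the hypothesis $r \leq 1/3$ is exactly what makes $\mathbb{E}[|C(v)|]$ finite (equivalently, $P_\infty = 0$ by Lemma~\ref{lem:inftree}, and $\tfrac{27}{4}r(1-r)^2 < 1$ so the series converges), so that the division $n/\mathbb{E}[|C(v)|]$ is meaningful and positive; for $r > 1/3$ the tree process has an infinite component with positive probability and the argument breaks down, which is why the theorem is restricted to this regime.

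The main obstacle is making the stochastic-domination step fully rigorous: the BFS process on the grid is \emph{not} literally the $3$-regular tree process, because the number of unprocessed neighbors of a vertex fluctuates (it can be $0,1,2,3$, or even $4$ for the very first, interior root, and it depends on previously exposed vertices), so one must either (a) start the process from a corner/border vertex and argue that every subsequently discovered vertex genuinely has $\leq 3$ unexposed neighbors, then couple step-by-step with the $3$-regular tree so that the grid component is a subset of the coupled tree component, or (b) accept the informal symmetry argument the paper already uses ($\mathbb{E}[NC] = n/\mathbb{E}[C(v)]$) and simply cite the domination as intuitively clear. I would take route (a) for a clean proof: build an explicit coupling in which each exposed grid vertex is matched to a tree vertex, realized edges are matched, and ``missing'' grid neighbors (those already processed, or off the boundary) correspond to children in the tree whose edges we additionally expose but discard — this makes the grid component a sub-forest of the tree component almost surely, giving $|C_i| \leq_{st} |C(v)|$ and hence the bound; the routine Stirling computation turning the binomial series into $\frac{1-r}{r}\sqrt{\tfrac{3}{4\pi}}\sum_t \frac{\sqrt t}{2t+1}(\tfrac{27}{4}r(1-r)^2)^t$ is already delegated to Theorem~\ref{theorem:inftreecompexp} and need not be repeated.
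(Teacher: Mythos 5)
Your proposal follows the paper's own route exactly: the paper's ``proof'' is precisely the paragraph preceding the statement --- the $3$-regular tree process corresponds to a more connected graph than the grid, $NC$ is a stopping time, symmetry/Wald gives $\mathbb{E}[NC] = n/\mathbb{E}[|C(v)|]$, and then Theorem~\ref{theorem:inftreecompexp} is substituted --- so your write-up, with the explicit BFS coupling and the remark that $r \leq 1/3$ is what makes $\mathbb{E}[|C(v)|]$ finite, is if anything more careful than the paper's. One internal slip to fix: early on you assert $\mathbb{E}[C_i \mid \mathcal{F}_{i-1}] \geq \mathbb{E}[|C(v)|]$ and allow ``$n \geq \mathbb{E}[NC]\,\mathbb{E}[|C(v)|]$'', which would yield an upper bound on $\mathbb{E}[NC]$; the correct direction is the one in your coupling paragraph, $|C_i| \leq_{\mathrm{st}} |C(v)|$ (the grid component is a sub-forest of the tree component), giving $n \leq \mathbb{E}[NC]\,\mathbb{E}[|C(v)|]$ and hence $\mathbb{E}[NC] \geq n/\mathbb{E}[|C(v)|]$, which is the lower-bound direction the paper actually uses.
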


\begin{corollary}
Using Lemma~\ref{lem:components} in conjunction with Theorem~\ref{theorem: lowerboundgrid}, any lower bound on the number of tests for $\frac{n} {\frac{1-r}{r}  \sqrt{\frac{3}{4\pi}}\sum_{t=1}^{\infty} \frac{\sqrt t}{(2t+1) } (\frac{27}{4} r (1-r)^2)^t} + O(1/n)$ independent nodes is also a lower bound on the number of tests on a grid under our model.

\end{corollary}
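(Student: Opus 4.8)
The plan is to reuse, almost verbatim, the argument behind Theorem~\ref{theorem:cycletreelowbound}, replacing the cycle's expected component count $(1-r)n$ by the grid's, which Theorem~\ref{theorem: lowerboundgrid} lower-bounds. Three facts are to be chained: (a) the domination of the grid's component-exploration process by the $3$-regular infinite-tree process established in Section~\ref{subsec:lowgrid}, which gives $\mathbb{E}[C(G_r)] \ge N$ where $N := \tfrac{n}{\frac{1-r}{r}\sqrt{3/(4\pi)}\,\sum_{t\ge1}\frac{\sqrt t}{2t+1}(\frac{27}{4}r(1-r)^2)^t}$; (b) the edge-exposure-martingale concentration of $C(G_r)$ from Corollary~\ref{corol: conenctcomp}; and (c) the reduction of $\textsc{CrltOPT}$ to $\textsc{IndepOPT}$ on $C(G_r)$ nodes from Lemma~\ref{lem:components}.

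First I would record that a grid on $n$ nodes has $m = 2n - 2\sqrt n = \Theta(n) = o(n^2)$ edges, and that for any fixed $r < 1/3$ the ratio $\tfrac{27}{4}r(1-r)^2 < 1$, so the series $\sum_{t\ge1}\frac{\sqrt t}{2t+1}x^t$ converges to a positive constant and hence $N = \Theta(n)$; more generally only $N = \omega(\sqrt{n\log n})$ is needed below. Taking $\delta = 1/n^2$ in Corollary~\ref{corol: conenctcomp} and using $\mathbb{E}[C(G_r)] \ge N$, with probability at least $1 - 1/n^2$ we get $C(G_r) \ge N - O(\sqrt{n\log n}) = N - o(N) \simeq N$; on the complementary event (probability at most $1/n^2$) the number of tests differs by at most $n$, which contributes only an additive $O(1/n)$.

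Since $\textsc{IndepOPT}(\cdot,p,\epsilon n)$ is non-decreasing in its first argument, applying Lemma~\ref{lem:components} on the good event then yields $\textsc{IndepOPT}(N - o(N),\,p,\,\epsilon n) \le \textsc{CrltOPT}(G,r,p,\epsilon) + O(1/n)$. Hence any lower bound on the number of group tests needed for $\tfrac{n}{\frac{1-r}{r}\sqrt{3/(4\pi)}\,\sum_{t\ge1}\frac{\sqrt t}{2t+1}(\frac{27}{4}r(1-r)^2)^t}$ independent nodes is also, up to the additive $O(1/n)$ already present in the statement, a lower bound on $\textsc{CrltOPT}$ for the grid, which is the claim.

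I expect the only step with real content to be (a), and it has already been carried out in Section~\ref{subsec:lowgrid}: one couples the breadth-first explorations on the grid and on the $3$-regular tree so that at each step the tree offers at least as many unexplored neighbours as the grid (this is why the root is taken on the grid boundary, where it has at most three neighbours), whence the tree's component stochastically dominates the grid's, $\mathbb{E}[|C(v)|]$ is the larger for the tree, and therefore $\mathbb{E}[C(G_r)] \ge n/\mathbb{E}[|C(v)|_{\text{tree}}] = N$ by the stopping-time identity together with Theorem~\ref{theorem:inftreecompexp}. Everything else is a routine concatenation of the cited results, mirroring the proof of Theorem~\ref{theorem:cycletreelowbound}.
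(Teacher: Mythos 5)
Your proposal is correct and follows essentially the same route the paper intends for this corollary: take the expected-component lower bound from Theorem~\ref{theorem: lowerboundgrid} (via the $3$-regular-tree domination of Section~\ref{subsec:lowgrid}), concentrate $C(G_r)$ with the edge-exposure martingale of Corollary~\ref{corol: conenctcomp} (noting $m=\Theta(n)$ for the grid), absorb the low-probability event into an additive $O(1/n)$, and invoke Lemma~\ref{lem:components} exactly as in the proof of Theorem~\ref{theorem:cycletreelowbound}. The only slack — replacing $N-O(\sqrt{n\log n})$ by $N$ — is the same $\simeq$-level approximation the paper itself makes, so nothing is missing.
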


\subsubsection{An Upper Bound for the Grid}
{In this subsection, we provide an upper bound for the number of tests in a grid. At a high level idea, we partition the grid into subgrids and assume that each subgrid is connected, so we can consider one representative node per subgrid to test. In other words, the algorithmic idea is similar to the proof of Theorems~[\ref{theorem:cycle}, \ref{theorem:treeup}] where the graph $G$ was partitioned into subgraphs that were more likely to remain connected in $G_r$. But, here, we choose the subgraphs to be subgrids. In order to calculate the error, we need to compute the probability that a subgrid of length $k$ is connected, where $k$ is to be optimized later. We first estimate the probability that a subgrid becomes connected.}

\begin{lemma}\label{lem:subgridprob}
Let $P_k$ be the probability that a grid of length $k > 1$ becomes connected when each of its edge is realized with probability $r$. We have:
$$P_k \geq P_{k-1} r^{\Theta((1-r)k)}.$$
\end{lemma}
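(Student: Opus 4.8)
The plan is to relate a $k \times k$ grid to a $(k-1) \times (k-1)$ grid by exhibiting an explicit connected subgraph that, together with connectivity of the smaller grid, forces connectivity of the larger one, and then to bound the number of edges that need to survive in the ``extra'' part.

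First I would decompose the $k\times k$ grid $\mathcal{G}_k$ into a copy of $\mathcal{G}_{k-1}$ (say the bottom-left $(k-1)\times(k-1)$ subgrid) plus an ``L-shaped'' border $B$ consisting of the top row and the right column — in total $2k-1$ new nodes. If $\mathcal{G}_{k-1}$ is connected in $G_r$ and, in addition, the induced subgraph on $B$ together with the edges joining $B$ to $\mathcal{G}_{k-1}$ is such that every node of $B$ is connected to $\mathcal{G}_{k-1}$, then $\mathcal{G}_k$ is connected in $G_r$. So I would lower bound $P_k$ by $P_{k-1}$ times the probability that ``the border attaches correctly.'' The key observation making the exponent $\Theta((1-r)k)$ rather than $\Theta(k)$ is that we do not need all $\Theta(k)$ border edges to survive: it suffices that the border $B$, viewed as a path of length $2k-1$ with pendant ``rungs'' down into $\mathcal{G}_{k-1}$, breaks into segments each of which has at least one surviving rung. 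I would model the survival pattern along the border path: deleting the border-path edges splits $B$ into runs, and by a standard geometric/Poissonization argument the expected number of runs is $\Theta((1-r)k)$ (each border-path edge is absent with probability $1-r$, creating a new run); within each run we only need one of its downward rungs to survive. Taking a union bound / product over runs, the probability that every run has a surviving rung, conditioned on the run structure, is at least $r^{(\text{number of runs})}$ up to constants, which in expectation (or with the typical run structure, made rigorous by a Chernoff bound on the number of absent border edges) is $r^{\Theta((1-r)k)}$.

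The cleanest way to make this rigorous is to condition on the set $S$ of surviving border-path edges; this partitions $B$ into $|B| - |S| \le 2k$ contiguous runs. For a fixed run of length $\ell$, the probability that none of its $\ell$ downward rungs survives is $(1-r)^\ell$, so the probability it fails to attach is $(1-r)^\ell$; the probability that all runs attach is $\prod_{\text{runs}} (1 - (1-r)^{\ell_j})$. Crudely bounding each factor below by $r$ (valid when the run has length $\ge 1$, since $1-(1-r)^\ell \ge 1 - (1-r) = r$) gives $\ge r^{\#\text{runs}}$. Now $\#\text{runs} = 1 + (\text{number of absent border-path edges})$, which is a sum of $\Theta(k)$ independent Bernoulli$(1-r)$ variables, hence concentrated around $\Theta((1-r)k)$; on the high-probability event that it is at most $C(1-r)k$ for a suitable constant $C$, the attachment probability is at least $r^{C(1-r)k} = r^{\Theta((1-r)k)}$. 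One then folds the small failure probability of the concentration event into the constant (or handles it by noting it only multiplies $P_k$ by a factor bounded away from $0$, which can be absorbed into $\Theta(\cdot)$ after taking logs). Multiplying by $P_{k-1}$, which is the probability the smaller grid is internally connected, yields $P_k \ge P_{k-1}\, r^{\Theta((1-r)k)}$.

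The main obstacle I anticipate is bookkeeping the conditional independence correctly: the event ``$\mathcal{G}_{k-1}$ connected'' depends only on edges inside the $(k-1)$-subgrid, the border-path survival pattern depends only on the $2k-2$ border edges, and the rung-survival depends only on the $2k-1$ rung edges connecting $B$ to $\mathcal{G}_{k-1}$ — these three edge sets are disjoint, so the three events are independent and the product bound is legitimate; I would state this disjointness explicitly. A secondary subtlety is that ``every border node attached to $\mathcal{G}_{k-1}$ plus $\mathcal{G}_{k-1}$ connected'' genuinely implies ``$\mathcal{G}_k$ connected,'' which is immediate since every node of $\mathcal{G}_k$ is either in $\mathcal{G}_{k-1}$ or in $B$ and hence joined to it. Finally, one should check the corner node of $B$ (top-right) and the two ``ends'' of the border path have rungs available — they do, each border node has at least one neighbor in $\mathcal{G}_{k-1}$ — so no degenerate run is rung-less. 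With these points pinned down, the estimate follows; I would not belabor the precise constant in $\Theta((1-r)k)$ since the statement only claims that order.
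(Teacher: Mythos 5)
Your argument is essentially the paper's own proof: the paper also peels off the bottom-left $(k-1)\times(k-1)$ subgrid, views the remaining $2k-1$ nodes as an L-shaped path whose deleted edges split it into $\Theta((1-r)k)$ runs (via a Chernoff bound), and attaches each run to the subgrid through a surviving rung at cost at least $r$ per run, giving $P_k \geq P_{k-1} r^{\Theta((1-r)k)}$. Your write-up is in fact slightly more careful than the paper's on the conditional-independence bookkeeping and the $1-(1-r)^{\ell}\geq r$ step (note only that the top-right corner node of the border has no rung into the subgrid, a shared imprecision that is easily absorbed into the $\Theta(\cdot)$ constant).
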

\begin{proof}
Consider the subgrid of length $k-1$ that contains the bottom-left corner node. Then the main grid consists of the subgrid and a path with $2k-1$ nodes,  where each node in the path has one edge to the subgrid. For example in Figure~\ref{graph:grid}, $k=5$, the subgrid is the grid with corner nodes $v_6, v_9, v_{24}, v_{21}$ and the path of with $9$ nodes is $v_1, v_2, v_3, v_4, v_5, v_{10}, v_{15}, v_{20}. v_{25}$.  With probability $P_{k-1}$ the subgrid is connected. Note that in expectation, $(2k-2)(1-r)$ edges in the path would be removed, and by Chernoff bound it is concentrated around its mean. Then, the path would be decomposed into $(1-r)2k \pm o(rk) = \Theta((1-r)k)$ subpaths with probability at least $1 - 1/k^{10}$. Each subpath has at least one edge to the subgrid, so each one is connected to the path with probability at least $r$. The probability that all of them connect to the subgrid then is at least $r^{(1-r)\Theta(k)}$ and the lemma is proved. 
\end{proof}
\begin{theorem}
Let $P_k$ be the probability that a grid of length $k > 1$ becomes connected when each of its edge is realized with probability $r$.We have:
$$P_k \geq r^{\Theta((1-r)k^2)} = e^{\Theta(\log(r) (1-r)k^2))}$$
\end{theorem}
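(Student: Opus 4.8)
The plan is to simply unroll the recursion supplied by Lemma~\ref{lem:subgridprob}. That lemma furnishes an absolute constant $C$ (independent of $k$ and $r$) such that, for every $k>1$,
\[
P_k \;\ge\; P_{k-1}\, r^{C(1-r)k}.
\]
Iterating this inequality all the way down to the base case and collecting the powers of $r$ is the whole of the argument; the claimed bound then pops out because $\sum_{j\le k} j = \Theta(k^2)$.

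First I would pin down the base case: $P_1=1$ (a $1\times 1$ grid is a single node) and, for any fixed small $j$, $P_j$ is a positive constant, since a constant-size grid is connected as soon as all of its $O(1)$ edges survive, an event of probability $r^{\Theta(1)}>0$. Next I would telescope: applying Lemma~\ref{lem:subgridprob} repeatedly gives
\[
P_k \;\ge\; P_1 \prod_{j=2}^{k} r^{\,\Theta((1-r)j)} \;=\; r^{\,(1-r)\sum_{j=2}^{k}\Theta(j)}.
\]
Each hidden constant lies between two absolute constants $c_1,c_2>0$, so the exponent $(1-r)\sum_{j=2}^{k}\Theta(j)$ is sandwiched between $c_1(1-r)\sum_{j=2}^{k}j$ and $c_2(1-r)\sum_{j=2}^{k}j$; since $\sum_{j=2}^{k}j=\tfrac{k(k+1)}{2}-1=\Theta(k^2)$, the exponent equals $\Theta((1-r)k^2)$. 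Because $0<r<1$, the map $x\mapsto r^{x}$ is decreasing, so to keep the inequality valid one records the bound with the largest admissible constant, yielding $P_k\ge r^{\Theta((1-r)k^2)}$. Finally, $r^{x}=\exp(x\ln r)$ and $\ln r=\Theta(\log r)$ — both negative for $r\in(0,1)$, so the asymptotics and the sign are preserved — which turns this into $P_k\ge e^{\Theta(\log(r)(1-r)k^2)}$.

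I do not expect a genuine obstacle here: all the real work lives in Lemma~\ref{lem:subgridprob}, and this theorem is its immediate corollary. The only points needing mild care are (i) absorbing the per-step $\Theta(\cdot)$ constants into a single $\Theta$ in the final exponent rather than letting them multiply, and (ii) the fact that the ``$\ge$'' in Lemma~\ref{lem:subgridprob} actually carries hidden factors of the form $(1-j^{-10})$ from the Chernoff step; telescoped, these contribute $\prod_{j\ge 2}(1-j^{-10})$, a positive constant, which is swallowed by slightly enlarging the hidden constant in the exponent (for the finitely many small $k$ one instead invokes the base-case estimate directly). I would spell both out in a sentence and move on.
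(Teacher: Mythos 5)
Your proposal is correct and follows essentially the same route as the paper, which proves the theorem by iterating Lemma~\ref{lem:subgridprob} down to $P_1=1$ and summing the exponents $\Theta((1-r)j)$ to get $\Theta((1-r)k^2)$. Your extra remarks about uniformizing the per-step hidden constants and absorbing the Chernoff failure factors are sensible bookkeeping the paper leaves implicit, but they do not change the argument.
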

\begin{proof}
The proof is done by replacing $P_{k-1}$ with $P_{k-2}$, and then $P_{k-2}$ with $P_{k-3}$ etc in Lemma~\ref{lem:subgridprob} and at last replacing $P_1 = 1$.
\end{proof}

We now partition the grid into subgrids of length $k$, $k$ will be set later, and consider a candidate node for each subgrid and do the independent group testing on candidate nodes. Now similar to Theorem~\ref{theorem:treeup}, by setting error probability of each subgraph small enough, that is  $1 - P_k < \epsilon / 2$, we get $k < \sqrt{\frac{\log(1 - \epsilon/1000)}{(1-r)\log r}}$. Then the error is less than $\epsilon n$ with at most $n / k^2$ independent node tests with error $\epsilon' < \epsilon /2$.

\subsection{An Optimal Algorithm for the Stochastic Block Model}

In this section, we study our model on SBM graphs. We apply the same techniques used in Erd\H{o}s-R\'enyi graphs to find the connectivity threshold to find the structure of the connected components.

A stochastic block model has $g$ clusters of size $k = n/g$, where any pair of nodes in the same cluster are connected with probability $q_1$, and any  pair of nodes in different clusters are connected with probability $q_2 < q_1$. After realizing each edge with probabilities $q_1$ and $q_2$, we have our graph $G$. Then based on our correlation  model, each edge remains in $G_r$ with probability $r$. So with probability $r_1 = rq_1$ an edge remains in the same cluster and with probability $r_2 = rq_2$ an edge remains between two different clusters. Here, we assume the size of the clusters are much bigger than $\log n$, i.e. $k \gg \log n$. We find the number of connected components based on $r_1$ and $r_2$ with high probability, for simplicity let's say 99\%. The probability can be improved with a slight change in the parameters.

\begin{theorem} \label{theorem:SBMstruc}
 \begin{itemize}

    \item If $r_1 \geq \frac{100\log n}{k}$ and $1 - (1-r_2)^{k^2} \geq \frac{100\log g}{g}$, then with high probability $G$ is connected. (first regime, one test needed)
    \item If $r_1 \geq \frac{100\log n}{k}$ and $1 - (1-r_2)^{k^2} \leq \frac{1}{100g}$, then with high probability each cluster is connected but most of the clusters are isolated. (second regime, $g$ independent tests needed)
    \item If $r_1 \leq \frac{1}{100k}$ and $r_2 \leq \frac{1}{100n}$, then with high probability $G_r$ has many isolated nodes. (third regime, $\Omega(n)$ independent tests needed)
    \item If $r_1 \leq \frac{1}{100k}$ and $r_2 \geq \frac{100\log n}{n}$, and $g>1$, then with high probability $G_r$ is connected. (fourth regime, one test needed)

\end{itemize}

\end{theorem}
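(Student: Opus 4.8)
The plan is to treat each of the four regimes as a connectivity question about a (two-level) random graph, and to port the standard Erdős–Rényi connectivity/isolation arguments through the cluster structure. Throughout I will work with the effective edge probabilities $r_1 = r q_1$ (within a cluster) and $r_2 = r q_2$ (between two clusters), and I will repeatedly use the union bound together with the elementary estimate $(1-x)^m \le e^{-xm}$ and its reverse for small $x$.

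For the \emph{first regime}, I would argue in two stages. Stage one: each individual cluster, being an Erdős–Rényi graph $G(k, r_1)$ with $r_1 \ge 100 \log n / k \ge 100 \log k / k$, is connected with probability at least $1 - k^{-10}$ by the classical threshold; a union bound over the $g \le n$ clusters shows all clusters are internally connected w.h.p. Stage two: contract each cluster to a super-node; two super-nodes are adjacent iff at least one of the $k^2$ potential inter-cluster edges is realized, which happens with probability exactly $1 - (1-r_2)^{k^2}$. Thus the contracted graph is $G(g, 1-(1-r_2)^{k^2})$, and the hypothesis $1-(1-r_2)^{k^2} \ge 100 \log g / g$ places it above the ER connectivity threshold, so it is connected w.h.p. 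Combining the two stages, $G_r$ (I believe the statement means $G_r$, not $G$) is connected w.h.p. The \emph{fourth regime} is the degenerate case of this same argument with the cluster-internal step dropped: $r_2 \ge 100 \log n / n$ means the whole vertex set, ignoring the block structure entirely, already dominates a $G(n, 100\log n/n)$, which is connected w.h.p.; the clusters being internally sparse is irrelevant once the ambient graph is connected.

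For the \emph{second regime}, stage one (all clusters internally connected w.h.p.) is identical. For the isolation claim, the contracted graph is $G(g, s)$ with $s = 1-(1-r_2)^{k^2} \le 1/(100g)$; the expected number of non-isolated super-nodes is at most $g \cdot (g-1) s \le g/100$, so by Markov at least, say, $0.9g$ clusters are isolated super-nodes w.h.p. Hence each of the $g$ clusters is a connected component on its own (for the $\ge 0.9g$ isolated ones this is immediate; a short sharpening handles the rest or one simply states "most clusters are isolated" as the theorem does), giving $\Theta(g)$ components and $\Theta(g)$ independent tests. For the \emph{third regime}, I would bound the expected number of non-isolated vertices: a fixed vertex $v$ has $k-1$ potential within-cluster neighbors and $n-k$ potential cross-cluster neighbors, so $\Pr[v \text{ not isolated}] \le (k-1) r_1 + (n-k) r_2 \le 1/100 + 1/100 = 1/50$. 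Thus the expected number of non-isolated vertices is at most $n/50$, and a concentration/Markov argument yields $\Omega(n)$ isolated vertices w.h.p.; isolated vertices are singleton components, each needing its own test (they are independent), so $\Omega(n)$ tests are needed, matching the $r=0$ baseline.

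The main obstacle is the \emph{within-cluster connectivity threshold at the stated constant $100$}: the textbook statement of the $G(k,p)$ connectivity threshold is asymptotic in $k$, whereas here I need a clean finite-$k$ tail bound of the form "$p \ge 100 \log k / k \Rightarrow$ connected with probability $\ge 1 - k^{-c}$" with an explicit $c > 1$ so that the union bound over clusters (and over the two levels) closes. This is standard but must be done carefully: bound the probability that a fixed set $S$ of size $j \le k/2$ has no edges to its complement by $(1-r_1)^{j(k-j)} \le e^{-r_1 j k /2}$, sum $\binom{k}{j}$ over $j$, and check the geometric series is dominated for $r_1 \ge 100 \log k / k$. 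The analogous check at the cluster level (using $s \ge 100 \log g/g$) is the same computation with $k \to g$. Everything else — the isolation counts in regimes two and three — is a routine first-moment/Markov argument, and the constants $100$ are comfortably large enough that no delicate optimization is needed. I would also flag at the outset that the conclusion "$G$ is connected" in regimes one and four should read "$G_r$ is connected," since connectivity is a property of the realized graph after the correlation sampling.
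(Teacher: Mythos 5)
Your regimes 1--3 follow essentially the same route as the paper: within-cluster connectivity by the cut-counting computation (bound the probability that a fixed cut of size $j\le k/2$ has no crossing edge, multiply by $\binom{k}{j}\le k^j$, sum the geometric series), then contract clusters to super-nodes, observe the contracted graph is $G(g,\,1-(1-r_2)^{k^2})$ and re-run the same computation at the cluster level; the isolation claims in regimes 2 and 3 are the same first-moment/Markov counts the paper uses. One caution on what you call the main obstacle: do not weaken the hypothesis to $r_1\ge 100\log k/k$. With only $\log k$ in the exponent the per-cluster failure probability is $k^{-\Theta(1)}$, and since the model only assumes $k\gg\log n$ (so $k$ may be polylogarithmic in $n$ while $g=n/k$ is nearly linear in $n$), a constant exponent $c$ in $k^{-c}$ need not beat the union bound over $g$ clusters. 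The hypothesis gives $r_1\ge 100\log n/k$, which yields per-cluster failure $n^{-\Theta(1)}$ (the paper gets $(1/n)^{48}$), after which the union bound over clusters is immediate; keep the $\log n$, as the paper does.

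Regime 4 is where you genuinely diverge from the paper, and as written there is a gap. The inter-cluster edges alone do not form an Erd\H{o}s--R\'enyi graph on $n$ vertices: pairs inside a common cluster have no chance of receiving an inter-cluster edge, so ``$r_2\ge 100\log n/n$'' by itself does not give domination of $G(n,100\log n/n)$ --- the cross-edge graph is a random complete multipartite graph, whose connectivity needs its own argument. Your claim can be repaired in one line by invoking the model assumption $q_2<q_1$, hence $r_1=rq_1>rq_2=r_2\ge 100\log n/n$, so every pair (intra- or inter-cluster) has edge probability at least $100\log n/n$ and the monotonicity/coupling argument goes through; this is in fact simpler than the paper's proof. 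The paper instead deliberately uses no intra-cluster edges (treats the clusters as empty) and proves connectivity of the multipartite random graph directly by a cut argument, counting at least $i(n-k-i)$ potential cross edges for a cut of size $i\le n/2$, with a separate treatment of $g=2$ (where bipartiteness degrades the cross-edge count to roughly $3in/8$) and of cuts consisting of one or two nodes. So either make the $r_1>r_2$ domination explicit or reproduce the multipartite cut argument; as it stands the sentence that internal sparsity ``is irrelevant'' skips exactly the issue. (Your side remark that ``connected'' in regimes 1 and 4 refers to the realized graph $G_r$ is consistent with the paper's use of the effective probabilities $r_1,r_2$.)
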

\begin{proof}
    First, suppose $r_1 \geq \frac{100\log n}{k}$. A cut is a partition of nodes into two sets (parts), and its size is the number of nodes in the smaller set. We say a cut is disconnected if there is no edge between the sets. A cut of size $i \leq k/2$ in a single cluster has $i(k-i)$ potential edges between the parts. Then the probability that the specific cut is disconnected is 
    $$(1 - r_1)^{i(k-i)} \overset{(i)}{\leq} e^{-r_1 i(k-i)} \overset{(ii)}{\leq} e^{-100\log n i(1-i/k)} $$
    $$\overset{(iii)}{\leq} e^{-50 i \log gk} = (\frac{1}{gk})^{50i}.$$
    The first inequality, (i), is true because $1 - x \leq e^{-x}$ for $x\geq 0$, (ii) is true by $r_1 \geq \frac{100\log n}{k}$, and (iii) is true by $i \leq k/2$. Note that number of cuts of size $i$ is ${k \choose i}$, and by Union Bound, the probability that any cut of size $i$ becomes disconnected is at most $\sum_{i=1}^{k/2}(\frac{1}{gk})^{50i} {k \choose i}$. But ${k \choose i} \leq k^i$ by a simple counting argument, so the probability of a cut be disconnected is at most $\sum_{i=1}^{k/2}(\frac{1}{gk})^{50i} k^i < (\frac{1}{gk})^{48} = (1/n)^{48}$. So with probability $1 - (\frac{1}{n})^{48}$ a single cluster of size $k$ is connected.
    Again, by Union Bound, with probability at most $(\frac{1}{n})^{48} g < (\frac{1}{n})^{47}$ there is a disconnected cluster. So with probability $1 - (\frac{1}{n})^{47}$, all clusters are connected.
    
    Now if we assume all the clusters are connected, if there is an edge between two clusters, then those two clusters are connected. So if we consider a graph where the nodes represent the clusters and two nodes are connected if there is at least one edge between the corresponding clusters, then we need to understand the connectivity of the new graph. The probability that there is at least one edge between two clusters is $1 - (1-r_2)^{k^2}$, and again if this value is more than $\frac{100\log g}{g}$, then $G$ is connected with high probability. If $1 - (1-r_2)^{k^2} < \frac{1}{100g}$, then the probability that a cluster is isolated is more than $(1 - 1/(100g))^{g-1} \simeq e^{-1/100} \simeq 0.99$, so most of the clusters are isolated, which proves the first two parts of the theorem.
    
    If $r_1 \leq \frac{1}{100k}$, then with the same argument, with high probability most of the nodes in all clusters are isolated. If we also have $r_2 \leq \frac{1}{100n}$, then this means that most of the nodes don't have any neighbors outside of their cluster with high probability, so in total the graph has $\Omega(n)$ many isolated nodes, which proves the third part.
    
    Now suppose $r_2 \geq \frac{100\log n}{n}$, and we prove the last part of the theorem. We assume each cluster is empty, i.e. there is no edge in the cluster, and even when they're empty with $r_2 \geq \frac{100 \log n}{n}$, the graph is connected with high probability. Consider a cut in $G$ with $i \leq n/2$ nodes. Each node has $n-k$ potential neighbors in other clusters. So it has at least $n - k - i$ potential neighbors outside of its clusters and the chosen cut. Then, almost similar to the first part of the theorem, the probability that this cut is disconnected is at most 
    $$(1-r_2)^{i\cdot (n-k-i)} \leq e^{-r_2 \cdot i \cdot (n - k - i)}$$
    $$\leq e^{- 100 \log n \cdot i \cdot (\frac{n - k - i}{n})} \overset{(i)}{\leq} e^{- 100 \log n \cdot i \cdot (1/2 - 1/g)}$$
    $$= (\frac{1}{n})^{100i(1/2 - 1/g)}.$$
    Here, (i) is true by $i \leq n/2$ and $k/n = 1/g$. Again, there are ${n \choose i} \leq n^i$ cuts of size $i$. So the probability that any cut of size $i$ is disconnected is at most $n^i \cdot (\frac{1}{n})^{100i(1/2 - 1/g)} = (\frac{1}{n})^{100i(1/2 - 1/g - 0.01)}$. It is not hard to see that if $g > 2$, then $(\frac{1}{n})^{100i(1/2 - 1/g - 0.01)} = o(1/n^2)$. So the probability that any cut is disconnected is bounded by
    $$\sum_{i = 1}^{n/2} (\frac{1}{n})^{100i(1/2 - 1/g - 0.01)} \leq n \cdot o(1/n^2) = o(1/n).$$
    So in the case of $g > 2$, we've proved the last part of the theorem. If $g = 2$, for $i>2$, a cut of size $i$ has at most $i^2/4$ edges in the node set of size $i$, as the graph is bipartite and the number of edges in the set is maximized when $i/2$ nodes is chosen from each part of the graph. So the potential edges to the other side of the cut is at least $i(n-k) - i^2/4 = i(n-k- i/4) = i(\frac{n - i}{2}) \geq i \cdot 3n/8$, as $i \leq n/2$, and we can repeat the reasoning to prove that with high probability all cuts in this graph are connected. It is also easy to verify that when the cut is a single node or a pair of nodes, then the cut is disconnected with probability at most $o(1/n^4)$, and this completes the proof.
\end{proof}

Based on the previous theorem, we can now design a simple algorithm based on the parameters $r_1$ and $r_2$. In the first and the last regime, a single node is tested and the the result generalizes to all the nodes. In the second regime, we pick a candidate node from each cluster and perform independent group testing on them. The result of each candidate node generalizes for all the nodes in the correspondent cluster. Finally, in the third regime we perform independent group testing on the $n$ nodes of the graph.

\section{A Stronger Notion of Error}\label{sec:generalized}
So far, we have  focused on bounding the expected error of the algorithms, meaning the error is low (only) on average. But in many applications, we need to have a low error with high probability. As an example, let's say we need the error to be less than $e$. Under the weaker notion of error (bounding the average), we might have an error of $1.5 e$ half of the time, and for the other half have $.5 e$ error. So half of the time we don't satisfy the required error.  Similar to \cite{li2014group}, we introduce probabilistic error with relaxation on error-free prediction. Precisely, in \cite{li2014group} they wanted to find the defective set with probability $1- \delta$ such that all the nodes are correctly predicted, but we allow up to $\epsilon n$ mispredicted nodes.

We consider the following stronger notion of error: suppose that we want to have at most $\epsilon n$ mispredicted nodes with probability $1-\delta$, and for the $\delta$ other fraction we can have any number of mispredicted nodes. We refer to this notion of error as maximum error with parameters $\epsilon$ and $\delta$. This is a relaxation of the error compared to \cite{li2014group}, where $\epsilon = 0$, i.e. with probability $1-\delta$ we recover perfectly. Let $\textsc{CrltOPT}(G, r, p, \delta, \epsilon)$ be the expected number of tests in an optimal algorithm with maximum error with parameters $\epsilon$ and $\delta$. Parameters $r$ and $p$ are defined as in Section \ref{sec:model}.\\

In the following, we will provide lower bounds on the number of tests under the above notion of error. Then we will provide matching upper bounds for some of the graphs discussed so far. 

\subsection{Lower Bounds for the Stronger Error}

We first find a lower bound for $\textsc{CrltOPT}(G, r, p, \delta, \epsilon)$. Note that we  are no longer able to use  lower bounds of classic group testing directly, like Lemma~\ref{lem:components}, because an error in classical group testing (which affects the average error) might not be counted as an error in maximum error. In other words, in classical group testing, when an error happens with probability $\delta$, there is no guarantee on the number of mispredicted nodes, the error might be one or a constant or all the nodes, but only $\epsilon n$ mispredicted nodes are allowed in maximum error. So we need to find a lower bound {for the problem with maximum error definition and independent nodes} directly. We adapt the approach in \cite{li2014group} to derive  the following lower bound:

\begin{theorem}\label{Theorem: lowerboundnewerror}
 Let $\textsc{IndepOPT}(n,p,\delta,\epsilon)$ be the minimum number of tests for $n$ independent nodes under maximum error with parameters $(\delta, \epsilon)$. Then any Probabilistic Group Testing algorithm that,  with probability $1-\delta$, predicts all independent nodes but $\epsilon n$ of them correctly, needs $ n (1 - \delta) (H(p) - H(\epsilon))$ tests where $H$ is the binary entropy function. i.e.
$$\textsc{IndepOPT}(n,p,\delta,\epsilon) \geq n (1 - \delta) (H(p) - H(\epsilon)) - O(1).$$
\end{theorem}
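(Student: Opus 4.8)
The plan is to mimic the information-theoretic converse of \cite{li2014group} (our Theorem~\ref{theorem:lowerboundindep}), but to account for the fact that an ``error event'' here is not a single mispredicted node but a block of up to $\epsilon n$ mispredicted nodes. Let $\mathbf{X}\in\{0,1\}^n$ be the true defectiveness vector, with the $X_i$ i.i.d.\ Bernoulli($p$), so $H(\mathbf{X})=nH(p)$. Let $T$ be the (possibly adaptive, possibly randomized) sequence of test outcomes produced by an algorithm using an expected number of tests $t := \textsc{IndepOPT}(n,p,\delta,\epsilon)$, and let $\widehat{\mathbf{X}}=\widehat{\mathbf{X}}(T)$ be its reconstruction. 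The maximum-error guarantee says that with probability at least $1-\delta$ the Hamming distance $d_H(\mathbf{X},\widehat{\mathbf{X}})\le \epsilon n$. First I would bound the number of tests below by the entropy of the transcript: since each test has a binary outcome, $H(T)\le t$ (for randomized/adaptive schemes one conditions on the internal randomness $U$ and uses $H(T\mid U)\le t$, which suffices as $\mathbf{X}\perp U$). So it is enough to show $H(T)\ge n(1-\delta)(H(p)-H(\epsilon))-O(1)$.

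Next I would run the standard chain: $H(\mathbf{X})=I(\mathbf{X};T)+H(\mathbf{X}\mid T)\le H(T)+H(\mathbf{X}\mid T)$, so it remains to upper bound the residual uncertainty $H(\mathbf{X}\mid T)$ by $nH(p)-n(1-\delta)(H(p)-H(\epsilon))+O(1) = n(1-\delta)H(\epsilon)+n\delta H(p)+O(1)$. This is a Fano-type step tailored to list/block decoding. Introduce the indicator $E=\mathbf{1}\{d_H(\mathbf{X},\widehat{\mathbf{X}})>\epsilon n\}$, so $\Pr[E=1]\le\delta$. Condition on $T$ (equivalently on $\widehat{\mathbf{X}}$, which is a function of $T$) and split:
\begin{align*}
H(\mathbf{X}\mid T) \le H(\mathbf{X},E\mid T) = H(E\mid T) + H(\mathbf{X}\mid T,E).
\end{align*}
The first term is at most $H(\delta)=O(1)$. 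For the second term, split on the value of $E$: when $E=0$, $\mathbf{X}$ lies in the Hamming ball of radius $\epsilon n$ around the known vector $\widehat{\mathbf{X}}$, which has at most $2^{nH(\epsilon)}$ points (using $\sum_{j\le \epsilon n}\binom{n}{j}\le 2^{nH(\epsilon)}$), so that conditional entropy is $\le nH(\epsilon)$; when $E=1$, $\mathbf{X}$ ranges over at most $2^n$ points, contributing $\le n$. Weighting by the probabilities of $E$ gives $H(\mathbf{X}\mid T,E)\le (1-\delta)nH(\epsilon)+\delta n$. Assembling: $nH(p) \le H(T) + (1-\delta)nH(\epsilon) + \delta n + O(1)$, hence
\begin{align*}
H(T) \ge n H(p) - (1-\delta)nH(\epsilon) - \delta n - O(1).
\end{align*}

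The only gap between this and the claimed bound $n(1-\delta)(H(p)-H(\epsilon))-O(1)$ is the term $-\delta n$ versus $-\delta n H(p)$ (note $nH(p)-(1-\delta)nH(\epsilon)-\delta n = n(1-\delta)(H(p)-H(\epsilon)) -\delta n(1-H(p))$). So the main technical point — and the step I expect to need the most care — is tightening the $E=1$ branch: instead of bounding $H(\mathbf{X}\mid T, E=1)\le n$, one uses that $\mathbf{X}$ is still i.i.d.\ Bernoulli($p$) and conditioning on an event cannot increase entropy beyond $H(\mathbf{X})=nH(p)$ by more than a $\log(1/\Pr[\cdot])$ correction; concretely, $\Pr[E=1]\,H(\mathbf{X}\mid T,E=1)\le H(\mathbf{X}\mid E=1)\Pr[E=1]$ is not directly what we want, so instead write $H(\mathbf{X}\mid T,E)\le H(\mathbf{X}\mid E)\le H(\mathbf{X})+H(E)\le nH(p)+O(1)$ jointly — but that loses the split. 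The clean fix is: $H(\mathbf{X}\mid T,E) = \Pr[E=0]H(\mathbf{X}\mid T,E=0)+\Pr[E=1]H(\mathbf{X}\mid T,E=1)$, bound the first by $nH(\epsilon)$ as above, and for the second note $H(\mathbf{X}\mid T,E=1)\le H(\mathbf{X}\mid E=1)$ and then control $H(\mathbf{X}\mid E=1)$ via the grouping/continuity-of-entropy argument from \cite{li2014group} to get $\le nH(p)+O(1)$; multiplying by $\Pr[E=1]\le\delta$ yields the $\delta nH(p)$ term and the theorem follows, absorbing constants into $O(1)$. I would double-check the adaptive case (the transcript length is a random variable, so $H(T)\le \E[\text{number of tests}]$ requires a prefix-free/Kraft argument, exactly as in \cite{li2014group}), and confirm the binary-entropy volume bound $\sum_{j=0}^{\epsilon n}\binom nj\le 2^{nH(\epsilon)}$ is applied only for $\epsilon\le 1/2$, which is the regime of interest.
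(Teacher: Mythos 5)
Your proposal follows essentially the same route as the paper's proof: bound the number of tests by the transcript entropy via data processing, introduce the indicator $E$ of the event that more than $\epsilon n$ nodes are mispredicted, and bound $H(\mathbf{X}\mid \mathbf{Y},E=0)$ by the log-volume of a Hamming ball of radius $\epsilon n$, i.e.\ $\log\bigl(c'\binom{n}{\epsilon n}\bigr)\simeq nH(\epsilon)$. The only divergence is the $E=1$ branch, and you have put your finger on exactly the delicate point. The paper disposes of it in one line, bounding $\Pr[E=1]\,H(\mathbf{X}\mid\mathbf{Y},E=1)\le \delta H(\mathbf{X})=\delta n H(p)$, which is what produces the stated constant $n(1-\delta)(H(p)-H(\epsilon))$; your safe bound $H(\mathbf{X}\mid \mathbf{Y},E=1)\le n$ gives only the slightly weaker $nH(p)-(1-\delta)nH(\epsilon)-\delta n-O(1)$. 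Your proposed repair, $H(\mathbf{X}\mid E=1)\le nH(p)+O(1)$, is not a valid inequality in general: conditioning on an event can inflate entropy well beyond $H(\mathbf{X})$ (if the failure event is supported on atypically low-probability realizations of $\mathbf{X}$, $H(\mathbf{X}\mid E=1)$ can approach $n\log(1/p)\gg nH(p)$), so this step does not rigorously close the gap --- although the paper's own one-line bound makes the identical jump ($H(\mathbf{X}\mid\mathbf{Y},E=1)\le H(\mathbf{X})$) without justification, so your write-up lands at the same place and is more transparent about where extra care (or an extra argument about the algorithm) is needed. Your side remarks on the adaptive/variable-length transcript (a Kraft-type argument so that the transcript entropy is controlled by the expected number of tests) and on restricting $\epsilon\le 1/2$ for the volume bound address points the paper also glosses over.
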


\begin{proof}
The proof is a modified version of the proof of Theorem~\ref{theorem:lowerboundindep} in \cite{li2014group}. Let $\mathbf{X}$ be the vector of states of the nodes, $\mathbf{B}$ be the vector of the result of the group tests for a testing strategy of choice, and $\mathbf{Y}$ be the estimated states of the nodes. Then $\mathbf{X} \rightarrow \mathbf{B} \rightarrow \mathbf{Y}$ form a Markov chain. Thus, by data processing inequality, $I(\mathbf{X} ; \mathbf{B}) \geq I(\mathbf{X} ; \mathbf{Y})$. Also we have $I(\mathbf{X} ; \mathbf{B}) = H(\mathbf{B}) - H(\mathbf{B} \mid \mathbf{X}) \leq H(\mathbf{B}).$ Now, $H(\mathbf{B}) \leq 
\log _{2}|\mathbf{B}|$, where $|\mathbf{B}|$ indicates the number of possible values of the random vector $\mathbf{B}$. Since $\mathbf{B}$ represents the result vector, the number of possible values of this vector is \textcolor{black}{at most} $2^T$, where $T$ is the number of tests. Thus, $T {\color{black}{\geq}} \log _{2}|\mathbf{B}|$. Thus, combining the above inequalities, $T \geq I(\mathbf{X} ; \mathbf{Y}).$

Moreover,
$$
H(\mathbf{X})=H(\mathbf{X} \mid \mathbf{Y})+I(\mathbf{X} ; \mathbf{Y}).
$$ Thus, $T \geq H(\mathbf{X}) - H(\mathbf{X} \mid \mathbf{Y}).$ Since $X$ represents the state of $n$ independent nodes, each of which is defective with probability $p,$ $H(\mathbf{X}) = n H(p)$.
Thus,
\begin{equation}
T \geq n H(p) - H(\mathbf{X} \mid \mathbf{Y}). \label{eq:firstbound}
\end{equation}

We now obtain an upper bound for $H(\mathbf{X} \mid \mathbf{Y}).$
Define the error random variable $E$ such that
$$
E= \begin{cases}1, & \text { if } \|\mathbf{Y} - \mathbf{X}\|_0 > \epsilon n \\ 0, & \text { if } \|\mathbf{Y}-\mathbf{X}\|_0 \leq \epsilon n \end{cases}
$$
where $\|.\|_0$ is the number of non-zero elements in a vector.
We can bound the conditional entropy as follows
\begin{align}
H(\mathbf{X} \mid \mathbf{Y}) &=H(E, \mathbf{X} \mid \mathbf{Y}) \nonumber\\
&=H(E \mid \mathbf{Y})+\operatorname{Pr}[E=0] H(\mathbf{X} \mid \mathbf{Y}, E=0)\nonumber\\
&+\operatorname{Pr}[E=1] H(\mathbf{X} \mid \mathbf{Y}, E=1) \nonumber\\
& \leq 1+(1-\delta) H(\mathbf{X} \mid \mathbf{Y}, E=0) + \delta H(\mathbf{X})\nonumber\\
& \leq 1+(1-\delta) H(\mathbf{X} \mid \mathbf{Y}, E=0) +  n \delta H(p).
\label{eq: bigbound}
\end{align}
We now upper bound $H(\mathbf{X} \mid \mathbf{Y}, E=0)$:

\begin{align}
&H(\mathbf{X} \mid \mathbf{Y}, E=0) =\nonumber\\
& \sum_i \operatorname{Pr}[Y = y_i| E = 0] H(\mathbf{X}\mid Y = y_i, E = 0] \nonumber\\
&\leq \sum_i \operatorname{Pr}[Y = y_i]  \log c'{n \choose \epsilon n}\nonumber\\
& = \log {n \choose \epsilon n} + c' \simeq n H(\epsilon). \label{eq: toreplace1}
\end{align}
To obtain the inequality, we note that given  that $E = 0$, \textcolor{black}{there are at most $\epsilon n$ bits in which $\mathbf{X}$ differs from $\mathbf{Y}$. Thus, given a value of $\mathbf{Y}$ and that $E = 0$, there are at most $\sum_{i =0}^{\epsilon n} {n \choose i} \leq c' {n \choose \epsilon n}$ values of $\mathbf{X}$, where $c'$ is a constant.} The result follows by recalling that the entropy for any random variable with $r$ values is \textcolor{black}{at most} $\log r$. The Theorem follows by putting together (\ref{eq:firstbound}), (\ref{eq: bigbound}) and (\ref{eq: toreplace1}). 

\end{proof}
Analogous to Lemma~\ref{lem:components}, we immediately get the following lemma:
\begin{lemma}\label{lemma: newlowbound}
Let $C(G_r)$ be the number of connected components in $G_r$. Then, under  a maximum error target, we have
\begin{align}
C(G) (1 - \delta) (H(p) - H(\epsilon))  \leq \textsc{CrltOPT}(G, r, p, \delta, \epsilon). \label{eq: old}
\end{align}
\end{lemma}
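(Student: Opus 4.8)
The plan is to mirror the proof of Lemma~\ref{lem:components}, substituting the classical lower bound by the new maximum-error bound of Theorem~\ref{Theorem: lowerboundnewerror}. First I would pass to the idealized scenario in which the realization $G_r$ — and hence its partition into connected components — is revealed to the testing algorithm; revealing information can only decrease the number of tests, so any lower bound obtained here also lower-bounds $\textsc{CrltOPT}(G,r,p,\delta,\epsilon)$, exactly as in Lemma~\ref{lem:components} and the remark following it. Conditioned on a realization with $c = C(G_r)$ components, all nodes of a given component share a common state that is Bernoulli$(p)$, independently across components; contracting each component to a single super-node therefore produces exactly an instance of independent group testing on $c$ items, each defective with probability $p$.

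Next I would translate the error guarantee. Every prediction of the $c$ super-node states induces a prediction of the $n$ original states (constant on each component), and each mispredicted super-node forces at least one mispredicted original node. Hence an algorithm that, with probability $1-\delta$, mistakes at most $\epsilon n$ original nodes, when run in this idealized scenario, mistakes at most $\epsilon n$ of the $c$ super-nodes with probability $1-\delta$, i.e.\ it solves the $c$-item independent problem under a maximum-error target with parameters $(\delta,\epsilon')$ for any $\epsilon'$ with $\epsilon' c \ge \epsilon n$. Applying Theorem~\ref{Theorem: lowerboundnewerror} to this $c$-item instance then yields a lower bound of $c(1-\delta)\bigl(H(p)-H(\epsilon')\bigr)-O(1)$ tests, and plugging in $c = C(G_r)$ gives \eqref{eq: old}. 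Concretely one takes $\epsilon' = \epsilon n/C(G_r)$, so the precise entropy term is $H(\epsilon n/C(G_r))$; this matches the form in the statement when $C(G_r) = \Theta(n)$, as holds for the graph families we analyze, and in general one reads $\epsilon$ in \eqref{eq: old} as the mispredicted fraction among the $C(G_r)$ super-nodes. Alternatively, one can re-run the information-theoretic chain of Theorem~\ref{Theorem: lowerboundnewerror} verbatim while conditioning throughout on the component partition: then $\mathbf{X}$ is a deterministic function of $c$ i.i.d.\ Bernoulli$(p)$ bits, so $H(\mathbf{X}\mid\text{partition}) = c\,H(p)$, while on the event $E=0$ (at most $\epsilon n$ wrong original nodes, hence at most $\epsilon n$ wrong super-nodes) one has $H(\mathbf{X}\mid\mathbf{Y},\text{partition},E=0) \le \log\binom{c}{\epsilon n}+O(1)$, and the same algebra as in Theorem~\ref{Theorem: lowerboundnewerror} closes the argument.

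Finally, since $C(G_r)$ is random, I would note — exactly as after Lemma~\ref{lem:components} — that the bound holds for the realized number of components, and that for a specific graph family it is made quantitative by Corollary~\ref{corol: conenctcomp}: one replaces $C(G_r)$ by $\mathbb{E}[C(G_r)]$ up to an additive $o(n)\sqrt{\log(1/\delta)}$, with the low-probability exceptional event contributing only an $O(1/\mathrm{poly}(n))$ correction, in the style of Theorem~\ref{theorem:cycletreelowbound}.

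The step I expect to be the main obstacle is precisely the error translation. A misprediction that counts toward the error of classical (independent) group testing need not violate an $\epsilon n$-maximum-error target, and conversely bounding the number of mispredicted super-nodes by $\epsilon n$ rather than by a fraction of $C(G_r)$ is what determines whether the entropy term reads $H(\epsilon)$ or $H(\epsilon n/C(G_r))$; pinning down this correspondence — and checking that one is in the regime $\epsilon n \le C(G_r)/2$ in which $\binom{C(G_r)}{\epsilon n}$ is the relevant count — is the only delicate point, the component-revealing reduction and the mutual-information steps being routine once it is in place.
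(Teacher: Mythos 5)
Your proposal takes essentially the same route as the paper, which proves this lemma only implicitly by stating it is ``analogous to Lemma~\ref{lem:components}'': reveal the components of $G_r$, contract each to a super-node with i.i.d.\ Bernoulli$(p)$ states, and apply Theorem~\ref{Theorem: lowerboundnewerror} to the resulting independent instance on $C(G_r)$ items. Your additional care about the error translation --- that the guarantee of at most $\epsilon n$ mispredicted nodes yields at most $\epsilon n$ mispredicted super-nodes, so the entropy term is strictly $H(\epsilon n / C(G_r))$ and matches the stated $H(\epsilon)$ when one reads $\epsilon$ as the mispredicted fraction of super-nodes (or when $C(G_r)=\Theta(n)$) --- is a point the paper glosses over rather than a gap in your argument.
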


\noindent \textcolor{black}{Using Lemma~\ref{lemma: newlowbound} along with our concentration results on the number of connected components for cycles, trees and grids, we find lower bounds on the number of tests for our maximum error criteria.
Recall that the lower bound for the average error $\epsilon$ has the form of $C(G) (1 - \epsilon n) H(p)$ for a graph $G$, while under the stronger notion of error we have $C(G) (1-\delta) (H(p) - H(\epsilon))$. For the regime where $\epsilon = c/n, c<1,$ the lower bound for average error and the stronger error simplify to $C(G)(1-c)H(p)$ and $C(G)(1-\delta) (H(p) - \frac{\log n}{n}) \simeq C(G)(1-\delta)H(p)$, respectively, and when $\delta < c$ we get an improvement.}  \\

\textcolor{black}{As discussed, there is  a gap between our lower and upper bounds. We next show that the lower bound can be improved by capturing the underlying topology of the graph more heavily.}

\subsection{An Improved Lower Bound for the Star Graph}

\textcolor{black}{Theorem~\ref{Theorem: lowerboundnewerror} does not depend on the underlying graph $G$. However, if we take $G$ into account, we can infer information about the state of the nodes, at least for some graphs. Here, we give an improved lower bound for  star graphs, where there is a node with degree $n-1$ and all the other nodes are leaves. 
}
\begin{theorem}\label{Theorem: starlowerbound}
When the underlying graph $G$ is a star, we have
\begin{align}
&n(1-\delta) \left( H(r) + (1-r) H(p) - H(\epsilon) - 1 + p(1-p) (1-r) 
H(r') + o(1)\right)- O(1) \nonumber\\
&\leq \textsc{CrltOPT}(G, r, p, \delta, \epsilon)\label{eq:new}
\end{align}

where $r' = \frac{r}{r+(1-r)(p^2+ (1-p)^2)}$. 
\end{theorem}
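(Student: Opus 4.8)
The plan is to run the information‑theoretic skeleton of Theorem~\ref{Theorem: lowerboundnewerror} verbatim and only replace the state‑vector entropy $nH(p)$ (valid for $n$ independent nodes) by the smaller, but more structured, entropy $H(\mathbf X)$ induced by the star. Recall that when $G$ is a star, every realization $G_r$ is a single component containing the centre together with all leaves whose incident edge survived, plus the remaining leaves as isolated singletons. Introduce the independent coins $Z_i=\mathbf 1\{\text{the edge to leaf }i\text{ survives}\}\sim\mathrm{Bernoulli}(r)$, $D\sim\mathrm{Bernoulli}(p)$ the state of the centre's component, and $E_i\sim\mathrm{Bernoulli}(p)$ the states of the isolated leaves; then the state vector is $X_0=D$ for the centre and $X_i=D$ if $Z_i=1$, $X_i=E_i$ otherwise. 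With $\mathbf X\to\mathbf B\to\mathbf Y$ the usual Markov chain (true states, test results, estimates) and $E$ the event $\|\mathbf Y-\mathbf X\|_0>\epsilon n$, the chain of inequalities in the proof of Theorem~\ref{Theorem: lowerboundnewerror} goes through with no change and yields
\[
\textsc{CrltOPT}(G,r,p,\delta,\epsilon)\ \geq\ (1-\delta)\bigl(H(\mathbf X)-nH(\epsilon)\bigr)-O(1),
\]
(the $nH(\epsilon)$ being the Stirling estimate of $\log\binom{n}{\epsilon n}$, so it carries an additive lower‑order term). Thus everything reduces to lower‑bounding $H(\mathbf X)$ in a way that uses the star topology.

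\textbf{Entropy of the star state vector.} Conditioned on $X_0$, the leaf states $X_1,\dots,X_{n-1}$ are functions of disjoint subsets of the independent coins $\{Z_i,E_i\}$, hence conditionally independent, so $H(\mathbf X)=H(X_0)+(n-1)H(X_1\mid X_0)\geq (n-1)H(X_1\mid X_0)$. To analyse $H(X_1\mid X_0)$ I would reveal the edge coin $Z_1$: given $(X_0,Z_1=1)$ the value $X_1$ is determined, while given $(X_0,Z_1=0)$ we have $X_1=E_1\sim\mathrm{Bernoulli}(p)$, so $H(X_1\mid X_0,Z_1)=(1-r)H(p)$. The chain rule then gives the identity
\[
H(X_1\mid X_0)=(1-r)H(p)+I(X_1;Z_1\mid X_0),
\]
in which the first term is precisely the naive ``component'' contribution (there are in expectation $(n-1)(1-r)$ isolated leaves) and the second term is the extra leverage that knowledge of $G$ provides, namely that the realized states partially reveal which edges survived. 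Since $Z_1$ is independent of $X_0$ we have $I(X_1;Z_1\mid X_0)=H(r)-H(Z_1\mid X_1,X_0)$, so it remains to upper bound $H(Z_1\mid X_1,X_0)$.

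\textbf{Bounding $H(Z_1\mid X_1,X_0)$ and concluding.} The key observation is that $Z_1$ is forced to $0$ whenever $X_1\neq X_0$, so replacing $(X_1,X_0)$ by the coarser indicator $M_1:=\mathbf 1\{X_1=X_0\}$ can only increase conditional entropy: $H(Z_1\mid X_1,X_0)\leq H(Z_1\mid M_1)$. A direct computation gives $\Pr[X_1=X_0]=r+(1-r)(p^2+(1-p)^2)$ (the event $\{X_1=X_0\}$ occurs either through $Z_1=1$ or through an independent agreement of $E_1$ and $D$), conditioned on $\{X_1=X_0\}$ one has $\Pr[Z_1=1]=r/\bigl(r+(1-r)(p^2+(1-p)^2)\bigr)=r'$, and conditioned on $\{X_1\neq X_0\}$ we have $Z_1=0$; hence $H(Z_1\mid M_1)=\bigl(1-2p(1-p)(1-r)\bigr)H(r')\leq 1-p(1-p)(1-r)H(r')$, using $H(r')\leq1$. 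Therefore $I(X_1;Z_1\mid X_0)\geq H(r)-1+p(1-p)(1-r)H(r')$, so $H(X_1\mid X_0)\geq H(r)+(1-r)H(p)-1+p(1-p)(1-r)H(r')$ and $H(\mathbf X)\geq n\bigl(H(r)+(1-r)H(p)-1+p(1-p)(1-r)H(r')\bigr)-O(1)$. Substituting this into the displayed inequality of the first paragraph (and absorbing the lower‑order Stirling term into the $o(1)$ inside the parenthesis) gives exactly the stated bound.

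\textbf{Main obstacle.} The information‑theoretic wrapper is inherited; the only genuinely new content is the entropy estimate for the star. The delicate step is recognizing how to extract an ``extra'' $I(X_i;Z_i\mid X_0)$ beyond the component bound, and in particular identifying $r'=r/\bigl(r+(1-r)(p^2+(1-p)^2)\bigr)$ as the posterior probability that a leaf agreeing with the centre is in fact connected to it, together with the data‑processing collapse from the two variables $(X_i,X_0)$ to the single indicator $\mathbf 1\{X_i=X_0\}$, which converts an awkward conditioning into the entropy of one biased coin. A secondary, routine point is to verify that the reduction is stated for the \emph{expected} number of tests, which is handled exactly as in the proof of Theorem~\ref{Theorem: lowerboundnewerror}.
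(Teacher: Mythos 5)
Your proof is correct and follows essentially the same route as the paper: both introduce the edge-realization variables, isolate the same posterior probability $r' = r/\bigl(r+(1-r)(p^2+(1-p)^2)\bigr)$ via the agreement-with-the-center event, and use $H(r')\leq 1$ to reach the stated bound. The only difference is packaging — the paper augments the Markov chain to $(\X,\G)\rightarrow\B\rightarrow\Y$ and subtracts $H(\G\mid\X)$, whereas you keep the chain $\X\rightarrow\B\rightarrow\Y$ and lower-bound $H(\X)$ directly through the per-leaf identity $H(X_1\mid X_0)=H(r)+(1-r)H(p)-H(Z_1\mid X_1,X_0)$, which is the same computation since $H(\X)=H(\X,\G)-H(\G\mid\X)$.
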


\begin{proof}
Let $\mathbf{X}$, $\mathbf{B}$ and $\mathbf{Y}$ be defined same as in the proof of Theorem~\ref{Theorem: lowerboundnewerror}. Let $\mathbf{G}$ be a random binary vector where the $i$'th coordinate is 1 iff the $i$'th edge of $G$ is realized (with probability $r$). Then $(\textbf{X},\textbf{G}) \rightarrow \textbf{B} \rightarrow \textbf{Y}$ forms a Markov chain. We are interested in $I(\X,\G ; \Y)$ because

$$I(\X, \G ; \Y) \leq I(\X, \G ; \B) \leq H(\B) \leq \log |\B| = T$$
where $T$ is the number of test. Write 
\begin{align}
I(\X, \G ; \Y) = H(\X, \G) - H(\X, \G | \Y) \label{eq:1}.
\end{align}

We now bound $H(\X, \G | \Y)$. Let $E = 1$ if $\|X-Y\|_0 > \epsilon n$ and $E = 0$ otherwise, same as Theorem~\ref{Theorem: lowerboundnewerror}. Then
\begin{align}
 H(\X, \G | \Y) &= H(\X,\G, E | \Y) \nonumber\\
& = H(E | \Y) + \delta H(\X, \G | \Y, E = 1) + (1-\delta)H(\X, \G | \Y , E = 0). \label{eq:lasteq}
\end{align}
By writing $H(E | \Y)\leq 1$ and $H(\X, \G | \Y, E = 1) \leq H(\X, \G)$ and replacing Eq~\eqref{eq:lasteq} in  Eq~(\ref{eq:1}), we get
\begin{align} \label{eq:2}
& I(\X, \G ; \Y) \geq 
 (1-\delta) (H(\X, \G) - H(\X,\G | \Y, E = 0))-1.
\end{align}
As $G$ is a tree and every component is independently infected with probability $p$, we can write $$H(\X,\G) = H(\G) + H(\X | \G) \simeq n H(r) + (1-r)n H(p) = n\left(H(r) + (1-r)H(p)\right).$$
Now we bound $H(\X,\G | \Y, E = 0) = H(\X | \Y , E= 0) + H(\G|\X)$. For the first term, in Theorem~\ref{Theorem: lowerboundnewerror} we found 
\begin{align}
H(\X | \Y , E = 0) \lesssim n H(\epsilon). \label{eq:Heps}    
\end{align}

\noindent To bound the second term, note that the underlying graph $G$ is a star and with probability $1 - 1/n^{2}$, the number of nodes in a different state than the center is $2np(1-p)(1-r) \pm o(n)$, so the contribution outside of this range to $H(\mathbf{G}|\mathbf{X})$ is only $o(1/n)$. The edges connected to such nodes can not be realized, so $n(1 - 2p(1-p)(1-r))$ edges are still uncertain, and knowing that the two end-points are in the same state, each of such edges are realized with probability $r' = \frac{r}{r+(1-r)(p^2+ (1-p)^2)}$. So
\begin{align}
H(\G | \X) \leq n\left(1 - 2p(1-p)(1-r)\right) H(r'). \label{eq:HGX}
\end{align}

Again, by replacing all in Eq~(\ref{eq:2}), we get  a lower bound on the number of tests:
\begin{align}
T&{\geq} I(\X, \G ; \Y) \nonumber\\
& \stackrel{a}{\geq} (1-\delta) (H(\X, \G) - H(\X,\G | \Y, E = 0)) \nonumber\\
& = (1-\delta) (H(\G) + H(\X | \G) - H(\X | \Y , E= 0) - H(\G|\X)) -1\nonumber\\
&\stackrel{b}{\geq}(1-\delta) (nH(r) + n(1-r)H(p) - nH(\epsilon)   - H(\G|\X) + o(n H(r))) \label{eq:beforestar} -1\nonumber\\
& \stackrel{c}{\geq} (1-\delta)(n (H(r) + (1-r)H(p)) - (nH(\epsilon) \nonumber +n(1 - p(1-p)(1-r)) H(r'))+ o(n)) -1\nonumber\\
&= n(1-\delta) (H(r) + (1-r) H(p) - H(\epsilon) - 1 \nonumber + p(1-p) (1-r) H(r') + o(1)) -O(1)
\end{align}

Where $a$ is by Eq~\ref{eq:2}, $b$ is by Eq~\ref{eq:Heps} and $c$ is by Eq~\ref{eq:HGX}.

\end{proof}
\begin{figure}
    \centering
    \includegraphics[scale = .3]{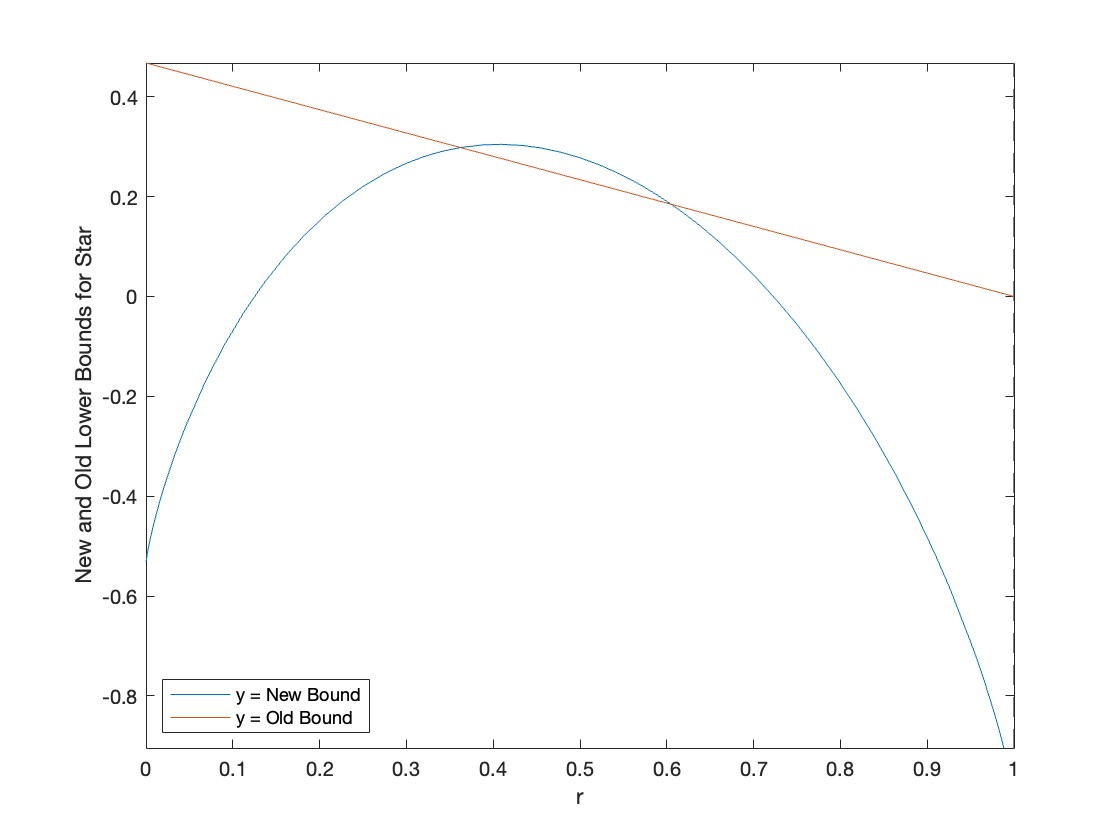}
    \caption{A comparison of the new lower bound and the old upper bound for $p = 0.1$ and $\epsilon = 0.0001$}
    \label{fig:starlow}
\end{figure}

\textcolor{black}{We compare the lower bound in \eqref{eq:new}(Star-specific bound) with the lower bound in \eqref{eq: old} (generic bound) in Figure~\ref{fig:starlow}. One sees  that the lower bound in \eqref{eq:new} is strictly larger than \eqref{eq: old} for a range of $r$ around $r=\frac{1}{2}$. This range corresponds to cases where uncertainty about the edge set of $G_r$ is high, and therefore knowledge about the existence of (or lack thereof) an edge is very informative as captured in our way of upper bounding $H(\mathbf{G}|\mathbf{X})$.
Even though the improvement offered by the lower bound in \eqref{eq:new} is relatively small, it suggests that generic lower bounds are likely to be loose for our correlated group testing problem and the structure of the underlying correlation graph $G$ need to be considered in order to obtain tighter lower bounds.}


\subsection{Upper Bounds}

We now find  upper bounds similar to those that we provided in Sections~\ref{sec:cycleandtree} and \ref{sec:grid}. Under an average error target, we partitioned the graph and computed the incurred average error. Under maximum error targets, however, we don't know what fraction of the realizations with average error $\epsilon n$ actually has less than $\epsilon n$ mispredicted nodes, so we can't use those results and need to prove concentrations around the average.
For cycles and grids, the subgraphs we designed were independent of each other, in the sense that the connectivity of a subgraph would not change the probability of connectivity of the other subgraphs. Hence by using Hoeffding's bound we prove the following theorem for a maximum error target.

\begin{theorem}\label{theorem:cyclenew}
Consider a cycle of length $n$ and let $l = \max \{\frac{\log [1/(1- \epsilon / 2)]}{\log 1/r}, 1 \}$, $\delta > 2e^{-\Theta(\frac{\epsilon^2 n \log(1/r)}{\log( \frac{1}{1 - \epsilon/2})})}$, and $\epsilon' < \delta/2$. There is an algorithm that uses $\textsc{IndepOPT}(\lceil n/l \rceil, p, \epsilon')$ tests and finds the defective set with maximum error with  parameters $\epsilon$ and $\delta$, i.e.
$$\textsc{CrltOPT}(Cycle, r, p, \delta, \epsilon) \leq \textsc{IndepOPT}(\lceil n/l \rceil, p, \epsilon')$$

\end{theorem}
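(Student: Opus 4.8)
The plan is to reuse verbatim the algorithm from the proof of Theorem~\ref{theorem:cycle}: partition the cycle into $\lceil n/l\rceil$ paths $P_1,\dots,P_{\lceil n/l\rceil}$ of length $l$ (one possibly shorter) with $l=\max\{\log[1/(1-\epsilon/2)]/\log(1/r),1\}$, pick a uniformly random representative $v_{P_i}$ in each path, run an $\textsc{IndepOPT}(\lceil n/l\rceil,p,\epsilon')$ scheme on the representatives, and assign every node of $P_i$ the state decided for $v_{P_i}$. The average-error analysis of Theorem~\ref{theorem:cycle} already bounds the expected number of errors; what remains is to promote it to a \emph{maximum}-error guarantee, and the structural fact that makes this possible is that distinct paths $P_i$ use pairwise disjoint edge sets of the cycle, so the events ``$P_i$ stays connected in $G_r$'' are mutually independent, each of probability $r^{l-1}$.

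Concretely, I would set $Z_i=\mathbf{1}[P_i\text{ is disconnected in }G_r]$; these are independent $\mathrm{Bernoulli}(1-r^{l-1})$ variables with $1-r^{l-1}\le 1-r^{l}=\epsilon/2$ by the choice of $l$. Conditioned on all representatives being decided correctly, the number of mispredicted nodes is at most $l\sum_i Z_i$, whose mean is at most $\lceil n/l\rceil\, l\,(\epsilon/2)\simeq \epsilon n/2$. Applying Hoeffding's inequality to the $\lceil n/l\rceil$ independent $[0,1]$-valued summands with deviation $t=\epsilon n/(2l)$ yields
\begin{align*}
\Pr\!\Big[\,l\textstyle\sum_i Z_i>\epsilon n\,\Big]
&\le\exp\!\Big(-\tfrac{2t^2}{\lceil n/l\rceil}\Big)
\le\exp\!\Big(-\Theta\big(\tfrac{\epsilon^2 n}{l}\big)\Big)
=\exp\!\Big(-\Theta\big(\tfrac{\epsilon^2 n\log(1/r)}{\log(1/(1-\epsilon/2))}\big)\Big),
\end{align*}
where the last equality substitutes $l=\max\{\log[1/(1-\epsilon/2)]/\log(1/r),1\}$. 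The hypothesis $\delta>2e^{-\Theta(\epsilon^2 n\log(1/r)/\log(1/(1-\epsilon/2)))}$ is exactly the condition making this probability at most $\delta/2$.

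I would then conclude with a union bound over the two failure modes. The $\textsc{IndepOPT}$ subroutine decides every representative correctly with probability at least $1-\epsilon'\ge 1-\delta/2$; on the complementary event we make no claim about the number of errors, which is permitted inside the $\delta$-fraction. Independently of that event, with probability at least $1-\delta/2$ we have $l\sum_i Z_i\le\epsilon n$. On the intersection, of probability at least $1-\delta$, every misprediction arises from a disconnected path and there are at most $\epsilon n$ of them --- precisely maximum error with parameters $\epsilon$ and $\delta$ --- while the number of tests is $\textsc{IndepOPT}(\lceil n/l\rceil,p,\epsilon')$ as claimed.

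There is no deep obstacle here: the independence of the path-connectivity events, which drives the whole proof, is immediate from edge-disjointness of the $P_i$. The only thing requiring care is bookkeeping --- the rounding $\lceil n/l\rceil l$ versus $n$, the possibly shorter final path, the $r^{l-1}$-versus-$r^{l}$ slack, and verifying that the Hoeffding exponent matches the $\Theta(\cdot)$ appearing in the stated lower bound on $\delta$ --- all of which is routine. I note that the same argument carries over to grids, whose subgrids partition the edge set as well, whereas for trees the subgraphs of Lemma~\ref{lem:treepart} may become connected through shared nodes and this independence fails, which is why that case instead needs a node-exposure martingale argument.
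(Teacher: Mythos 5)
Your proposal is correct and matches the paper's own argument: the same partition-and-representative algorithm from Theorem~\ref{theorem:cycle}, the observation that edge-disjointness of the paths makes the connectivity events independent, Hoeffding's inequality to concentrate the number of mispredicted nodes at $\epsilon n$ with probability $1-\delta/2$, and a union bound with the $\epsilon'<\delta/2$ failure of the \textsc{IndepOPT} subroutine. Your version simply spells out the bookkeeping (the $Z_i$ indicators, the deviation $t=\epsilon n/(2l)$, and the substitution of $l$ into the exponent) that the paper leaves implicit.
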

\begin{proof}
We use the same algorithm and the same subgraphs as in Theorem~\ref{theorem:cycle}. Recall that the probability of one subgraph not being connected is $r^{l-1}$ and the average error is $\epsilon n /2$.
    The number of mispredicted nodes in each subgraph is in $[0,l]$ and they are independent of each other, meaning the connectivity of a group does not change the connectivity of another group, so by Hoeffding's bound, with probability at least $1-\exp[-\Theta(\frac{\epsilon^2 n \log(1/r)}{\log( \frac{1}{1 - \epsilon/2})})] > 1 - \delta/2$, the error is at most $\epsilon n$. Also with probability more than $1-\epsilon' > 1-\delta/2$ all candidates are predicted correctly, so with probability more than $1-\delta$ the classic group tests detect all the defective nodes with no error, and assuming this, the error on subgraphs is less than $\epsilon n$ and we're done.
\end{proof}
The same reasoning works for {subgrids of a grid and leads to the same upper bound with the parameters $\delta > 2e^{-\Theta(\frac{\epsilon^2 n \log(1/r)}{\log(\frac{1}{1 - \epsilon/2})})}$, and $\epsilon' < \delta/2$.}  But for trees, we can't use Hoeffding's bound because the connectivity of a subgraph can affect the others, for instance the absence of an edge might make several groups disconnected, hence the groups are not independent anymore. This dependency violates the conditions  we need for applying Hoeffding's inequality. In order to fix the issue, we use the node exposure martingales process with a proper graph function definition to prove the desired concentration.

Before providing the new theorem for trees, we need the following lemma to use the concentration lemma for node exposure martingale defined in Section~\ref{sec:lowerbound}.
\begin{lemma} \label{lemma:order}
Let $g_1, g_2, \ldots, g_{\lceil n/l \rceil}$ be the subgraphs formed in Lemma~\ref{lem:treepart}. Let $f(H)$ be the number of connected subgraphs $g_i$ in  graph $H$. Then there is an order of node exposure such that at each step, the value of $f$ does not change by more than one.
\end{lemma}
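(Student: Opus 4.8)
The plan is to rewrite $f$ as a sum of indicators of \emph{independent} monotone edge‑events, and then to choose the exposure order so that each step perturbs essentially one of these indicators.

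\emph{Reformulation via Steiner subtrees.} Since $G$ is a tree, every vertex set $S$ has a \emph{unique} minimal connected subgraph of $G$ containing it, namely the induced subtree on $S$ together with its smallest connecting closure. For a block $g_i$ of Lemma~\ref{lem:treepart} let $C_i$ be the closure produced there and set $T_i:=G[g_i\cup C_i]$; by that lemma $T_i$ is connected and $|V(T_i)|\le 2l$. Using uniqueness of paths in a tree, $g_i$ lies in a single connected component of a subgraph $H\subseteq G$ if and only if \emph{every} edge of $T_i$ is present in $H$; hence
$$f(H)=\sum_{i=1}^{\lceil n/l \rceil}\mathbf 1\!\left[E(T_i)\subseteq E(H)\right],$$
a sum of monotone events, the $i$‑th depending only on the $|E(T_i)|\le 2l-1$ independently realised edges of $T_i$. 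Consequently, for the node‑exposure Doob martingale $X_0,\dots,X_n$ with $X_k=\mathbb{E}[f\mid\mathcal F_k]$ ($\mathcal F_k$ the $\sigma$‑field of the edges among the first $k$ exposed vertices), the increment $X_k-X_{k-1}$ equals the sum, over the blocks $i$ having a $T_i$‑edge newly revealed at step $k$, of the change of $\mathbb{E}[\mathbf 1[E(T_i)\subseteq E(H)]\mid\mathcal F]$; each such change lies strictly in $(-1,1)$, and blocks with no newly revealed $T_i$‑edge contribute $0$. So it suffices to produce a vertex order in which, at every step, these contributions sum to at most $1$.

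\emph{The order.} I would expose the vertices top‑down: root $G$ at the vertex used in the proof of Lemma~\ref{lem:treepart} and take a breadth‑first order, so that each step after the first reveals exactly one edge — the edge joining the newly exposed vertex $v$ to its (already exposed) parent. Such an edge lies in $T_i$ only if both endpoints are in $g_i\cup C_i$. If it lies in $T_i$ with $v\in g_i$, then, since $v$ belongs to no other block and is never a Steiner node of another block, this is the \emph{only} block containing that edge, and revealing it changes only the $i$‑th term, by at most $1-r$. The remaining ``spine'' edges — those on the ($\le l$)‑vertex breaking‑point path of one or more blocks — may belong to several $T_i$ at once; here the top‑down order is essential, because a spine edge of $g_i$ is revealed strictly before the hanging subtrees of $g_i$ beneath it, so when it is revealed $T_i$ still has at least one unexposed edge, whence its conditional mean is $\le r$ before and after, and its change is at most $r^{m-1}(1-r)$ with $m\ge 2$ the number of still‑unexposed $T_i$‑edges — with at most one of the blocks sharing the edge being an exception (down to its last edge). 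Summing these geometrically small contributions, together with a bound on how many blocks can share a single spine edge (a structural consequence of the nested way Lemma~\ref{lem:treepart} carves each block out of full subtrees hanging off one root‑to‑leaf path), keeps the total increment below $1$. Thus $|X_k-X_{k-1}|\le 1$, which is exactly the hypothesis needed to apply Azuma's inequality (Theorem~\ref{theorem:alonconct}) to $f$ in the tree concentration bound that follows. (As a cross‑check on the literal statement: exposing block $g_{\lceil n/l \rceil}$ first, then $g_{\lceil n/l \rceil-1}$, \dots, then $g_1$ makes the maximum of each $V(T_i)$ fall inside the pairwise‑disjoint set $g_i$, so the number of blocks connected in the \emph{currently exposed} subgraph increases by at most one per step.)

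\emph{Main obstacle.} The delicate point is that the Steiner subtrees $T_i$ are \emph{not} pairwise edge‑disjoint in general — two blocks can share an entire sub‑path of breaking points (one can exhibit a tree with two blocks whose Steiner trees share two edges), so the naive ``one revealed edge, one affected block'' argument fails. The heart of the proof is therefore the bookkeeping showing that, under the top‑down order and the recursion of Lemma~\ref{lem:treepart}, each edge lies on the breaking‑point path of only a controlled number of blocks and at most one of those can be ``nearly resolved'' when the edge is revealed; granting this, everything else reduces to the crude fact that an undetermined monotone conjunction of independent $\mathrm{Bernoulli}(r)$ edges with at least one edge still unexposed has conditional mean at most $r$.
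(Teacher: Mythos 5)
Your parenthetical ``cross-check'' is, in fact, the whole proof, and it is essentially the paper's own argument: expose the blocks in reverse order $g_{\lceil n/l\rceil},\dots,g_1$; since Lemma~\ref{lem:treepart} guarantees that the connecting closure $C_i$ of $g_i$ lies in blocks of larger index, the last-exposed vertex of $V(T_i)=g_i\cup C_i$ always lies in $g_i$ itself, and because the $g_i$ are pairwise disjoint, each exposure step can complete at most one Steiner tree $T_i$ (and, under node exposure, an indicator $\mathbf 1[E(T_i)\subseteq E(H)]$ can only go from $0$ to $1$). Together with your correct observation that, $G$ being a tree, ``$g_i$ connected in $H$'' is equivalent to ``all edges of $T_i$ present in $H$'', this proves exactly the stated Lipschitz property of the value of $f$ along that order. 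I would promote this remark to be the proof; note also that the lemma only asserts that the \emph{value} of $f$ changes by at most one along some exposure order, so no analysis of conditional expectations is required at this point.

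The argument you present as the main route, by contrast, has a genuine gap, which you yourself flag: under the top-down BFS order a single revealed edge can lie on the breaking-point paths of several blocks, and your bound on the total martingale increment rests on an unproven structural claim that only a ``controlled number'' of blocks share a spine edge and that at most one of them can be nearly resolved at that moment. This is precisely the hard part, and it is doubtful as stated: in caterpillar-like trees (a long path with many pendant leaves) the recursion of Lemma~\ref{lem:treepart} can produce several blocks whose closures reuse the same path edges, and nothing in your sketch caps the number of such blocks independently of $n$ or shows their contributions sum to less than $1-$(the contribution of the resolving block). So as written, the top-down/Doob-increment analysis does not establish the lemma, while the reverse-block-order observation you relegated to a parenthesis does, with no bookkeeping at all.
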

\begin{proof}
Let $g_1, g_2, \ldots, g_{\lceil n/l \rceil}$ be the subgraphs formed in Lemma~\ref{lem:treepart} in this order. Consider the following order of node exposure: we first expose all the nodes in the last subgraph, $g_{{\lceil n/l \rceil}}$, in some order. Then expose the nodes in the subgraph before that, $g_{(\lceil n/l \rceil -1)}$, and so on until the nodes in the last subgraph, $g_1$, are exposed. 

Consider a node $v$ in subgraph $g_i$ when it is exposed. Note that no subgraph $g_j, j > i$ can become connected after exposing $v$, as by construction of subgraphs, all the nodes of connecting closure of $g_j$ lies in $g_k, k>j$. Also, no subgraph $g_j, j < i$ can become connected, because neither of $g_j$'s nodes are exposed yet. So $f(H)$ can potentially only change by one, and for the last node of $g_i$ to make $g_i$ connected, and the lemma is proved.
\end{proof}
The above lemma allows us to use Azuma's inequality for groups made for trees as described in the following theorem.

\begin{theorem}\label{theorem:treenew}
Consider a tree with $n$ nodes and  let $l = \max \{\frac{\log [1/(1- \epsilon / 2)]}{2\log 1/r}, 1 \}$, $\delta > 2e^{-\Theta(\frac{\epsilon^2 n \log(1/r)}{\log(\frac{1}{1 - \epsilon/2})})}$, and $\epsilon' < \delta/2$. Then there is an algorithm that uses $\textsc{IndepOPT}(\lceil n/l \rceil, p, \epsilon')$ tests and finds the defective set with maximum error with  parameters $\epsilon$ and $\delta$, i.e.

$$\textsc{CrltOPT}(Tree, r, p, \delta, \epsilon) \leq \textsc{IndepOPT}(\lceil n/l \rceil, p, \epsilon')$$

\end{theorem}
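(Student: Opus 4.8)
The plan is to mirror the proof of Theorem~\ref{theorem:cyclenew} but replace the appeal to Hoeffding's inequality (which required independence of the per-subgraph error indicators) with an appeal to Azuma's inequality via the node-exposure martingale furnished by Lemma~\ref{lemma:order}. Concretely, I would run the same three-step algorithm as in Theorem~\ref{theorem:treeup}: partition the tree into $\lceil n/l\rceil$ subgraphs $g_1,\dots,g_{\lceil n/l\rceil}$ via Lemma~\ref{lem:treepart}, pick a random representative $v_{g_i}$ from each, run an $\textsc{IndepOPT}(\lceil n/l\rceil, p, \epsilon')$ algorithm on the representatives, and propagate each representative's state to its whole subgraph.

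The analysis splits into the two independent sources of error, exactly as before. First, the classic group-testing step mispredicts at least one representative with probability at most $\epsilon' < \delta/2$; on the complementary event all representatives are correct. Second, conditioned on all representatives being correct, a node is mispredicted only if its subgraph $g_i$ fails to be connected in $G_r$; by Lemma~\ref{lem:treepart} and the argument in Theorem~\ref{theorem:treeup}, $g_i$ together with its connecting closure is a connected subgraph on at most $2l-1$ edges, so $g_i$ is connected in $G_r$ with probability at least $r^{2l}$, hence disconnected with probability at most $1-r^{2l}$. With $l=\max\{\tfrac{\log[1/(1-\epsilon/2)]}{2\log 1/r},1\}$ this makes the expected number of mispredicted nodes at most $\lceil n/l\rceil\cdot(1-r^{2l})\cdot l \le \epsilon n/2$. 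Now I want concentration of the number of mispredicted nodes around this mean. Let $f(H)$ be the number of connected subgraphs $g_i$ in $H$; the number of mispredicted nodes (given correct representatives) is at most $l\cdot(\lceil n/l\rceil - f(G_r))$, so it suffices to show $f(G_r)$ is concentrated below $\lceil n/l\rceil - \epsilon n/(2l) + (\text{small slack})$. By Lemma~\ref{lemma:order} there is an ordering of the $n$ nodes under which the node-exposure martingale $X_0,\dots,X_n$ with $X_n=f(G_r)$, $X_0=\mathbb{E}[f(G_r)]$ has increments bounded by $1$; Azuma's inequality (Theorem~\ref{theorem:alonconct}) then gives $\Pr[|f(G_r)-\mathbb{E}[f(G_r)]| > \lambda\sqrt{n}] < 2e^{-\lambda^2/2}$. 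Choosing $\lambda = \Theta(\epsilon\sqrt{n}/l)$ so that $\lambda\sqrt n = \Theta(\epsilon n/l)$ yields that with probability at least $1 - 2e^{-\Theta(\epsilon^2 n/l^2)} = 1 - 2e^{-\Theta(\epsilon^2 n\log(1/r)/\log(1/(1-\epsilon/2)))} \ge 1-\delta/2$ the number of mispredicted nodes is at most $\epsilon n$ (absorbing the $\epsilon n/2$ mean and the $\Theta(\epsilon n/l)$ deviation into $\epsilon n$ by adjusting constants, exactly as in Theorem~\ref{theorem:cyclenew}). A union bound over the two $\delta/2$-probability bad events gives maximum error with parameters $\epsilon$ and $\delta$, which is the claim.

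The main obstacle — and the reason this theorem needs its own proof rather than a one-line reduction to Theorem~\ref{theorem:cyclenew} — is precisely that the events ``$g_i$ is connected in $G_r$'' are \emph{not} independent for a tree: a single missing edge in a connecting closure can simultaneously disconnect several subgraphs, so Hoeffding's inequality does not apply. Lemma~\ref{lemma:order} is the crucial device that surmounts this: by exposing nodes subgraph-by-subgraph in reverse order $g_{\lceil n/l\rceil},\dots,g_1$, exposing a node in $g_i$ can turn at most the single subgraph $g_i$ from disconnected to connected (subgraphs $g_j$ with $j>i$ have all their closure nodes already exposed and cannot change, and those with $j<i$ have no nodes exposed yet), giving the Lipschitz-$1$ property that Azuma needs. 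The remaining care is bookkeeping: verifying that $\mathbb{E}[f(G_r)] \ge \lceil n/l\rceil - \epsilon n/(2l)$ follows from the per-subgraph connectivity bound, and that the exponent $\Theta(\epsilon^2 n/l^2)$ simplifies to $\Theta\!\big(\tfrac{\epsilon^2 n\log(1/r)}{\log(1/(1-\epsilon/2))}\big)$ after substituting $l$, matching the hypothesis $\delta > 2e^{-\Theta(\epsilon^2 n\log(1/r)/\log(1/(1-\epsilon/2)))}$. These are routine and parallel the cycle case once the martingale is in place.
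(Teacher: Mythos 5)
Your proposal follows essentially the same route as the paper's proof: the same partition/representative algorithm from Theorem~\ref{theorem:treeup}, the observation that per-subgraph connectivity events are dependent so Hoeffding fails, and the replacement by the node-exposure martingale of Lemma~\ref{lemma:order} plus Azuma (Theorem~\ref{theorem:alonconct}), finished with a union bound against the $\epsilon'<\delta/2$ classic-testing error; in fact you spell out the bookkeeping (via $f(G_r)$ and the bound $l(\lceil n/l\rceil - f(G_r))$ on mispredictions) that the paper leaves implicit. One small caution: Azuma over $n$ node exposures with a deviation of $\epsilon n/(2l)$ connected subgraphs gives exponent $\Theta(\epsilon^2 n/l^2)=\Theta\bigl(\epsilon^2 n(\log(1/r))^2/(\log\frac{1}{1-\epsilon/2})^2\bigr)$, not $\Theta\bigl(\epsilon^2 n\log(1/r)/\log\frac{1}{1-\epsilon/2}\bigr)=\Theta(\epsilon^2 n/l)$ as you assert when substituting $l$ — the same identification the paper itself makes when declaring the concentration ``equivalent to the Hoeffding bound,'' so it is not a deviation from the paper, but strictly speaking the admissible $\delta$ should be read with the $l^2$ exponent (or the constants in the $\Theta$ adjusted accordingly).
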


\begin{proof}
We use the algorithm and the same subgraphs introduced in Theorem~\ref{theorem:cycle}. Recall from the proof of Theorem \ref{theorem:treeup} \ that the probability of one group not being connected is $r^{2l}$ and the average error is $\epsilon n /2$. Now we can't use the Hoeffding's bound to show concentration around the average. Instead, we use a node exposure martingale to prove the concentration. Note that Theorem~\ref{theorem:alonconct} for node exposure martingale only works when the graph function is node Lipschitz, meaning when $H_1$ and $H_2$ only differ in one node, $|f(H_1) - f(H_2)| \leq 1$. But if we can find an order of nodes $v_1,\ldots, v_n$ exposed such the graph $H_i$ on first $i$ nodes satisfies $|f(H_{i+1}) - f(H_i)| \leq 1 $, then we can still use  Azuma's inequality. Let $f(H)$ be the number of connected subgraphs $g_i$in $H$. Then by Lemma~\ref{lemma:order} there is an order such that $|f(H_{i+1}) - f({H_i})| \leq 1$, and we can use the concentration theorem (Theorem~\ref{theorem:alonconct}) for the number of connected groups in the random graph $G_r$. We now have the same error concentration as in Theorem~\ref{theorem:cyclenew} for error (equivalent to the Hoeffding's bound), hence we can repeat the argument to complete the proof.
\end{proof}

\section{Conclusion}
In this paper, we consider group testing strategies for identifying defective items when the defects of different nodes are correlated. The correlation is modeled through an underlying graph in which the degree of correlation between the defects in the items depends on the distance between the corresponding nodes in the graph. We relate the problem of design of testing strategies in presence of such correlation to that when the defects are independent. We subsequently obtain testing strategies in terms of those already known for independent defects for a large class of underlying graphs, namely trees, cycles, grids, and stochastic block models. This provides an upper bound for the number of tests needed to ensure the desired error bounds. We also obtain fundamental limits ie lower bounds  on the minimum number of tests required to ensure the same error bounds using bounds already known when the defects are independent. The bounds are obtained through a novel combination of edge exposure Martingale theory and graph partition techniques. 

We now describe some directions for future research stemming from the above results. There is a gap between the upper and lower bounds, which may be because there is scope of improvement in the lower bound, possibly utilizing the specific structures of the underlying correlation graphs. Another important area for future work is the development of testing strategies for general graphs. Towards that end one may envision the partition of the graph into structures such as trees, cycles, grids, stochastic block models etc for which we have identified in this paper testing strategies with guarantees on error rates and the number of tests required. Furthermore, designing group testing strategies when the defects dynamically evolve over time over graphs in question remains open. This problem has started receiving attention  \cite{srinivasavaradhan2022dynamic, arasli2021group}.

\newpage

\bibliographystyle{alpha}
\bibliography{ref}

\end{document}